\documentclass{article}
\usepackage{graphicx} % Required for inserting images
\usepackage{xcolor} %colored characters
\usepackage{CJKutf8} 
\usepackage{float}
\usepackage{amssymb} 
\usepackage{amsmath} 
\usepackage{amsthm}
\usepackage{bm}
\usepackage{geometry} 
\usepackage{hyperref} 
\usepackage{url} 
\usepackage{verbatim} 
\usepackage{authblk} % multiple authors with different institute
\usepackage{enumitem} % 导言区引入
\usepackage{caption}
\usepackage{subcaption}
\newcommand{\subfigref}[2]{\ref{#1}~(\subref{#2})}
 % 子图编号带括号
   % 子表编号带括号
\usepackage{multirow}
\usepackage[round]{natbib}
\let\cite\citep

\setcounter{secnumdepth}{4}

\DeclareMathOperator{\Var}{Var}
\DeclareMathOperator{\Cov}{Cov}

\newtheorem{theorem}{Theorem}

\newtheorem{lemma}[theorem]{Lemma}

\hypersetup{colorlinks=true,linkcolor=blue,filecolor=blue,urlcolor=blue,citecolor=cyan}
\setlength{\parskip}{10pt} % 调整段落之间的间距
\geometry{left=3.18cm,right=3.18cm,top=2.54cm,bottom=2.54cm}

\title{Addressing zero-inflated and mis-measured functional predictors in scalar-on-function regression model} 
\author[1]{Heyang Ji}
\author[2]{Lan Xue}
\author[3]{Ufuk Beyaztas}
\author[1]{Roger S. Zoh}
\author[4]{Jeff Goldsmith}
\author[5]{Mark Benden}
\author[1]{Carmen D. Tekwe}
\affil[1]{Department of Epidemiology and Biostatistics, Indiana University School of Public Health, Bloomington, Indiana, USA}
\affil[2]{Department of Statistics, Oregon State University College of Science, Corvallis, Oregon, USA}
\affil[3]{Department of Statistics, Marmara University Faculty of Sciences
, Istanbul, Turkiye}
\affil[4]{Department of Biostatistics, Columbia University Mailman School of Public Health, New York, New York, USA}
\affil[5]{Department of Environmental \& Occupational Health, Texas A\&M University School of Public Health, College Station, Texas, USA}

\date{}

\begin{document}
\begin{CJK}{UTF8}{gbsn} %用于输入中文字符
\setlength{\parindent}{0pt} % 放在段首，之后的所有段落都将取消首行缩进
% \noindent 只取消该段的首行缩进

\maketitle

% \textcolor{blue}{Some contents of this article are polished by artificial intelligence}

\begin{abstract}
Wearable devices are often used in clinical and epidemiological studies to monitor physical activity behavior and its influence on health outcomes. These devices are worn over multiple days to record activity patterns, such as step counts recorded at the minute level, resulting in multi-level, longitudinal, high-dimensional, or functional data. When monitoring patterns of step counts over multiple days, devices may record excess zeros during periods of sedentary behavior or non-wear times. Additionally, it has been demonstrated that the accuracy of wearable devices in monitoring true physical activity patterns depends on the intensity of the activities and wear times. While work on adjusting for biases due to measurement errors in functional data is a growing field, relatively less work has been done to study the occurrence of excess zeros along with measurement errors and their combined influence on estimation and inference in multi-level scalar-on-function regression models. We propose semi-continuous modeling approaches to adjust for biases due to zero inflation and measurement errors in scalar-on-function regression models. We provide theoretical justifications for our proposed methods and, through extensive simulations, we demonstrated their finite sample properties. Finally, the developed methods are applied to a school-based intervention study examining the association between school day physical activity with age- and sex-adjusted body mass index among elementary school-aged children.
\end{abstract}

\section{Introduction}\label{sec:intro}

With the rapid advancement of digital health technologies \cite{friend2023wearable}, wearable devices have increasingly been used for continuous monitoring and collection of biobehavioral data. Examples include physical activity \cite{teixeira2021wearable, coughlin2016use, strain2020wearable, wang2022effectiveness, phillips2018wearable}, blood glucose levels \cite{rodbard2016continuous, klonoff2005continuous, mastrototaro2000minimed, klonoff2017continuous, juvenile2008continuous}, sleep \cite{de2019wearable,muaremi2013towards, scott2020systematic, tobin2021challenges, sargent2018well}, and ambulatory blood pressure monitoring \cite{pickering2006ambulatory, o2018ambulatory, turner2015ambulatory, huang2021ambulatory, pickering1996recommendations}. These measures are subsequently employed in epidemiological studies to evaluate their impact on health outcomes \cite{mcdonough2021health, phillips2018wearable, huang2021ambulatory}. Physical activity data gathered from Actigraph devices is an illustrative example\cite{troiano2023evolution}.

The ActiGraph GTX3+, utilized in the National Health and Nutrition Examination Survey (NHANES), is a wearable device that captures participants' triaxial data on device acceleration at sampling frequencies from 30Hz to 100Hz \cite{NHANES1}, with activity summaries like step counts reported in epochs with a duration of 60 seconds \cite{ActiGraph-GT3X}. Wearable devices typically use shorter epoch durations to improve the accuracy of estimating intervals of intense physical activity \cite{ayabe2013epoch}. Shortened epochs are beneficial for real-time tracking, while extended epochs may distort brief yet critical fluctuations. For instance, intense exercise conducted over a few minutes, when recorded with a protracted epoch (such as one hour), might lead to only a slight increase in the average heart rate, failing to accurately depict the actual exertion period. These continuously gathered data, collected multiple times per hour over several days, produce multi-level, high-dimensional or functional datasets with intricate error configurations and patterns and functional data analysis is commonly used to analyze such datasets. \cite{ramsay2009introduction}.

Functional data analysis is a statistical approach frequently used to analyze high-dimensional data that appear as functions or images \cite{gertheiss2024functional,wang2016functional}. 
When a variable is densely sampled across time or space within a certain domain, it can be treated as a function and is usually assumed to be square-integrable \cite{bosq2000linear,ramsay2009introduction}. 
Conventionally, densely collected data are designated as functional data, in contrast with sparsely sampled data, which may be described as longitudinal data depending on the context and modeling strategy \cite{wang2016functional}. As noted, continuous monitoring yields high-dimensional data where one or more variables are measured at numerous time points. 
This leads to correlations among measurements from the same subject at different times, often with a complex structure and a resulting joint distribution that is challenging to model. Capturing such complexity typically requires statistical models with a large number of parameters, i.e., high degrees of freedom. One effective strategy to address this issue is to represent the data as realizations of functions, leveraging the flexibility of infinite-dimensional function spaces such as the $L^2$ space \cite{ferraty2006nonparametric}. Functional data analysis provides a principled framework for analyzing such data, where high-dimensional observations are treated as smooth functions over time or space. In practice, these functions are approximated by linear combinations of a finite set of basis functions \cite{morris2015functional, ramsay2005functional}.
These functional representations can then be incorporated into regression frameworks, including generalized functional linear models, to explore associations between functional predictors and health outcomes.

Generalized linear regression models with scalar responses and predictors have been extended to generalized functional linear regression models, where predictors can be either functional or scalar, and responses may similarly be scalar or functional \cite{muller2005generalized, reiss2017methods}. In these models, truncated complete basis such as the Fourier basis, functional principal components from the $L^2$ space, or linearly independent functions like B-splines are used to approximate the functional components 
\cite{reiss2017methods, crainiceanu2024functional}. In classical linear regression, the effect of each scalar predictor is represented by a scalar coefficient; in contrast, functional regression represents the effect of a functional predictor by a functional coefficient. To incorporate functional predictor into a linear model, the inner product in the $L^2$ space between the parameter function and the predictor function is used in place of the product of a scalar coefficient and a scalar predictor.

Monitoring behavioral patterns such as physical activity can lead to observations that are susceptible to measurement error \cite{rothney2008validity, jeffries2014physical, kozey2010accelerometer}. In functional data, observed values are often contaminated by random noise (i.e., measurement error) 
\cite{wang2016functional}. Measurement error contamination in functional data has been widely recognized in the literature \cite{yao2005functional, hall2006properties}
\cite{bunn2018current, feehan2018accuracy, atkinson1998statistical, bassett2012calibration, ferrari2007role}. 
Multiple sources contribute to measurement error in physical activity data monitored by accelerometers. By definition, physical activity is bodily movement produced by skeletal muscles that results in energy expenditure \cite{caspersen1985physical, world2019global}. Thus, data collected by wearable devices represent an estimation of physical activity based on detected movements. Various factors including monitor placement, individual characteristics (e.g., body weight), suboptimal dynamic range and sampling frequency, external disturbances (e.g., vehicular accelerations), activity type (e.g., walking, cycling, weight lifting), and differences in proprietary algorithms can affect the data output and introduce measurement error \cite{plasqui2013daily, plasqui2007physical, butte2012assessing, robertson2011utility, corder2007accelerometers, rowlands2004validation}.

Measurement errors may occur due to unobserved or unrecorded values or inaccuracies in measurement. Typically, in many studies, these errors are treated as independent both between and within individuals, often incorporated as an additive error term in regression models \cite{carroll1995measurement, fuller2009measurement}. Even when these errors have zero expectations, they can still introduce bias into parameter estimates. Naive averaging or smoothing approaches often fail to fully correct such bias and introduce attenuation in the estimated the effect of the error-prone variable. Therefore, more sophisticated methods are required to appropriately account for measurement error. Various functional data analysis methods have been developed to address measurement error \cite{crambes2009smoothing, zhang2023partially, tekwe2022estimation}, each tailored to specific assumptions about the error characteristics. For instance, some methods presuppose the errors are additive, independent of the true value, and exhibit short-range temporal correlation \cite{cardot2013confidence}. In certain scenarios, bias arising from measurement error is corrected by conceptualizing the error as an additive stationary Gaussian process with a zero mean, akin to white noise, on top of the genuine signal \cite{hall2006on}. A variety of methods have been developed to address measurement error in functional data, including smoothing-based techniques \cite{ullah2013applications, cardot2013confidence, florens2015instrumental} and instrumental variable approaches \cite{tekwe2022estimation, jadhav2022function, zoh2024bayesian}, each relying on different assumptions about the nature of the error. However, not all measurement errors can be accurately characterized as white noise or similar simple zero-mean additive noise.
Functional data may exhibit even more complex error structures. For example, measurement error problem can mix with zero inflation problem where the measurement errors are not additive to the true values, instead, the observed values are generated via a zero-inflated surrogate or observed measures. Zero inflation indicates an unexpectedly high frequency of zero values in the dataset \cite{lambert1992zero, ridout1998models, tu2006zero}. In count data, this means that zeros occur more frequently than predicted by standard models (e.g., Poisson or negative binomial) \cite{tait2012modelling,campbell2021consequences,shao2023zero}. Continuous data can also be zero-inflated, often leading to violations of model assumptions and computational difficulties \cite{tu2006zero}. In some cases, zero values correspond to missing data, while in others they represent true zeros. For example, when wearable devices record zero activity during periods of inactivity or sedentary behavior \cite{jeffries2014physical, troiano2008physical, matthews2008amount}. 
To address excess zeros—particularly in scalar or non-functional data—researchers commonly employ mixture or semicontinuous models \cite{lee2010analyze, wang2024generalized, li2005longitudinal, mills2013adjusting, tooze2002analysis}.  

While numerous methods have been developed to address bias from measurement error in zero-inflated scalar data \cite{li2005longitudinal, mills2013adjusting, tooze2002analysis, wang2024generalized}, much less work has been done in the context of functional data. In particular, methods that simultaneously account for both zero inflation and measurement error in functional predictors remain scarce.
In this manuscript, we propose a mixture model to handle zero-inflated functional data with measurement error and develop parameter estimation methods that correct for the bias introduced by such errors. We demonstrate the robustness of our approach through simulation studies. The proposed method can extract meaningful insights from complex health and behavioral datasets. We apply our methods to data collected from children in a Texas school district to assess the relationships that school-based physical activity have with age- and sex- adjusted BMI. This dataset serves as a valuable real-world case study for evaluating how our method can uncover hidden patterns, assess intervention outcomes, and improve predictive modeling in health-related educational research. 

This manuscript is structured as follows. Section~\ref{sec:method} introduces a multi-level generalized functional scalar-on-function regression model capable of handling zero-inflated functional predictors with measurement errors. Here, we detail the statistical model, the assumptions, estimation techniques, and the theoretical aspects of our approach. Section~\ref{sec:simulation} describes a simulation study designed to assess our method's effectiveness compared to other methods. Section~\ref{sec:application} demonstrates the application of our approach using data from obtained from elementary school-aged children from a Texas school district, examining the link between physical activity and body mass index (BMI) changes, adjusting for relevant demographics. Finally, section~\ref{sec:C&D} highlights our key findings, while the discussion explores potential improvements and future directions for extending the methodology to address similar challenges.

\section{Generalized scalar-on-function regression model with zero inflated functional predictors prone to measurement errors}
\label{sec:method}

% \subsection{The model}

We now define our proposed model along with the relevant assumptions. 
Let $\{Y_i, X_i(t), Z_i\}$ denote the data tuple for individual $i$ ($i=1,\dots,n$), where, $Y_i$ is a scalar-valued response, $X_i(t)\in L^2(\Omega)$ is a functional predictor defined for $t\in\Omega$, and  $Z_i\in\mathbb{R}^p$ is a vector of scalar predictors, with each element representing a distinct predictor. 
The functional variable $X_i(t)$ is latent and is observed indirectly via replicate measurements $W_{ij}(t)$ for $j=1,\dots,J_i$. 
Here, \( Y_i \) represents a health outcome, \( W_{ij}(t) \) represents the observed physical activity variable, \( X_i(t) \) is the long-term expectation of \( W_{ij}(t) \), and \( Z_i \) includes additional subject-level covariates such as age and sex.
We propose the following models that frame the relationship of these varaibles:
\begin{align}
    &Y_i \sim \mathrm{EF}(\mu_i,\phi), \nonumber \\
    &\mu_i = \mathbb{E}\{Y_i\mid X_i(t),Z_i\} = g^{-1}\Bigg\{\int_\Omega \beta(t)X_i(t)dt + (1,Z_i^\top)\gamma\Bigg\}, \label{eq:s-f model}\\
    &W_{ij}(t) = I\Bigl[G_{ij}(t) < \Phi^{-1}\{p_i(t)\}\Bigr]\cdot\Biggl\{\frac{X_i(t)}{p_i(t)} + U_{ij}(t)\Biggr\}, \label{eq:ME model}
\end{align}
where $\mu_i$ denotes the conditional mean of $Y_i$ given $X_i(t)$ and $Z_i$,
% where $\mathrm{EF}(\mu,\phi)$ denotes a member of the exponential family with mean $\mu$ and dispersion parameter $\phi$, 
$g(\cdot)$ is a strictly monotonic link function, $p_i(t)$ satisfies $p_i(t)\in(0,1)$ for all $i$, 
$Y_i$ follows a distribution from the exponential family, with density function of the form
$f(y_i; \theta_i, \phi) = \exp\left\{ \frac{y_i \theta_i - b(\theta_i)}{a(\phi)} + c(y_i, \phi) \right\}$, $\theta_i$ is the canonical parameter, $\mu_i = b^\top (\theta_i)$ is the mean, $\phi$ is the dispersion parameter, and $a(\cdot)$, $b(\cdot)$, and $c(\cdot,\cdot)$ are known functions. We use the notation $Y_i \sim \mathrm{EF}(\mu_i, \phi)$ as a shorthand for this family.
The $p_i(t)$ serves as a activation probability, representing the probability of $W_{ij}(t)\neq 0$, and $\beta(t)\in L^2(\Omega)$ is the coefficient function associated with $X_i(t)$.
The scaling of $X_i(t)$ by the activation probability $p_i(t)$ in {Model~\eqref{eq:ME model}} is a key feature of our model, as it ensures that the observed proxy $W_{ij}(t)$ is an unbiased measurement of the true latent function $X_i(t)$ in expectation.
For the outcome distribution, when $Y_i$ is continuous we assume a Gaussian distribution as $\mathrm{EF}$ and adopt the identity link, i.e., $g(\mu)=\mu$. For binary outcomes, we assume a Bernoulli distribution as $\mathrm{EF}$ and use the logit link, i.e., $g(\mu)=\ln\left(\frac{\mu}{1-\mu}\right)$.
In contrast, $Y_i$ and $Z_i$ are assumed to be measured without error. 
Finally, $\{G_{ij}(t), t\in\Omega\}$ and $\{U_{ij}(t), t\in\Omega\}$ are assumed to be latent, independent Gaussian processes.

\subsection{Model assumptions}

Our objective is to estimate the parameters $\beta(t)$ and $\gamma$ in Model~\eqref{eq:s-f model}. 
Prior to estimation, to ensure that the proposed model adequately reflects the underlying data structure and supports valid statistical inference, we impose a set of assumptions. These assumptions provide realistic distributional conditions for the observed variables and are intended to facilitate the identifiability and estimability of model parameters.
\begin{enumerate}[label=\textbf{(A\arabic*)}, align=left]
    \item \label{As.G} For $G_{ij}(t)$,  we have
        $\mathbb{E}\{G_{ij}(t)\} = 0$ and $\Var\{G_{ij}(t)\} = 1$, for all $t\in\Omega$ and $ i,j$.
    $G_{ij}(t)$ are independent across individual index $i$ but can be correlated across replication index $j$. 
    $G_{ij}(t)$ have unspecified correlation structure across $t$. 
    $G_{ij}(t)$ is independent to $X_i(t)$ and $U_{ij}(t)$. 
    \item \label{As.4} $\mathbb{E}\{U_{ij}(t)\} = 0$  for all $t\in\Omega$ and $\Var\{U_{ij}(t)\} = \sigma_{u}^2(t)$.
    $U_{ij}(t)$ have unspecified correlation structure across $t$. 
    \item \label{As.6} The predictor process $\{X_i(t), t\in\Omega\}$ is Gaussian with
        $\mathbb{E}\{X_i(t)\} = \mu_x(t)$ and $\Var\{X_i(t)\} = \sigma_x^2(t)$.
    $X_i(t)$ have unspecified correlation structure across $t$. 
    \item For $p_i(t)$, we assume it is determined by the following model: 
    \begin{equation}
        \text{logit}\{p_i(t)\} = (1, Z_i^\top) \theta(t). \label{eq:Z2p}
    \end{equation} \label{As.7}
        % where $p_i(t)$ denotes the probability function for individual $i$ at time $t$.
\end{enumerate}

Assumption~\ref{As.G} ensures that the binary term $I\Bigl[G_{ij}(t) < \Phi^{-1}\{p_i(t)\}\Bigr]$ follows distribution $\text{Be}\{p_i(t)\}$ for each $i$ and $t$, where $\textit{Be}$ stands for the Bernoulli distribution. The term $p_i(t)$ represents the probability of activation for individual $i$ at time $t$. Assumption~\ref{As.G} represents a relaxation of the stronger condition that $\{G_{ij}(t), t \in \Omega\}$ are independent across individuals and measurement occasions. While our methods are designed to work when this independence condition is met, they are also applicable to certain exceptions to this independence assumption.
Assumptions~\ref{As.4}-\ref{As.6} establish that for non-zero values, $W_{ij}(t)$ should have an expected value equating to $X_i(t)/p_i(t)$. Consequently, according to assumption~\ref{As.G}, $\mathbb{E}[W_{ij}(t)] = X_i(t)$. We impose assumption~\ref{As.7} due to noticeable variations in the proportion of zero readings in physical activity data among different groups (e.g., varying age groups) \cite{nobre2017multinomial}.
These observations suggest that the probability of observing a nonzero activity level, $\Pr\{W_{ij}(t) \ne 0\} = p_i(t)$, may not be constant across individuals. Instead, it may depend on subject-specific characteristics such as age, sex, or socioeconomic status. Therefore, we model the log-odds of a nonzero observation using a time-varying coefficient model, where scalar predictors $Z_i$ are allowed to influence $p_i(t)$ through the function $\theta(t)$.
% We believe that the scalar predictors can affect $\Pr\{W_{ij}(t)=0\}$. 

\subsection{Predicting the unobserved predictor}\label{subsec:predict X}
% \subsection{Estimation}

With the model and necessary assumptions in place, we now proceed to the estimation of the model parameters. The goal is to develop an estimation procedure that leverages the structure of the model while respecting the imposed assumptions.
In an ideal scenario where all predictors in {Model~\eqref{eq:s-f model}} are fully observed, 
one could directly extend methods for solving scalar-on-scalar regression model to scalar-on-function regression model using basis expansion. 
However, the functional predictor, $X_i(t)$, is latent and is only observed indirectly through its proxy, $W_{ij}(t)$. This leads to a model with multiple high-dimensional latent variables, resulting in a likelihood function with complex integrals and treating $X_i(t)$ as a parameter making the maximum likelihood estimation computationally intensive. A practical alternative is to substitute $X_i(t)$ with a predictor $\widehat{X}_i(t)$ constructed from the observed measures or proxies, $W_{ij}(t)$. One important challenge is that measurements from the same individual at different values of $t$ may be correlated. 
Jointly predicting $X_i(t)$ across all time points would require specifying the full joint distribution of $X_i(t)$ and $W_{ij}(t)$, which involves assumptions about time-based correlation structures. If these assumptions do not accurately reflect the true process, the prediction may be adversely affected. On the other hand, predicting $X_i(t)$ independently at each time point avoids the need for these correlation structure assumptions and can significantly reduce the computational load.
A naive approach would be to use the sample mean, $\overline{W}_{i\cdot}(t)=\frac{1}{J_i}\sum_{j=1}^{J_i}W_{ij}(t)$, 
exploiting the property $\mathbb{E}\{W_{ij}(t)\}=X_i(t)$. 
However, when plugging this estimator as the substitution of $X_i(t)$ in to the Model~\eqref{eq:s-f model}, it does not fully leverage the information contained in $W_{ij}(t)$ and may result in both bias and inefficiency. 
Alternatively, one might use a single observation (e.g., $\widehat{X}_i(t)=W_{i1}(t)$, the measurement from the first day), but this approach is similarly suboptimal.
In this article, to predict the latent predictor $X_i(t)$, we propose two distinct but related pointwise strategies. The first leverages a mixed-effects model representation of the non-zero data, while the second employs a regression calibration framework to utilize the full conditional distribution of $X_i(t)$. They are designed to adjust for the biases associated with estimation. While both are developed under the same model assumptions and are designed for the same application context, they differ in the technical approach used for recovering $X_i(t)$.

\subsubsection{Pointwise two-stage mixed effect model approach}\label{MM approach}
% \textbf{Pointwise two-stage mixed effect model approach}\phantomsection\label{MM approach}

% The first approach uses a mixed-effects model to characterize the relationship between the true predictor $X$ and its error-prone proxy $W$. 
Mixed effect models have been previously demonstrated as an approach for adjusting for measurement error \cite{luan2023scalable, huo2025comparison, xie2018generalized}. 
For two-staged-based methods, such as regression calibration \cite{spiegelman1997regression, hardin2003regression, wang1997regression}, $\mathbb{E}\{X_i(t)\mid W_{ij}(t), j = 1,\dots J_i\}$ are often used as a prediction for $X_i(t)$ under certain assumptions and conditions. 
% From our statistical model, we observe that $W_{ij}(t)$ is the product of a binary term and a continuous term. Since these two terms are independent and only the continuous term is related to $X_i(t)$, once the activation probability $p_i(t)$ is determined, it is sensible to infer $X_i(t)$ using only the non-zero values of $W_{ij}(t)$. Here, we propose a pointwise mixed effect model–based method to obtain this substitution via the idea of conditional expectation. In this estimation method, for each time point at which $W_{ij}(t)$ is measured, we fit a mixed effect model with individuals as random effects, and we construct a substitution for $X_i(t)$ based on the estimates and predictions of the fixed and random effects.
Here, we propose a pointwise mixed effect model–based method to obtain the prediction of $X_i(t)$ from its noisy and zero-inflated proxy $W_{ij}(t)$ via the idea of conditional expectation. While our method focuses on correcting for measurement error in zero-inflated functional predictors, it is partially inspired by the modular inferential framework proposed by \citet{cui2022fast}, which fits pointwise mixed models followed by functional smoothing and joint inference.
We will explain why this method works and demonstrate that it employs the conditional expectation of $X$ given $W$ as the substitution. 

From our statistical model we deduce that, for a fixed time point \(t\), the observed variable \(W_{ij}(t)\) is distributed as a mixture of a Gaussian distribution and a degenerate distribution:
\begin{equation*}
    W_{ij}(t) \sim 
    \begin{cases}
        \mathcal{N}\Bigl\{\dfrac{X_i(t)}{p_i(t)},\,\sigma_u^2(t)\Bigr\} & \text{with probability } p_i(t),\\[1mm]
        0 & \text{with probability } 1-p_i(t).
    \end{cases}
\end{equation*}
Let $W_{ij}^*(t)$ denote the non-zero observations of $W_{ij}(t)$. Based on our model assumptions, we can decompose $p_i(t)\,W_{ij}^*(t)$ as
\begin{equation}
    p_i(t)\,W_{ij}^*(t) = \mu_x(t) + \Bigl\{X_i(t)-\mu_x(t)\Bigr\} + p_i(t)\,U_{ij}(t),\label{eq:ME-decom},
\end{equation}
where for any fixed $t$, $\mu_x(t)$ is a constant, and both $\{X_i(t)-\mu_x(t)\}$ and $p_i(t)\,U_{ij}(t)$ are Gaussian with zero mean and are independent. Consequently, equation~\eqref{eq:ME-decom} corresponds to the following mixed effect model:
\begin{equation}
    \mathcal{W}_{ij} = b_0 + b_i + \varepsilon_{ij},\label{eq:mixed M}
\end{equation}
subject to the distributional assumptions
\[
b_i \sim \mathcal{N}(0,d^2), \quad \Cov(b_i,b_k) = 0 \quad \text{for all }\, i\neq k,
\]
\[
\varepsilon_{ij} \sim \mathcal{N}(0,\sigma_i^2), \quad \Cov(\varepsilon_{ij},\varepsilon_{lm}) = 0 \quad \text{for } (i,j)\neq (l,m),
\]
\[
\Cov(b_i,\varepsilon_{lm}) = 0 \quad \text{for all }\, i, l, m.
\]
Here \(\mathcal{W}_{ij}=p_i(t)\,W_{ij}^{*}(t)\) is the response;  
the fixed intercept \(b_0\) represents \(\mu_x(t)\);  
the individual-specific random effect \(b_i\) represents \(X_i(t)-\mu_x(t)\); and  
the error term \(\varepsilon_{ij}\) represents \(p_i(t)\,U_{ij}(t)\).

Assuming the $p_i(t)$ is known, we fit the {Model~\eqref{eq:mixed M}} separately for each $t$. Let \(\widehat b_0\) denote the estimated $b_0$ and \(\widehat b_i\) denote the prediction of $b_i$.  
The resulting predictor of \(X_i(t)\) is then
\[
\widehat{X}_i(t) = \widehat{b}_0 + \widehat{b}_i.
\]
For $p_i(t)$, we can use its estimate to replace it. And we will propose the method to estimate $p_i(t)$. 

For mixed effect models, the best prediction of the random effect in terms of mean squared error is given by the conditional expectation of the random effect given the observed data \cite{chow2018advanced}. 
% \textcolor{blue}{(For a mixed effect model $y= X\beta + Z\xi + e$, where $\beta$ is a fixed effect and $\xi$ is a random effect, the best prediction for $\xi$ in terms of mean squared error is the function $g(y,X,Z)$ that minimizes $\mathbb{E}\{(\xi-g(y,X,Z))^2\}$, denoted as $g_0(y,X,Z)$; indeed, $g_0(y,X,Z)=\mathbb{E}(\xi|y,X,Z)$.)} 
Therefore, this prediction represents the conditional expectation of $X(t)$ given the non-zero values of $W(t)$.

Note that the Model~\eqref{eq:mixed M} does not satisfy the Gauss-Markov conditions due to heteroskedasticity, thus the variance of the error term $\varepsilon_{ij}$ is not constant across observations and each individual may have a different variance. However, robust inference for fixed effects is well-established, and this heteroskedasticity does not compromise the quality of the estimates \cite{chow2018advanced}. Furthermore, simulation studies have demonstrated that the estimation of fixed effects and the prediction of random effects in mixed effect models are robust to violations of the underlying distributional assumptions \cite{mcculloch2011prediction, schielzeth2020robustness}. Consequently, by fitting the Model~\eqref{eq:mixed M} using estimation methods developed under the Gauss-Markov conditions, we can generate reliable predictions for $X_i(t)$.

\textbf{Estimation of $p_i(t)$.} \phantomsection\label{estimate p}
The $\mathcal{W}_{ij}$ in {Model~\eqref{eq:mixed M}} is defined as $p_i(t) W_{ij}^*(t)$ in {equation~\eqref{eq:ME-decom}}. 
Recall that $p_i(t)$ denotes the activation probability at time $t$ for subject $i$, representing the probability that the true latent process $X_i(t)$ is nonzero. Since $p_i(t)$ is latent, we replace $p_i(t)$ with its estimate $\widehat{p}_i(t)$.
Based on {Model~\eqref{eq:ME model}} and {Model~\eqref{eq:Z2p}}, we have
$\text{logit}[\Pr\{W_{ij}(t) \neq 0\}] = (1, Z_i^\top) \theta(t)$.
Therefore, we obtain $\widehat{p}_i(t)$ by fitting the logistic regression model
\begin{equation}
    \text{logit}\Bigl(\Pr [I\{W_{ij}(t) \neq 0]\}\Bigr) = (1, Z_i^\top) \theta(t), \label{eq:model fit p}
\end{equation}
for each $t$ separately.

In the assumption~\ref{As.G}, we assume that $G_{ij}(t)$ can be correlated across $j$. If we assume $G_{ij}(t)$ are independent across $j$ instead, we can use maximum likelihood methods to fit the model. Otherwise, we fit the logistic regression model by solving the generalized estimating equations (GEE).
% \[
% \sum_{i=1}^n \frac{\partial\nu_i}{\partial\theta(t)}\, V_i\left[I\{W_i(t)\neq0\} - \nu_i\right] = 0,
% \]
% where $\nu_i = \text{logistic}\{(1, Z_i^\top)\theta(t)\}$ and $V_i$ is the working correlation structure for individual $i$.
Alternatively, we can use $\widehat{p}_i(t) = \frac{\sum_{j} I\{W_{ij}(t) \neq 0\}}{J_i}$ as the estimate of $p_i(t)$.
This approach is mathematically equivalent to excluding the predictor $Z_i$ and keeping only the intercept in the linear model described by {equation~\eqref{eq:model fit p}}. The advantage of the proportional approach is that it is simpler, distribution-free and remains valid even without assumption~\ref{As.7}, but the trade off is that it may be less efficient if covariates in $Z_i$ truly predict the activation probability.

\subsubsection{Pointwise regression calibration approach}\label{RC approach}
% \textbf{Pointwise regression calibration approach}\phantomsection\label{RC approach}

The second approach adopts a regression calibration \cite{fuller2009measurement} framework, which is a commonly used method to adjust for measurement error in predictors when fitting regression models. The key idea is to replace the unobserved true predictor with its conditional expectation given the observed, error-prone measurement. Specifically, if \( W \) denotes the observed predictor contaminated by error and \( X \) represents the true predictor, regression calibration approximates \( X \) by \( \mathbb{E}(X \mid W) \) and uses this estimate in the regression analysis. This approach effectively reduces the bias introduced by measurement error, particularly in linear models. In nonlinear models, such as logistic regression, regression calibration serves as an approximate correction and may still result in minor biases if the magnitude of measurement error is substantial \cite{carroll2006measurement}.

We propose a regression calibration estimation method that directly employs the conditional expectation of \(X_i(t)\) given \(W_{ij}(t)\) as the substitution for \(X_i(t)\). 
In our previously proposed mixed-effects model–based approach, we estimate \(p_i(t)\) based on whether \(W_{ij}(t) = 0\), and use only the non-zero observations of \(W_{ij}(t)\) to fit Model~\eqref{eq:mixed M} for predicting \(\widehat{X}_i(t)\). 
Thus, the information from zero and non-zero values of \(W_{ij}(t)\) is leveraged separately. 
In contrast, the regression calibration approach utilizes both zero and non-zero observations of \(W_{ij}(t)\) jointly by modeling the full conditional distribution of \(X_i(t)\) given \(W_{ij}(t)\), offering a more unified and efficient strategy for measurement error correction. 
\citet{wang2024generalized} proposed a regression calibration method to address the issue of error-prone zero-inflated predictor in generalized linear model. 
This work extend their approach to functional data context and adapts it to accommodate the proposed relationship between the latent and observed variables. According to our assumptions, for each $t$, the marginal distribution of \(X_i(t)\) is 
$X_i(t) \sim \mathcal{N}\{\mu_x(t),\sigma_x^2(t)\}$,
and we have
\begin{equation}
    \mathbb{E}\{X_i(t)\mid \widetilde{W}_i(t)\} 
    = \frac{\displaystyle \int_\chi x \prod_j \left( \Bigl[p_i(t)\,\varphi\{W_{ij}(t)\mid x/p_i(t),\sigma_u^2(t)\}\Bigr]^{\eta_{ij}(t)} \{1-p_i(t)\}^{1-\eta_{ij}(t)} \right) \varphi\{x;\mu_x(t),\sigma_x^2(t)\}\,dx}
    {\displaystyle \int_\chi \prod_j \left( \Bigl[p_i(t)\,\varphi\{W_{ij}(t)\mid x/p_i(t),\sigma_u^2(t)\}\Bigr]^{\eta_{ij}(t)} \{1-p_i(t)\}^{1-\eta_{ij}(t)} \right) \varphi\{x;\mu_x(t),\sigma_x^2(t)\}\,dx}, \label{eq:RC}
\end{equation}
where \(\widetilde{W}_i(t) = [W_{i1}(t),\dots,W_{iJ_i}(t)]^\top\), \(\eta_{ij}(t)=I\{W_{ij}(t)\neq0\}\), and \(\varphi(x;\mu,\sigma^2)\) denotes the probability density function of \(\mathcal{N}(\mu,\sigma^2)\).
When we compute $\mathbb{E}\{X_i(t)\mid \widetilde{W}_i(t)\}$ using equation~\eqref{eq:RC}, we replace the unknown parameters with their estimates. For $p_i(t)$, we use the estimating method discussed in pointwise two-stage mixed effect model approach part (Section~\ref{estimate p}). For \(\mu_x(t)\), \(\sigma_x^2(t)\), and \(\sigma_u^2(t)\), we propose the following estimators:
$$
    \widehat{\mu}_x(t) = \frac{\sum_i\sum_j W_{ij}(t)}{\sum_i J_i}, \ \ 
    \widehat{\sigma}_x^2(t) = \frac{\sum_i \left[\widehat{p}_i(t)\cdot\Bigl\{\overline{W}_{i\cdot}^*(t) - \overline{W}^*(t)\Bigr\}\right]^2}{n^*(t)-1},
$$
$$
    \widehat{\sigma}_u^2(t) = \frac{\sum_{i:\,r_i(t)>1} \left(\left[\sum_j I\{W_{ij}(t)\neq0\}\cdot \Bigl\{W_{ij}(t) - \overline{W}_{i\cdot}^*(t)\Bigr\}^2\right]/\{r_i(t)-1\}\right)}{n_1^*(t)},
$$
where
$$
    r_i(t) = \sum_j I\{W_{ij}(t)\neq0\}, \ \ 
    n^*(t) = \sum_i I\{r_i(t)\neq0\}, \ \ 
    n_1^*(t) = \sum_i I\{r_i(t)>1\},
$$
$$
    \overline{W}_{i\cdot}^*(t) = \frac{\sum_j W_{ij}(t)\, I\{W_{ij}(t)\neq0\}}{r_i(t)}, \ \ 
    \overline{W}^*(t) = \frac{\sum_{i: \, r_i(t)\neq0} \overline{W}_{i\cdot}^*(t)}{n^*(t)}.
$$

While the integrals in the numerator and denominator appear complex, they are univariate and can be efficiently computed using Monte Carlo methods.

\subsection{Scalar-on-function regression}\label{sec:s-f regression}

% Following the substitution step of $X(t)$ described in the previous section, we address some of the computational challenges associated with functional regression through discretization and truncation. 
Following the prediction step of $X(t)$ described in section~\ref{subsec:predict X}, we address the computational challenges of functional regression by approximating $X(t)$ using a finite number of basis functions, thereby reducing the infinite-dimensional problem to a finite-dimensional one.
For a generalized linear regression model that incorporates a functional predictor along with $m$ scalar predictors, the link between the expected value of the response variable and the predictors for an individual $i$ is expressed by 
$$g[ \mathbb{E}\{Y_i|X_i(t),Z_i\}] = \int_\Omega \beta(t)X_i(t)dt + (1,Z_i^\top) \gamma,$$
where $\Omega$ denotes the temporal domain. For a complete basis of $L^2(\Omega)$ function space, denoted as $\{\rho_k\}_{k=1}^\infty$, there exists a sequence of coefficients $\{c_k\}_{k=1}^\infty$ such that 
$\int_\Omega \beta(t)X_i(t)dt = \int_{\Omega} \left\{\sum_{k=1}^\infty c_{k} \rho_k(t)\right\} X_{i}(t) dt = \sum_{k=1}^\infty c_{k} \int_{\Omega} \rho_k(t) X_{i}(t) dt$.
We can perform truncation by selecting a finite subset of the basis and then use $\sum_{k=1}^\infty c_{k} \int_{\Omega} \rho_k(t) X_{i}(t) dt$ to approximate $\int_\Omega \beta(t)X_i(t)dt$. Then the scalar-on-function regression model above turns into
$$g[ \mathbb{E}\{Y_i|X_i(t),Z_i\}] = \sum_{k=1}^K c_{k} \int_{\Omega} \rho_k(t) X_{i}(t) dt + (1,Z_i^\top) \gamma.$$ The term $\int_{\Omega} \rho_k(t) X_{i}(t) dt$ is then regarded as the predictors for the regression ($k=1,\dots,K$), effectively transforming the problem into a scalar-on-scalar generalized linear regression problem.
In practice, we don't necessarily need the basis functions to be orthogonal. We can employ other types of basis functions such as B-splines, provided they are not linearly dependent.
% In practice, we usually use Fourier basis or b-spline basis as the basis of the function space. 
We use 
$\frac{\|\Omega\|}{|\mathcal{T}|}\sum_{t\in \mathcal{T}} \rho_k(t) X_{i}(t)$
to numerically compute the value of the integral
$\int_{\Omega} \rho_k(t) X_{i}(t) dt$
where $\|\cdot\|$ denote the Lebesgue measure, $\mathcal{T}$ is the finite set of time points where $X_{i}(t)$ is measured, 
and $|\mathcal{T}|$ represents the cardinality of $\mathcal{T}$. 

To provide theoretical support for the proposed regression framework, we analyze the asymptotic behavior of the proposed bias correction estimator in a scalar-on-function linear regression setting. 
Our main finding is stated in Theorem~\ref{theorem1}.
Prior to proving Theorem~\ref{theorem1} we first need to establish a preliminary result regarding the consistency of the least-squares estimator for the scalar-on-function linear regression model.

Consider the scalar-on-function linear regression model
\begin{equation}
  Y_i = \int_{\Omega} X_i(t) \beta(t)\,dt + \varepsilon_i, \quad i = 1, \dots, n,
  \label{eq:s-f model in theorem}
\end{equation}
where $\Omega$ is a compact interval in $\mathbb{R}$, $\|\beta\|_{L^2(\Omega)} < \infty$, 
and $\varepsilon_i$ are independent errors with $\mathbb{E}[\varepsilon_i] = 0$ and $\Var(\varepsilon_i) = \sigma^2 < \infty$.  

Let \(\{\rho_{k}\}_{k=1}^{\infty}\) be a complete orthonormal basis of \(L^{2}(\Omega)\).
For each subject \(i\) and index \(k\) define \(x_{ik}:=\langle X_{i},\rho_{k}\rangle =\int_{\Omega}X_{i}(t)\rho_{k}(t)\,dt\).
Define \(b_{k}:=\langle\beta,\rho_{k}\rangle\) and \(\bm{\theta}_{K}:=(b_{1},\dots,b_{K})^{\top}\).
Let $\widetilde{\bm{\theta}}_K := (\widetilde{b}_1, \dots, \widetilde{b}_K)^\top$ denote the ordinary least square estimator of $\bm{\theta}_K$, obtained by regressing $Y_i$ on $x_{i1}, \dots, x_{iK}$ with the corresponding function estimate defined as \(\widetilde{\beta}_{K}(t) := \sum_{k=1}^{K} \widetilde{b}_{k} \rho_{k}(t)\).

\begin{lemma}\label{th:lemma1}
Let $\{(X_i,Y_i)\}_{i=1}^n$ be an i.i.d.\ sample generated from model~\eqref{eq:s-f model in theorem}.  
Estimate the regression function using the first $K=K_n$ basis functions of $\{\rho_k\}_{k=1}^{\infty}$, where the truncation level $K_n$ is allowed to depend on~$n$.  
Assume that
\begin{enumerate}[label=\textbf{(B\arabic*)},align=left,itemsep=-3pt,topsep=-8pt]
  \item $K_n\to\infty$ and $K_n/n\to 0$ as $n\to\infty$;
  \item for every finite $K$, the covariance matrix $\Sigma_K := \Cov(x_{i1},\dots,x_{iK})$ is positive definite;
  \item $\mathbb{E}\!\bigl[\lVert X_i\rVert_{L^2(\Omega)}^{2}\bigr]<\infty$.
\end{enumerate}
Then the least-squares estimator $\widetilde{\beta}_{K_n}$ is $L^{2}$-consistent for $\beta$ on $\Omega$, that is,
\[
  \lVert\widetilde{\beta}_{K_n}-\beta\rVert_{L^2(\Omega)} \;\xrightarrow{P}\; 0,
  \qquad n\to\infty.
\]
(Undefined notation is summarized in Section~\ref{sec:notations}.)
\end{lemma}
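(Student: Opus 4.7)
The plan is to split the squared $L^{2}$ error into deterministic approximation error and stochastic estimation error via Parseval's identity and to control each piece separately. Since $\{\rho_{k}\}$ is a complete orthonormal basis of $L^{2}(\Omega)$,
\[
\|\widetilde{\beta}_{K_n}-\beta\|_{L^{2}(\Omega)}^{2}
=\sum_{k=1}^{K_n}(\widetilde{b}_{k}-b_{k})^{2}+\sum_{k>K_n}b_{k}^{2}.
\]
The second term is purely deterministic and vanishes as $n\to\infty$: $\|\beta\|_{L^{2}(\Omega)}^{2}=\sum_{k}b_{k}^{2}<\infty$ and, by (B1), $K_n\to\infty$, so it is the tail of a convergent series. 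The heart of the proof is therefore showing $\sum_{k=1}^{K_n}(\widetilde{b}_{k}-b_{k})^{2}=o_{P}(1)$.

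To handle the estimation error, first I would rewrite the model in basis-coordinate form,
\[
Y_{i}=\sum_{k=1}^{K_n}b_{k}x_{ik}+\delta_{i}+\varepsilon_{i},
\qquad
\delta_{i}:=\sum_{k>K_n}b_{k}x_{ik},
\]
and collect the $x_{ik}$ into a design matrix $\mathbb{X}_{n}\in\mathbb{R}^{n\times K_n}$ with $\delta,\varepsilon\in\mathbb{R}^{n}$ the corresponding vectors. The OLS identity then yields
\[
\widetilde{\bm{\theta}}_{K_n}-\bm{\theta}_{K_n}
=(\mathbb{X}_{n}^{\top}\mathbb{X}_{n})^{-1}\mathbb{X}_{n}^{\top}\varepsilon
+(\mathbb{X}_{n}^{\top}\mathbb{X}_{n})^{-1}\mathbb{X}_{n}^{\top}\delta,
\]
which splits the estimation error into a pure-noise contribution and a truncation-bias contribution. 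For the noise term, the standard conditional-second-moment calculation gives $\mathbb{E}\bigl[\|(\mathbb{X}_{n}^{\top}\mathbb{X}_{n})^{-1}\mathbb{X}_{n}^{\top}\varepsilon\|^{2}\bigm|\mathbb{X}_{n}\bigr]=\sigma^{2}\operatorname{tr}\bigl((\mathbb{X}_{n}^{\top}\mathbb{X}_{n})^{-1}\bigr)$. For the bias term, I would apply the operator-norm bound $\|(\mathbb{X}_{n}^{\top}\mathbb{X}_{n})^{-1}\mathbb{X}_{n}^{\top}\delta\|\le\{\lambda_{\min}(\mathbb{X}_{n}^{\top}\mathbb{X}_{n})\}^{-1}\|\mathbb{X}_{n}^{\top}\delta\|$ and control $\mathbb{E}\|\mathbb{X}_{n}^{\top}\delta\|^{2}$ using (B3) together with the square-summability of $\{b_{k}\}$.

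Both bounds depend on the sample Gram matrix, so I would next invoke a matrix law-of-large-numbers argument: the entries of $n^{-1}\mathbb{X}_{n}^{\top}\mathbb{X}_{n}$ are sample second moments and, under (B3), concentrate around the population analogues $\Sigma_{K_n}$; on a $1-o(1)$ event one may replace $n^{-1}\mathbb{X}_{n}^{\top}\mathbb{X}_{n}$ by $\Sigma_{K_n}$, reducing the variance contribution to $\sigma^{2}\operatorname{tr}(\Sigma_{K_n}^{-1})/n$ and the bias contribution to a quantity comparable to $\{n\lambda_{\min}(\Sigma_{K_n})\}^{-1}\sum_{k>K_n}b_{k}^{2}$ (up to cross-covariance terms between low- and high-frequency coefficients). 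Combining the two bounds with the approximation-error piece through the triangle inequality in $L^{2}(\Omega)$ closes the argument.

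The main obstacle is the tension between $K_n\to\infty$ and assumption (B2), which only guarantees $\Sigma_{K}\succ 0$ for each \emph{finite} $K$. Since (B3) forces $\sum_{k}\mathbb{E}[x_{ik}^{2}]<\infty$, the smallest eigenvalue $\lambda_{\min}(\Sigma_{K_n})$ is bound to decay to zero, so neither $\operatorname{tr}(\Sigma_{K_n}^{-1})/n$ nor $\{n\lambda_{\min}(\Sigma_{K_n})\}^{-1}$ is automatically negligible under (B1) alone. The delicate step will therefore be extracting (or acknowledging) an implicit rate-compatibility condition tying the growth of $K_n$ to the spectrum of the covariance operator of $X_{i}$, which is what ultimately pins down the admissible truncation levels and makes the two stochastic terms vanish simultaneously.
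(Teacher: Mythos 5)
The paper does not actually contain a proof of Lemma~\ref{th:lemma1}: it is declared a classical result and its proof is omitted, so there is nothing internal to compare your attempt against. On its own terms, your outline is the standard series-least-squares argument — Parseval splits the error into the deterministic tail $\sum_{k>K_n}b_k^2$ and the coefficient error, the latter decomposes into the noise term $(\mathbb{X}_n^{\top}\mathbb{X}_n)^{-1}\mathbb{X}_n^{\top}\varepsilon$ and the truncation-bias term $(\mathbb{X}_n^{\top}\mathbb{X}_n)^{-1}\mathbb{X}_n^{\top}\delta$, and each is controlled through the Gram matrix — and every step you write down is correct as far as it goes.

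The obstacle you flag in your last paragraph is not a technicality to be finessed; it is the reason the argument cannot be closed under (B1)--(B3) as stated. Since $\sum_{k}\mathbb{E}[x_{ik}^{2}]=\mathbb{E}\lVert X_i\rVert_{L^2}^{2}<\infty$, the eigenvalues of $\Sigma_{K}$ are summable, so $\lambda_{\min}(\Sigma_{K_n})\to 0$ and the variance contribution $\sigma^{2}\operatorname{tr}(\Sigma_{K_n}^{-1})/n$ need not vanish under $K_n/n\to 0$ alone: already $\operatorname{tr}(\Sigma_{K}^{-1})\ge K^{2}/\operatorname{tr}(\Sigma_{K})$ forces at least $K_n^{2}/n\to 0$, and for polynomial eigenvalue decay $\lambda_k\asymp k^{-a}$ one needs $K_n^{1+a}/n\to 0$. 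Consistency of the truncated OLS estimator therefore requires an explicit rate condition tying $K_n$ to the spectrum of the covariance operator of $X$ (e.g.\ $\operatorname{tr}(\Sigma_{K_n}^{-1})/n\to 0$, together with operator-norm — not merely entrywise — concentration of $n^{-1}\mathbb{X}_n^{\top}\mathbb{X}_n$ about $\Sigma_{K_n}$ at a scale finer than $\lambda_{\min}(\Sigma_{K_n})$), as in the functional linear regression literature of Cai--Hall, Hall--Horowitz, and Cardot--Ferraty--Sarda. So your proposal is incomplete, but for the right reason: the gap lies in the lemma's hypotheses rather than in your decomposition. The honest resolutions are either to add such a spectral rate condition to (B1)--(B3), or to fix $K$ finite and accept a nonvanishing approximation bias; as written, neither the lemma nor any proof of it under these assumptions can be correct in full generality.
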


Lemma~\ref{th:lemma1} is a well-known classical result establishing the consistency of the least-squares estimator for the scalar-on-function linear regression model. Its proof is therefore omitted.

We introduce an auxiliary process $W_{ij}(t)$, observed on a finite grid
$\mathcal{T}_{m}\subset\Omega$, which acts as a functional proxy for the
latent predictor $X_i(t)$ in model~\eqref{eq:s-f model in theorem}.
The proxy mechanism is governed by the following assumptions:
\begin{enumerate}[label=\textbf{(C\arabic*)},align=left,itemsep=-3pt,topsep=-8pt]
  \item\label{h1} For every fixed $t\in\Omega$, the conditional variables
        $\{W_{ij}(t)\mid X_i(t)=x\}\sim
        F_{W\mid X,t}(\,\cdot\mid x,t)$ are i.i.d.\ over
        $j=1,\dots,J_i$.
  \item\label{h2} The collections
        $\bigl\{X_i,Y_i,\{W_{ij}\}_{j}\bigr\}_{i=1}^n$
        are i.i.d.\ over $i = 1,\dots,n$.
  \item\label{h3} (Identifiability)
        For any fixed $t\in\Omega$ and $x_1\neq x_2$,
        $F_{W\mid X=x_1,t}\neq F_{W\mid X=x_2,t}$.
  \item\label{h4} Conditional independence:
        $Y_i \perp\!\!\!\perp W_i | X_i$.
\end{enumerate}

Let \(\{\mathcal{T}_{m}\}_{m\ge1}\) be a sequence of finite grids \(\mathcal{T}_{m}=\{t_{m,1},\dots,t_{m,m}\}\subset\Omega\). 
Let \(t_{(0)} < t_{(1)} < \dots < t_{(m+1)}\) denote the increasing arrangement of the grid points in \(\mathcal{T}_m\) together with the endpoints of \(\Omega\). Define the mesh width as \(\Delta_{m} := \max_{1 \le j\le m+1}\{t_{(j)} - t_{(j-1)}\}\). 
For every \(i\) and \(t\in\mathcal{T}_{m}\) write \({W}_{i}(t):=[W_{i1}(t),\dots,W_{iJ_{i}}(t)]^{\top}\).
Define 
% \(\delta_{J_{i}}\!\{{W}_{i}(t)\} := \mathbb{E}\{\,X_{i}(t)\mid{W}_{i}(t)\}\) and \(\widehat{X}_{i}(t) :=\delta_{J_{i}}\!\{{W}_{i}(t)\}\). 
\(\widehat{X}_{i}(t) := \mathbb{E}\{\,X_{i}(t)\mid{W}_{i}(t)\}\). 
For each $i$ and $k$, define estimated basis scores as \(\hat{x}_{ik}:=\frac{\|\Omega\|}{|\mathcal{T}_{m}|}\sum_{t\in\mathcal{T}_{m}}\widehat{X}_{i}(t)\,\rho_{k}(t)\).
Let $\widehat{\bm{\theta}}_K := (\widehat{b}_1,\dots,\widehat{b}_K)^\top$ denote the least-squares estimator of $\bm{\theta}_K$, obtained by regressing $Y_i$ on $\hat{x}_{i1},\dots,\hat{x}_{iK}$ with the corresponding function estimate defined as \(\widehat{\beta}_{K}(t) := \sum_{k=1}^{K} \widehat{b}_{k} \rho_{k}(t)\). We call \(\widehat{\beta}_{K}(t)\) a plug-in estimator because it is obtain from plugging \(\widehat{X}_i(t)\) as the substitution of \(X_i(t)\) into the least-squares estimator \(\widetilde{\beta}_{K}(t)\). 

\begin{theorem}\label{theorem1}
Consider model~\eqref{eq:s-f model in theorem} under
assumptions~\ref{h1}–\ref{h4} together with the conditions of
Lemma~\ref{th:lemma1}.
Let
\[
  \bigl\{Y_i,\; W_{ij}(t):t\in\mathcal{T}_{m},\; j=1,\dots,J_i\bigr\}_{i=1}^n
\]
be the observed sample, and estimate $\beta$ using the first
$K_n$ basis functions.
If, in addition,
\begin{enumerate}[label=\textbf{(D\arabic*)},align=left,itemsep=-3pt,topsep=-6pt]
  \item $\Delta_{m}\;\to\;0 \quad(m\to\infty)$;
  \item $\inf_{i}J_i\;\to\;\infty$;
  \item for every \(n\) and \(K\), both design matrices
        \[
          \widehat{\bm X}_{K}:=\bigl(\hat{x}_{ik}\bigr)_{n\times K},
          \qquad
          \bm X_{K}:=\bigl(x_{ik}\bigr)_{n\times K}
        \]
        have full column rank.
\end{enumerate}
Then the plug-in estimator $\widehat{\beta}_{K_n}$ is $L^{2}$-consistent on
$\Omega$; that is,
\[
  \lVert\widehat{\beta}_{K_n}-\beta\rVert_{L^2(\Omega)}
  \;\xrightarrow{P}\;0
  \qquad\text{as}\quad
  n\to\infty,\;
  \inf_{i}J_i\to\infty,\;
  m\to\infty.
\]
(Undefined notation is summarized in Section~\ref{sec:notations}.)
\end{theorem}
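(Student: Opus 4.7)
My plan is to split the target error via the triangle inequality,
\[
  \lVert\widehat{\beta}_{K_n}-\beta\rVert_{L^2(\Omega)}
  \le
  \lVert\widehat{\beta}_{K_n}-\widetilde{\beta}_{K_n}\rVert_{L^2(\Omega)}
  +\lVert\widetilde{\beta}_{K_n}-\beta\rVert_{L^2(\Omega)},
\]
and dispense with the second term immediately using Lemma~\ref{th:lemma1}. What remains is to compare the plug-in estimator with the oracle least-squares estimator, and here I would exploit the orthonormality of $\{\rho_k\}$ and Parseval's identity to rewrite the first term as $\lVert\widehat{\bm\theta}_{K_n}-\widetilde{\bm\theta}_{K_n}\rVert_2$. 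Thus the theorem reduces to a finite-dimensional OLS perturbation question: the plug-in regression uses the design matrix $\widehat{\bm X}_{K_n}$ while the oracle regression uses $\bm X_{K_n}$, with the same response vector $\bm Y$, so I need only show that $\hat x_{ik}-x_{ik}\xrightarrow{P}0$ entry-wise in a way that dominates the conditioning of the normal equations.

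\textbf{Controlling the perturbation of the design matrix.} I would decompose each entry as
\[
  \hat x_{ik}-x_{ik}
  =\tfrac{\lVert\Omega\rVert}{|\mathcal T_m|}\sum_{t\in\mathcal T_m}\bigl\{\widehat X_i(t)-X_i(t)\bigr\}\rho_k(t)
  +\Bigl[\tfrac{\lVert\Omega\rVert}{|\mathcal T_m|}\sum_{t\in\mathcal T_m}X_i(t)\rho_k(t)-\int_\Omega X_i(t)\rho_k(t)\,dt\Bigr].
\]
The second bracket is a pure quadrature error that vanishes by (D1), since $\Delta_m\to0$ and $X_i\rho_k\in L^2(\Omega)$ almost surely. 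The first summand is the substantive piece: combining assumption~\ref{h1} (i.i.d.\ replicates from $F_{W\mid X,t}$), identifiability in~\ref{h3}, and (D2) ($\inf_i J_i\to\infty$), a Doob-type posterior-consistency argument yields $\widehat X_i(t)=\mathbb{E}\{X_i(t)\mid W_i(t)\}\xrightarrow{L^2}X_i(t)$ pointwise in $t$, and Cauchy--Schwarz transports this convergence to the grid average. Together these force every entry of $\widehat{\bm X}_{K_n}-\bm X_{K_n}$ to zero in probability.

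\textbf{OLS perturbation and main obstacle.} Given entry-wise convergence of the design matrices, I would expand
\[
  \widehat{\bm\theta}_{K_n}-\widetilde{\bm\theta}_{K_n}
  =\bigl(\widehat{\bm X}_{K_n}^{\top}\widehat{\bm X}_{K_n}\bigr)^{-1}\bigl(\widehat{\bm X}_{K_n}-\bm X_{K_n}\bigr)^{\top}\bm Y
  +\bigl[(\widehat{\bm X}_{K_n}^{\top}\widehat{\bm X}_{K_n})^{-1}-(\bm X_{K_n}^{\top}\bm X_{K_n})^{-1}\bigr]\bm X_{K_n}^{\top}\bm Y,
\]
and invoke (B2), (D3), and~\ref{h4} (the conditional independence $Y_i\perp\!\!\!\perp W_i\mid X_i$, which rules out hidden endogeneity in $\widehat{\bm X}_{K_n}^{\top}\bm Y$) to control each factor through standard normal-equation perturbation bounds. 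The main obstacle will be synchronising the three asymptotic regimes ($n\to\infty$, $\inf_i J_i\to\infty$, $m\to\infty$) with the growing truncation $K_n$: the smallest eigenvalue of $\Sigma_{K_n}$ typically shrinks in $K_n$ and can amplify the design-matrix perturbation, while $\sup_t|\rho_k(t)|$ may grow in $k$. I would address this either with an extra rate condition linking $K_n$ to the posterior-consistency rate of $\widehat X_i(t)$ and the mesh $\Delta_m$, or by a subsequence/diagonalisation argument along the triple limit, supported by a quantitative Doob-type bound on $\mathbb{E}[\{\widehat X_i(t)-X_i(t)\}^2]$. Making this rate bookkeeping explicit is the genuinely hard step; the rest of the argument is largely mechanical once the entry-wise convergence is quantified.
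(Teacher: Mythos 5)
Your proposal follows essentially the same route as the paper's proof: the same triangle-inequality split with Lemma~\ref{th:lemma1} absorbing the oracle term, the same reduction to entry-wise convergence $|\hat x_{ik}-x_{ik}|\xrightarrow{P}0$ via the decomposition into a substitution error (handled by the Doob-type conditional-expectation consistency, i.e.\ Lemma~\ref{th:lemma2}) plus a Riemann-sum quadrature error, and the same normal-equation perturbation step (the paper's Lemma~\ref{th:lemma3}). The uniformity-in-$K_n$ issue you flag at the end is a real subtlety, but the paper's own proof does not address it either, so your argument is at least as complete as the published one.
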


The proof of Theorem~\ref{theorem1} is given in Section~\ref{sec:proofs}.

In the Section~\ref{subsec:predict X}, 
we explained that both of our proposed estimators for $X_i(t)$ are built on the surrogate predictor $\mathbb{E}[X \mid W]$.  
Theorem~\ref{theorem1} further establishes that the resulting plug-in estimator is consistent for the scalar-on-function linear regression model.
For the broader class of generalized linear models (GLMs), estimation is typically carried out via maximum likelihood, with the choice of response distribution and link function determining the specific procedure.  In these settings the same consistency result continues to hold, and the proofs follow similar arguments along those presented here.  Due to limited space, the specifics of these proofs are not provided here.

\section{Simulation study}\label{sec:simulation}

To evaluate the finite sample performance of our proposed methods, we performed several simulation studies under various controlled settings. Specifically, we examined the accuracy of parameter estimation and the robustness of our methods under various levels of measurement error and different data-generating conditions, such as sample size, predictor correlation, and the prevalence of zero values. 
We compared our proposed mixed model based- and regression calibration-based approaches with other methods, including the (i) Naive average method which uses $\overline{W}_{i\cdot}(t)$ as the substitution of $X_i(t)$; (ii) Non-zero-inflated mixed model based method. This is also a mixed-effects model–based method. However, unlike the proposed approach in this article, it applies mixed-effects modeling without accounting for zero inflation. 
To distinguish it from the mixed model–based method proposed in Section~\ref{subsec:predict X}, we refer to this comparison approach as the "non-zero-inflated mixed model–based" / "non-ZI mixed model–based" / "non-ZI MM" method, while the mixed model–based method in Section~\ref{subsec:predict X} is referred to as the "zero-inflated mixed model–based" / "ZI mixed model–based" / "ZI MM" method, or simply the "mixed model–based" / "MM" method.
\citet{luan2023scalable} previously explored this method.
% {Model~\eqref{eq:mixed M}} uses the response variable $p_i(t)W_{ij}^*(t)$, where $W_{ij}^*(t)$ indicates a non-zero value of $W_{ij}(t)$. By substituting the response variable in {Model~\eqref{eq:mixed M}} with $W_{ij}(t)$, including zero values, we can still fit  {Model~\eqref{eq:mixed M}}. In this approach, $\widehat b_0 + \widehat b_i$ is used as a replacement for $X_i(t)$ at each instance $t$. \citet{luan2023scalable} previously explored this method. Both this technique and our proposed mixed model rest on the assumption that $\mathbb{E} \{W_{ij}(t)\} = X_i(t)$. However, this method was designed for a less complex scenario where the relationship between $W$ and $X$ is modeled as $W_{ij}(t) = X_i(t)+U_{ij}(t)$, as opposed to Model~\eqref{eq:ME model}. We refer to this method as the "No zero inflation mixed model based method" because, although similar to our proposed approach, it disregards zero-inflation;    
Our methods were also compared to the naive one day method which uses the first replicate of each individual, $W_{i1}(t)$, as the substitution of $X_i(t)$. We note that while we used the first replicate, the replicate for another day may also be employed for this approach. Additionally, we compared our methods with the benchmark approach which uses the true value of $X(t)$, highlighting the effectiveness of our scalar-on-function regression algorithm.

\parbox{\linewidth}{
We fit two distinct methods within the framework of our proposed mixed effects model and regression calibration approaches:

\textbf{(i) Pointwise approach:} Apply the model specified in equation~\eqref{eq:model fit p} to each $t$ in order to determine $\theta(t)$, and subsequently use the estimated value $\widehat\theta(t)$ in conjunction with $Z_i$ to compute $p_i(t)$. 

\textbf{(ii) Smoothed pointwise approach:} Similar to the pointwise method, fit the model in equation~\eqref{eq:model fit p} for each $t$ to find $\theta(t)$, but smooth $\widehat\theta(t)$ over $t$ before combining it with $Z_i$ to derive $p_i(t)$.
}
The pointwise approach offers flexibility and can capture local variations in $\theta(t)$, but it may yield noisy and unstable estimates. In contrast, the smoothed pointwise approach reduces variance and improves interpretability by borrowing strength across nearby time points, though it may introduce bias if $\theta(t)$ has abrupt changes. The choice depends on the expected smoothness of $\theta(t)$ and the trade-off between bias and variance.

\subsection{Simulation Settings}

We now detail the data-generating mechanisms used in our simulation study.
% sample size
For evaluating the varying effects of the sample size $n$, we used $n = $ $50$, $100$, $200$, $500$, $1000$
% number of days
The number of measurement replicates for each subject, $J_i$, is uniformly set to 7. for all $i$. 
% time domain and time points
The time domain $\Omega$ is set as $[0,1]$, within which we collect functional predictor measurements at 24 evenly spaced time points. 
Although in practice accelerometer based functional data are mostly minute-level, we used 24 equally spaced time points to represent a daily trajectory in our simulation study. This coarse grid was chosen for computational efficiency and clarity of illustration. Since the proposed method operates pointwise over time, its performance is not sensitive to the temporal resolution, provided that the underlying functional pattern is adequately captured.

% Z
We generate two error-free scalar predictors, denoted as $Z_c$ and $Z_b$. The predictor $Z_c$ follows a normal distribution $\mathcal{N}_n(0,\sigma_{c}^2 I_n)$, while $Z_b$ follows a binomial distribution $\text{Binom}(n,p_{b})$.
We let $\sigma_c^2 = 1$ and $p_b = 0.6$.
For $\gamma$ in Model~\eqref{eq:s-f model}, the chosen parameters are $(5, 0.2, 0.4)^\top$, where the first value represents the intercept, and the second and third values denote the slopes for the scalar predictor $Z_c$ and $Z_b$. 

Given the two scalar predictors $Z_c$ and $Z_b$, the vector $\theta(t)$ is composed of three components: $\theta(t) = [\theta_0(t), \theta_c(t), \theta_b(t)]^\top$. Each component, namely $\theta_0(t)$, $\theta_c(t)$, and $\theta_b(t)$, is derived using a stationary Gaussian process $\vartheta(t)$ with a mean of 0 and a variance of 1. The correlation is modeled by the function $\kappa(s,t) = \exp\{-\frac{(s-t)^2}{0.45}\}$. The value assigned to each $\theta(t)$ component is in the form $a+ d\cdot\Phi(\vartheta(t))$, where $\Phi^{-1}$ represents the inverse cumulative distribution function of a standard normal distribution. The generation of each component of $\theta(t)$ follows this independent yet identical approach, leading to an interval for $\theta(t)$'s components of $(a, a+d)$ across all $t \in \Omega$. For the simulation, we vary the zero value proportion in $W$ through adjustments in $a$ and $d$. Specifically, $d$ is set to 0.2, 0.8, and 1 for $\theta_0(t)$, $\theta_c(t)$, and $\theta_b(t)$, respectively. The offset $a$ is set to -0.4 and -0.5 for $\theta_c(t)$ and $\theta_b(t)$, respectively, with multiple settings of 0.3, 0.6, 0.8, 1 for $\theta_0(t)$. 
Once the $Z$ and $\theta(t)$ are generated, the $p_i(t)$ can be calculated. 
The resulting anticipated zero-value proportions in $W$ are 0.403, 0.335, 0.294, and 0.255.

In the model, we assume that  $X_i(t)$ are i.i.d Gaussian process over $i$. We did not specify the form of the mean and covariance function for $X_i(t)$. In the simulation study, for each individual $i$, we simulate $X_i(t)$ as a Gaussian process with a mean given by $\mathbb{E} \{X_i(t)\} = 4 + \sin{(1+2.8\pi t)}$ and a covariance described by $\Cov(X_s(t),X_i(s)) = \{1 + 0.1 \cos(-1+2.8\pi t)\}\cdot\exp{\{-\frac{25(s-t)^2}{2}\}}$. In this covariance function, the term $1 + 0.1 \cos(-1 + 2.8\pi t)$ defines a time-varying variance, while the Gaussian kernel $\exp\{-\frac{25(s - t)^2}{2}\}$ specifies a smooth correlation structure. This construction introduces mild nonstationarity while preserving local dependence, reflecting realistic features of functional data.
The functional parameter $\beta(t)$ is defined as $\beta(t) = \sin{(2\pi t)}$. We generate $G_{ij}(t)$ according to the equation: 
\begin{equation} G_{ij}(t) = q_g\, G_{0.i}(t) + \sqrt{1-q_g^2}\, G_{1.ij}(t),\label{eq:sim G} 
\end{equation} 
where $\{G_{0.i}(t),t\in\Omega\}$ and $\{G_{1.ij}(t),t\in\Omega\}$ are independent, stationary Gaussian processes for every $i$ and $j$, with properties: \( \mathbb{E}\{G_{0.i}(t)\} = \mathbb{E}\{G_{1.ij}(t)\} = 0 \quad \text{and} \quad \Var\{G_{0.i}(t)\} = \Var\{G_{1.ij}(t)\} = 1\), for all \(t\in\Omega \); where $0 \leq q_g < 1$. Both $G_{0.i}(t)$ and $G_{1.ij}(t)$ are modeled and simulated as stationary Gaussian processes with a mean of zero, a variance of one, and a correlation function given by $\kappa(s,t) = \exp\{-{50(s-t)^2}\}$. Various values for $q_g$ such as 0, 0.2, and 0.4 were used. The term $q_g$ controls the proportion of shared structure in the latent variable $G_{ij}(t)$, determining the correlation between different $j$ for the same $i$, and thus shaping the dependency structure of $I\{W_{ij}(t)\neq 0\}$. A larger $q_g$ leads to stronger correlations in whether $W_{ij}(t) = 0$ across $j$, while $q_g = 0$ implies that these events are independent.

% U
The $U_{ij}(t)$ in the model is a zero-mean Gaussian process, thus, so it suffices to specify its covariance function for simulation. The covariance function can be expressed in the form $\sigma(s)\sigma(t)\kappa(s,t)$, where $\sigma(\cdot)$ denotes the pointwise standard deviation function and $\kappa(s,t)$ is a correlation function that captures the dependence structure between time points $s$ and $t$. 
In our simulation, we implement a constant function, $\sigma(t) = \sigma_u$, for $U_{ij}(t)$ and used several constant values for $\sigma_u$, including 1, 2, and 3. 
Several types of correlation function $\kappa$ are considered:
\begin{equation*}
\begin{aligned}
\text{(i)}\quad & \textbf{Squared exponential:} 
&& \kappa(s,t) = \exp\left\{-\frac{(t-s)^2}{2 \rho_u ^2}\right\}, \\
\text{(ii)}\quad & \textbf{Spatial power:} 
&& \kappa(s,t) = \rho_u^{|t-s|}, \quad 0 < \rho_u < 1, \\
\text{(iii)}\quad & \textbf{Two-value:} 
&& \kappa(s,t) = 
\begin{cases}
1      & \text{if } s = t \\
\rho_u & \text{else}
\end{cases}, \quad 0 < \rho_u < 1.
\end{aligned}
\end{equation*}
The corresponding covariance structures are autoregressive of order 1 (AR(1)) for the spatial power function and compound symmetry (also known as exchangeable correlation) for the two-value function.
For each correlation function, we used different values for the parameter $\rho$, which controls the strength and decay rate of the correlation over time $t$ (or location $s$ and $t$: for the squared exponential correlation, $\rho = 0.1, 0.15, 0.2$; for the spatial power correlation, $\rho = 0.2, 0.4, 0.6$; and for the two-value correlation, $\rho = 0.2, 0.4, 0.6$.

We calculate the $W_{ij}(t)$ once the $X_i(t)$, $U_{ij}(t)$, $G_{ij}(t)$, and $p_i(t)$ are generated. After selecting the link function $g(\cdot)$, the type of distribution $\mathrm{EF}$, and the dispersion parameter $\phi$, the response variable can also be simulated. The specified link function $g(\cdot)$ allows for the computation of $\mu_i$, which is the conditional expectation of $Y_i$ given $X_i(t)$ and $Z_i$. We simulate $Y_i$ with the specified values of $\phi$ and $\mathrm{EF}$. 

\parbox{\linewidth}{
In the simulation study, we generate two different types of response variable $Y$: 

\textbf{(i) Continuous}: The distribution $\mathrm{EF}$ is (unidimensional) Gaussian distribution.
    For this case, the link function $g(\cdot)$ used is identity function. 
    Then we have $$Y_i = \int_\Omega \beta(t)X_i(t)dt + (1,Z_i^\top) \gamma + \varepsilon_i$$
    where $(\varepsilon_1,\dots,\varepsilon_n)^\top\sim\mathcal{N}_n(0,\sigma_0^2 I_n)$.
    The $\phi$ is corresponding to the $\sigma_0^2$. 
    In this case, we use 0.02 as the value of $\sigma_0^2$. 

\textbf{(ii) Binary}: The distribution $\mathrm{EF}$ is Bernoulli distribution.
    For this case, the link function $g(\cdot)$ used is logit function. 
    Then we have $$\Pr(Y_i=1) = \int_\Omega \beta(t)X_i(t)dt + (1,Z_i^\top) \gamma.$$
    There is no $\phi$ that we need to specify its value. 
}

% When we performed basis expansion, we used two different types of basis functions, Fourier basis and b-spline basis.
In our simulation studies, several factors were varied including: the sample size (default: 100); the proportion of zero values in $W$ (default: 0.335); the scale of the deviation of the measurement error term $U$, denoted by $\sigma_u$ (default: 1); and the correlation structure of $U$, including both the type of correlation function and the parameter $\rho_u$ within it (default: squared exponential correlation function with $\rho_u = 0.2$). We also varied the types of response variable and the value of $q_g$ in equation~\eqref{eq:sim G} (default: 0).
We investigated the effects of these factors on estimation result. 
Although the probability of zero values in $W(t)$ cannot be directly specified by a single parameter, it is implicitly controlled through the choice of the function $\theta(t)$ in Model~\eqref{eq:Z2p}.
% (we cannot directly specify the proportion of zero values in $W$ or $\Pr[W_{ij}(t) = 0]$. in our model, $\Pr[W_{ij}(t) = 0]$ is determined by Model~\eqref{eq:Z2p}. We control $\Pr[W_{ij}(t) = 0]$ by controlling the value of $\theta(t)$ in simulation)

In our simulation studies, we focus on the estimation of the functional parameter $\beta(t)$, as it is directly associated with the error-prone functional predictor $X(t)$ and is of primary inferential interest in our modeling framework.
% We mainly care about the estimate of functional parameter $\beta(t)$ in our model because the $X(t)$ is the variable that is observed with measurement error. 
The bias and variance of the estimate of functional parameter $\beta$ are defined as follows: 
\begin{align}    
    &\text{Bias:}\qquad\left[\int_\Omega \left\{\mathbb{E} \widehat\beta(t)-\beta(t)\right\}^2 dt\right]^{\frac{1}{2}} \label{eq:def bias2}\\
    &\text{Variance:}\qquad\int_\Omega E \left[\widehat\beta(t)-\mathbb{E}\{\widehat\beta(t)\} \right]^2 dt \label{eq:def var}
\end{align}
where $\widehat\beta(t)$ is the estimate of $\beta(t)$. 
We use the defined bias and variance to evaluate the goodness of the estimate for $\beta$. 

% \begin{figure}[H]
%     \centering
%     \includegraphics[width=1\textwidth]{attachments/hm.pdf}
%     \caption{Heatmaps of the covariance matrix from squared exponential and spatial power covariance function}
%     \label{fig.0}
% \end{figure}

\subsection{Simulation Results}\label{simulation result}

Figure~\ref{fig:simulation} shows the average $\widehat\beta(t)$ of different estimation methods versus the true $\beta(t)$ under the default simulation setting: sample size $n = 100$, zero proportion in $W(t) \approx 0.335$, measurement error scale $\sigma_u = 1$, $q_g=0$, and squared exponential correlation function for $U(t)$ with $\rho_u = 0.2$.
% The Figure~\ref{fig:simulation} shows the average $\widehat\beta(t)$ of different estimation methods versus the true value of $\beta(t)$ for one of the simulation settings we use. 

\begin{figure}[H]
    \centering
    
    \begin{subfigure}[t]{0.48\textwidth}
        \centering
        \includegraphics[width=\textwidth]{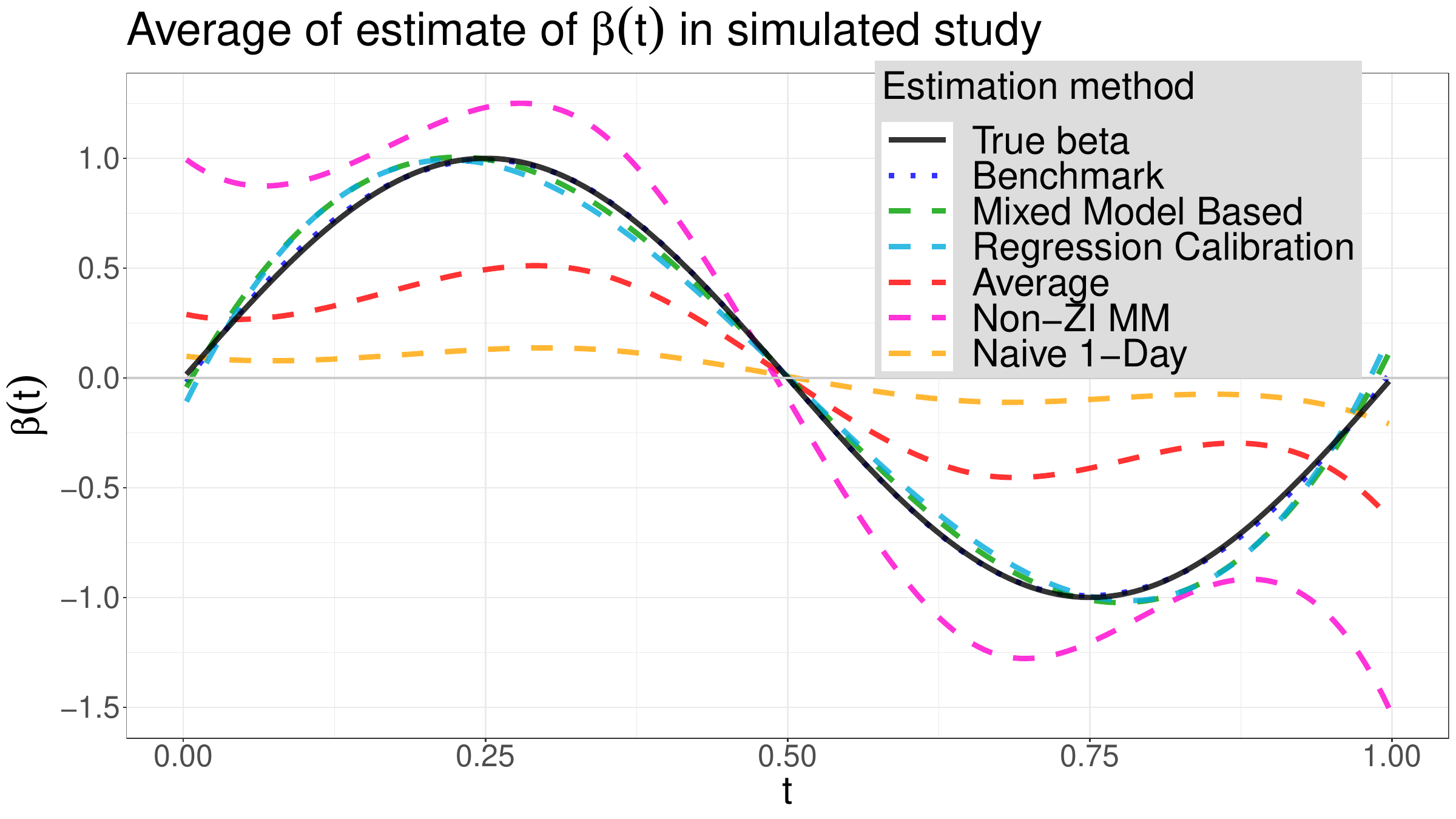}
        \caption{Continuous Outcome ($\mathrm{EF}$: Gaussian distribution)}
        % \label{fig:sim-continuous}
        \end{subfigure}
        \hfill
    \begin{subfigure}[t]{0.48\textwidth}
        \centering
        \includegraphics[width=\textwidth]{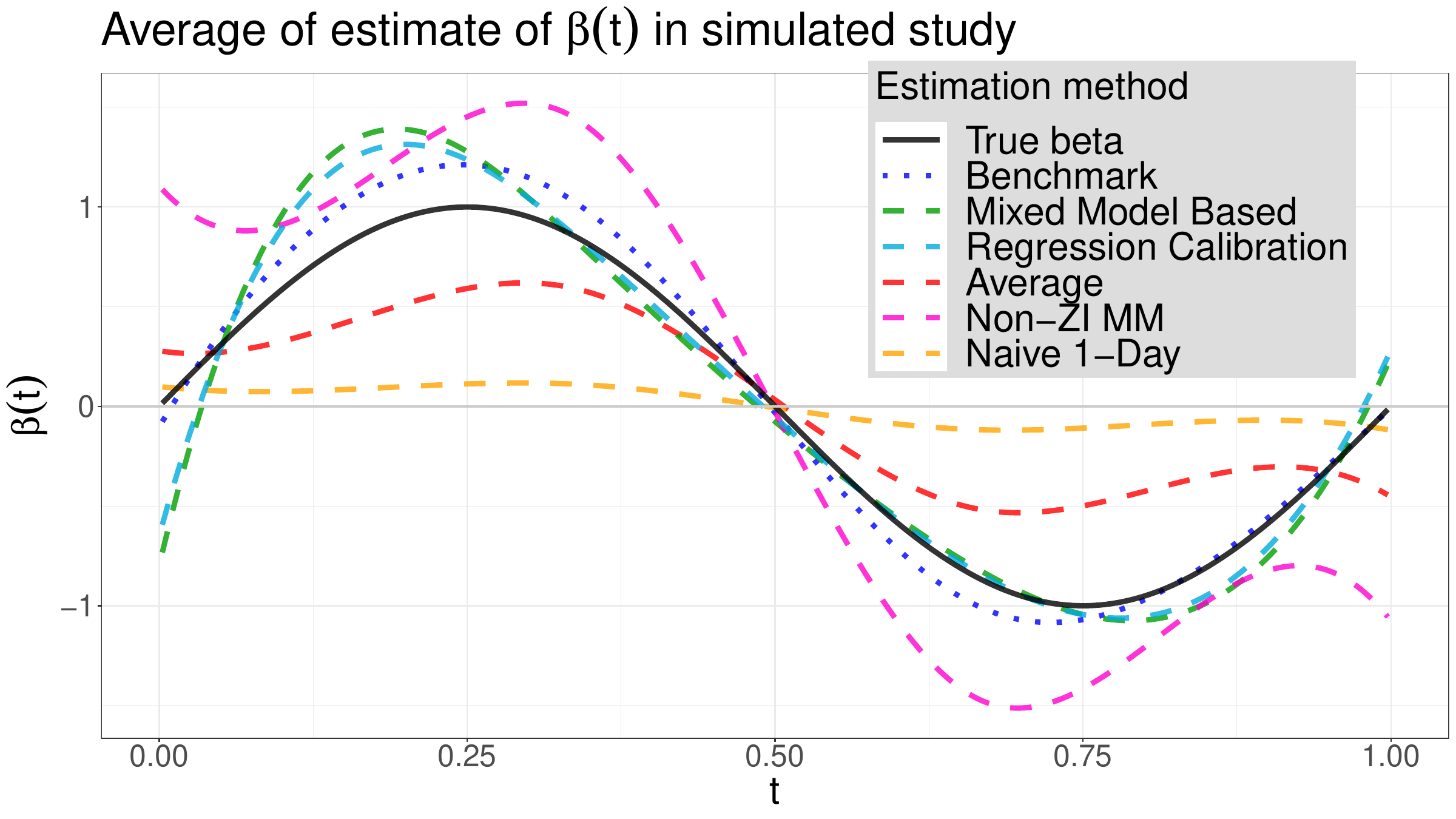}
        \caption{Binary Outcome ($\mathrm{EF}$: Bernoulli distribution)}
        % \label{fig:sim-binary}
    \end{subfigure}
    
    \captionsetup{justification=raggedright}
    \caption[plot of simulation results for one setting]{
    True $\beta(t)$ (black solid) and the average estimated $\widehat\beta(t)$ (colored lines) under different methods in simulation study. (\textbf{Non-ZI MM} = non-zero-inflated mixed model based method.)
    \textbf{Common simulation settings (continuous outcome):} 
    {conditional distribution of $Y|X(t),Z$:} Gaussian distribution (with identity link); sample size: $n=100$; 
    {estimation method of $\theta(t)$:} pointwise; {proportion of zero values:} $\mathbb{E}\{\Pr(W_{ij}(t)=0)\}\approx0.335$; 
    {Covariance function of $U(t)$:} squared-exponential with $\rho_u=0.2$, $\sigma_u = 1$; $q_g=0$. 
    \textbf{Common simulation settings (binary outcome):} 
    {conditional distribution of $Y|X(t),Z$:} Bernoulli distribution (with logit link); other settings identical.
    }
    \label{fig:simulation}
\end{figure}

% \begin{figure}[H]
%     \centering
%     \includegraphics[width=1\textwidth]{attachments/sim-plot.pdf}
%     \caption[plot of simulation results for one setting]{
%     True value of $\beta(t)$ and average $\widehat\beta(t)$ of different estimation method.\\
%     % Simulation setting:
%     $\mathrm{EF}$: Gaussian distribution;
%     estimation method for $\theta(t)$: pointwise;
%     $N=100$;
%     $\mathbb{E} \{\Pr(W_{ij}(t)=0\}$ \approx 0.335;
%     correlation function type of $U(t)$: squared exponential;
%     $\sigma_u=1$;
%     $\rho_u = 0.2$.}
%     \label{fig:simulation}
% \end{figure}

With all other settings held constant based on the values used for Fig~.\ref{fig:simulation}, while changing the sample size $N$, 
we obtained the results in the Table~\ref{tab:bias-var-combined}.
Table~\subfigref{tab:bias-var-combined}{tab:bias-cont} and Table~\subfigref{tab:bias-var-combined}{tab:bias-bin} demonstrates the average squared bias of $\widehat\beta(t)$ for different estimation methods defined in equation~\eqref{eq:def bias2}. 
Table~\subfigref{tab:bias-var-combined}{tab:var-cont} and Table~\subfigref{tab:bias-var-combined}{tab:var-bin} shows the average variance of $\widehat\beta(t)$ for different estimation methods defined in equation~\eqref{eq:def var}. 
The Fig~.\ref{fig:simulation} and Table~\ref{tab:bias-var-combined} present a subset of our simulation study results under a common model setting but with varying sample sizes. Additional results under various settings are provided in the Section~\ref{sec:more simulation results}, further demonstrating the robustness of the proposed methods across a wide range of scenarios. These scenarios encompass logistic regression models for the outcome, varying scales and error structures in the measured predictor, the use or omission of smoothing in estimating $\theta(t)$, and differing proportions of zero values in $W$.

Our simulation results indicate that the naive average approach significantly attenuates the estimation of $\beta(t)$. The "naive 1 day" method results in even more substantial attenuation. Additionally, the "no zero inflation" method yields a biased estimate for $\beta$. To address this, our bias correction techniques, including the mixed model-based and the regression calibration methods, can mitigate the extent of bias in the estimation of $\beta(t)$. However, the "non-ZI MM" method, which is the mixed-model-based bias correction approach without accounting for zero inflation, produces highly unstable estimates (with high variance) when the sample size is small.
The trade-off associated with our proposed bias correction methods is an increase in estimation variance, which corresponds to reduced efficiency. When comparing our two proposed bias correction methods, the mixed model approach yields less bias than the regression calibration approach when sample sizes are large, but more bias when sample sizes are small. Conversely, the regression calibration method results in lower variances for the estimated coefficient functions. Furthermore, in terms of computational speed, the mixed model method outperforms the regression calibration approach.

\begin{table}[H]
\centering
\small
\captionsetup{justification=raggedright}
\caption{Simulation results for six estimation methods across different sample sizes $N$.
Top two panels: continuous outcome; bottom two panels: binary outcome. Each pair shows average squared bias (first) and average variance (second).
\textbf{Common simulation settings (continuous outcome):} 
{conditional distribution of $Y|X(t),Z$:} Gaussian distribution (with identity link); 
{estimation method of $\theta(t)$:} pointwise; {proportion of zero values:} $\mathbb{E}\{\Pr(W_{ij}(t)=0)\}\approx 0.335$; 
{Covariance function of $U(t)$:} squared-exponential with $\rho_u=0.2$, $\sigma_u = 1$; $q_g=0$. 
\textbf{Common simulation settings (binary outcome):} 
{conditional distribution of $Y|X(t),Z$:} Bernoulli distribution (with logit link); other settings identical.
\textbf{Benchmark} = benchmark method; \textbf{Average} = naive average method; 
\textbf{MM} = mixed model based method proposed in Section~\ref{subsec:predict X}; \textbf{RC} = regression calibration method proposed in Section~\ref{subsec:predict X}; 
\textbf{Non-ZI MM} = non-zero-inflated mixed model based method; \textbf{1 day} = naive one-day method.
}
\label{tab:bias-var-combined}

%--------------- Continuous Y ----------------%
\begin{subtable}{\linewidth}
  \centering
  \caption{Continuous $Y$: Average Squared Bias}\label{tab:bias-cont}
  \begin{tabular}{r@{\;}|rrrrrr}
    \hline
    \textbf{N} & Benchmark & MM & RC & Average & Non-ZI MM & 1 day \\
    \hline
    50 & 0.0001213 & 0.003945 & 0.005530 & 0.1497 & 0.2995 & 0.3897 \\ 
   100 & 0.0000903 & 0.003446 & 0.005038 & 0.1445 & 0.1689 & 0.3822 \\ 
   200 & 0.0001204 & 0.003259 & 0.005246 & 0.1489 & 0.1318 & 0.3840 \\ 
   500 & 0.0001262 & 0.002772 & 0.004994 & 0.1468 & 0.1190 & 0.3842 \\ 
  1000 & 0.0001110 & 0.002619 & 0.004417 & 0.1494 & 0.1225 & 0.3846 \\ 
    \hline
  \end{tabular}
\end{subtable}

\vspace{0.8em}

\begin{subtable}{\linewidth}
  \centering
  \caption{Continuous $Y$: Average Variance}\label{tab:var-cont}
  \begin{tabular}{r@{\;}|rrrrrr}
    \hline
    \textbf{N} & Benchmark & MM & RC & Average & Non-ZI MM & 1 day \\
    \hline
    50 & 0.0023843 & 0.135181 & 0.120203 & 0.0586 & 2.6930 & 0.0190 \\ 
   100 & 0.0010200 & 0.061922 & 0.049800 & 0.0269 & 0.3297 & 0.0085 \\ 
   200 & 0.0005190 & 0.034514 & 0.026972 & 0.0155 & 0.1311 & 0.0047 \\ 
   500 & 0.0002088 & 0.014465 & 0.010488 & 0.0069 & 0.0479 & 0.0018 \\ 
  1000 & 0.0000995 & 0.008700 & 0.005136 & 0.0046 & 0.0246 & 0.0010 \\ 
    \hline
  \end{tabular}
\end{subtable}

\bigskip

%--------------- Binary Y ----------------%
\begin{subtable}{\linewidth}
  \centering
  \caption{Binary $Y$: Average Squared Bias}\label{tab:bias-bin}
  \begin{tabular}{r@{\;}|rrrrrr}
    \hline
    \textbf{N} & Benchmark & MM & RC & Average & Non-ZI MM & 1 day \\
    \hline
    50 & 0.0001213 & 0.003914 & 0.005610 & 0.1497 & 0.2995 & 0.3897 \\ 
   100 & 0.0000903 & 0.003400 & 0.005045 & 0.1445 & 0.1689 & 0.3822 \\ 
   200 & 0.0001204 & 0.003118 & 0.005186 & 0.1489 & 0.1318 & 0.3840 \\ 
   500 & 0.0001262 & 0.002646 & 0.004883 & 0.1468 & 0.1190 & 0.3842 \\ 
  1000 & 0.0001110 & 0.002508 & 0.004310 & 0.1494 & 0.1225 & 0.3846 \\ 
    \hline
  \end{tabular}
\end{subtable}

\vspace{0.8em}

\begin{subtable}{\linewidth}
  \centering
  \caption{Binary $Y$: Average Variance}\label{tab:var-bin}
  \begin{tabular}{r@{\;}|rrrrrr}
    \hline
    \textbf{N} & Benchmark & MM & RC & Average & Non-ZI MM & 1 day \\
    \hline
    50 & 0.0023843 & 0.135726 & 0.119692 & 0.0586 & 2.6930 & 0.0190 \\ 
   100 & 0.0010200 & 0.062859 & 0.050002 & 0.0269 & 0.3297 & 0.0085 \\ 
   200 & 0.0005190 & 0.034582 & 0.026807 & 0.0155 & 0.1311 & 0.0047 \\ 
   500 & 0.0002088 & 0.014532 & 0.010509 & 0.0069 & 0.0479 & 0.0018 \\ 
  1000 & 0.0000995 & 0.008773 & 0.005209 & 0.0046 & 0.0246 & 0.0010 \\ 
    \hline
  \end{tabular}
\end{subtable}

%%%%%% --------------- 总说明 ---------------- %%%%%%%%
\end{table}

\section{Application to School-Based Intervention Data from Texas, US}\label{sec:application}

Having demonstrated the performance of the proposed method through simulation studies, we now apply it to a real dataset to illustrate its practical utility. Our objectives are (i) to evaluate how the method performs under realistic conditions, (ii) to examine the consistency between our proposed bias correction approaches, (iii) to determine whether they yield results that differ significantly from those produced by alternative methods. The dataset used in this application was collected from a school-based intervention study conducted in Texas, United States, which investigated the impact of stand-biased desks on children’s health outcomes. Here, we use this dataset to illustrate the application of our statistical method.

Childhood obesity remains a pressing public health issue in the United States. Between 1999–2000 and 2021–2023, its prevalence among children and adolescents aged 2–19 years rose from 13.9\% to 21.1\%, while severe obesity doubled from 3.6\% to 7.0\% \cite{hales2020quickstats}. This trend is concerning due to its association with persistent obesity into adulthood \cite{simmonds2016predicting} and reduced life expectancy \cite{dietz1998childhood, mossberg198940}. Traditional classroom environments, where students remain seated for long periods, contribute to sedentary behavior and may exacerbate the problem.
To address this, researchers at Texas A\&M University, led by Dr. Mark Benden, launched a series of studies beginning in 2009 to evaluate the impact of stand-biased desks in elementary schools \cite{benden2011impact}. These studies examined whether modifying classroom environments could promote physical activity, increase energy expenditure, and reduce the body mass index (BMI) among students. Results showed that students using stand-biased desks had significantly higher energy expenditure and step counts during class compared to those in traditional seated classrooms \cite{benden2011impact, benden2014evaluation}. 
A follow-up study further reported a significant reduction in BMI percentiles among students who used stand-biased desks for two consecutive years relative to those who remained in seated classrooms \cite{wendel2016stand}. These findings suggest that redesigning classroom settings to incorporate stand-biased desks may serve as an effective, school-based intervention for reducing sedentary time and addressing childhood obesity.

Leveraging data from the Texas elementary school intervention studies, we apply our proposed analytical method to evaluate its effectiveness in extracting meaningful insights from real-world settings. The dataset includes key health and behavioral metrics, such as changes in BMI, energy expenditure, and physical activity levels measured via wearable accelerometers.
A notable characteristic of the step count data is its zero-inflation: a substantial proportion of observations are zeros, with the proportion varying markedly across both time points and individuals. This pattern aligns with a key assumption of our model, that the probability of observing a zero depends on both time and individual-specific factors—as illustrated in Figure~\ref{fig:PrZeros}.
\begin{figure}[htbp]  % h = here, t = top, b = bottom, p = standalone page
  \centering 
  \includegraphics[width=0.6\textwidth]{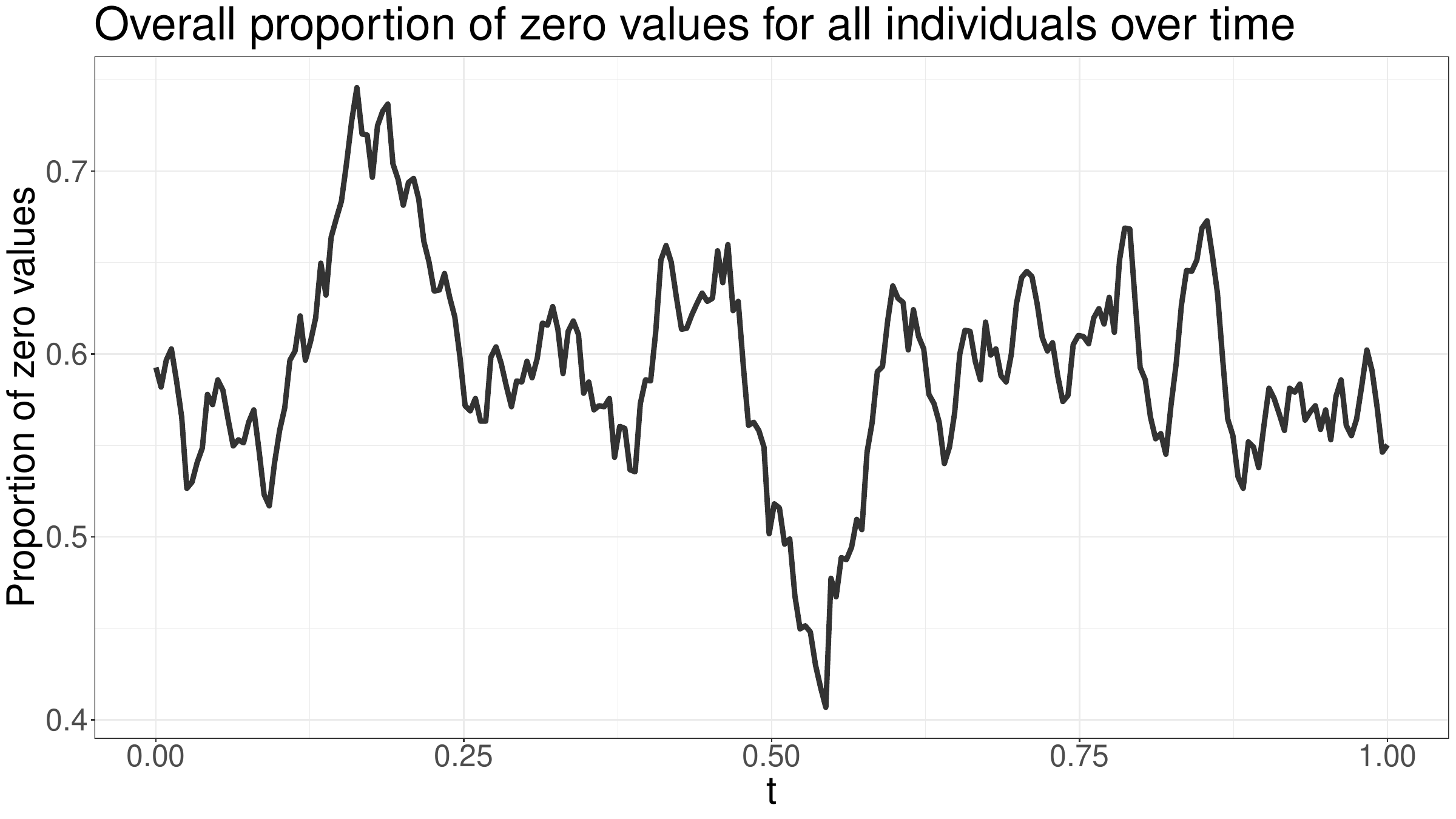}   
  \caption{The proportion of zero values of individuals in the step count data collected from elementary school children by Benden et al.}  
  \label{fig:PrZeros} 
\end{figure}

We focus on two primary outcomes: the change in BMI over the study period (\emph{delta\_BMI}, a continuous variable), and an indicator of whether a participant’s BMI decreased during the study (\emph{BMI\_reduced}, a binary outcome). For each participant, these outcomes were derived from anthropometric measurements taken at the beginning and end of their intervention period.
This setting offers an ideal case study for evaluating the impact of classroom-based interventions, such as stand-biased desks, on child health. By applying our modeling framework, we aim to identify patterns in the data, assess the intervention’s effectiveness, and demonstrate the practical applicability of our approach for addressing public health questions in real-world educational environments. The following analysis showcases how our method processes, models, and interprets the data.

In this study, participants wore wearable devices for three to seven days. We focused on minute-level activity data recorded between 10:00 AM and 2:00 PM on each day the device was worn. For analysis, only data within this four-hour window were retained. To facilitate functional data analysis, we linearly rescaled the time variable such that 10:00 AM corresponds to 0 and 2:00 PM to 1, mapping all measurements to the unit interval $[0, 1]$.

We applied the proposed bias correction methods, along with the comparison methods from the simulation study, to estimate the effect of step count on health outcomes among elementary school children, adjusting for age, sex, and ethnicity.
These methods were previously evaluated through simulation to assess their relative performance.
The full models are specified as follows:
\begin{align*}
\text{delta\_BMI} &= \int_{\Omega} \beta(t) \, \text{step count}(t) \, dt + \gamma_0 + \gamma_1 \, \text{age} + \gamma_2 \, \text{gender} + \gamma_\text{ethnicity} + \varepsilon, \\
\text{logit}\Pr(\text{BMI\_reduced}) &= \int_{\Omega} \beta(t) \, \text{step count}(t) \, dt + \gamma_0 + \gamma_1 \, \text{age} + \gamma_2 \, \text{gender} + \gamma_\text{ethnicity} + \varepsilon.
\end{align*}
We used bootstrap sampling to construct pointwise confidence intervals for the estimated coefficient function $\beta(t)$.

Figures~\ref{fig:ra.1} and~\ref{fig:ra.2} present the main application results.
Figures~\subfigref{fig:ra.1}{subfig:ra_pe} and~\subfigref{fig:ra.2}{subfig:ra_pe_bY} display the estimated coefficient functions $\beta(t)$, representing the effect of step count on \emph{delta\_BMI} and \emph{BMI\_reduced}, respectively. Figures~\subfigref{fig:ra.1}{subfig:ra_avg}-\subfigref{fig:ra.1}{subfig:ra_1d} and~\subfigref{fig:ra.2}{subfig:ra_avg_bY}-\subfigref{fig:ra.2}{subfig:ra_1d_bY} show the point estimates and confidence intervals of $\widehat\beta(t)$ obtained from the various methods described in the simulation study for these two outcomes.
The results in Figures~\ref{fig:ra.1} and~\ref{fig:ra.2} demonstrate strong agreement between our two proposed bias correction methods, while also revealing substantial differences compared to the other methods. In particular, the results highlight clear distinctions between methods that account for zero inflation and those that do not, underscoring the importance of explicitly modeling zero inflation in this context.
Across all methods, the estimated coefficient functions $\widehat{\beta}(t)$ for step count displayed certain time intervals where the pointwise bootstrap confidence intervals did not include zero, indicating statistically significant time-specific associations. 
The locations of these significant intervals were broadly similar across methods, while the proposed bias correction methods yielded effect estimates with smaller absolute magnitudes than the others—consistent with the expected attenuation from correcting for measurement error and zero inflation. 
For the scalar predictors (age, sex, and ethnicity), the estimated effects were small in magnitude and their confidence intervals included zero for all methods, indicating no statistically significant associations with either outcome. 
This lack of significance is likely due to the short follow-up period of no more than two weeks for each participant, during which substantial changes in BMI were unlikely to occur, as well as the relatively homogeneous distributions of these demographic variables in the sample, which limit variability for detecting associations. 
Nevertheless, it is important to include these predictors in the model, as they are established demographic and clinical covariates that may confound or modify the association of interest. 
Adjusting for them ensures that the estimated effect of step count is not biased by these background factors and can also improve statistical efficiency, even if their individual effects are not statistically significant in this dataset \citep{etminan2021adjust, morris2022planning}.
Although the time-varying proportion of zero values is plotted in Figure~\ref{fig:PrZeros} for reference, its trend does not exhibit pointwise correspondence with the estimated $\beta(t)$. This is expected, as $\beta(t)$ is estimated as a single smooth function, where the value at each time point depends not only on local information but also on data across the entire time domain. Therefore, the differences in $\beta(t)$ reflect changes in the global modeling structure, rather than being solely driven by local zero proportions. 
Overall, these findings demonstrate the effectiveness of our proposed methods in real-world applications. We conclude with a summary of key results and a discussion of their broader implications.

\begin{figure}[H]
\centering

\begin{subfigure}[t]{0.47\textwidth}
    \centering
    \includegraphics[width=\textwidth]{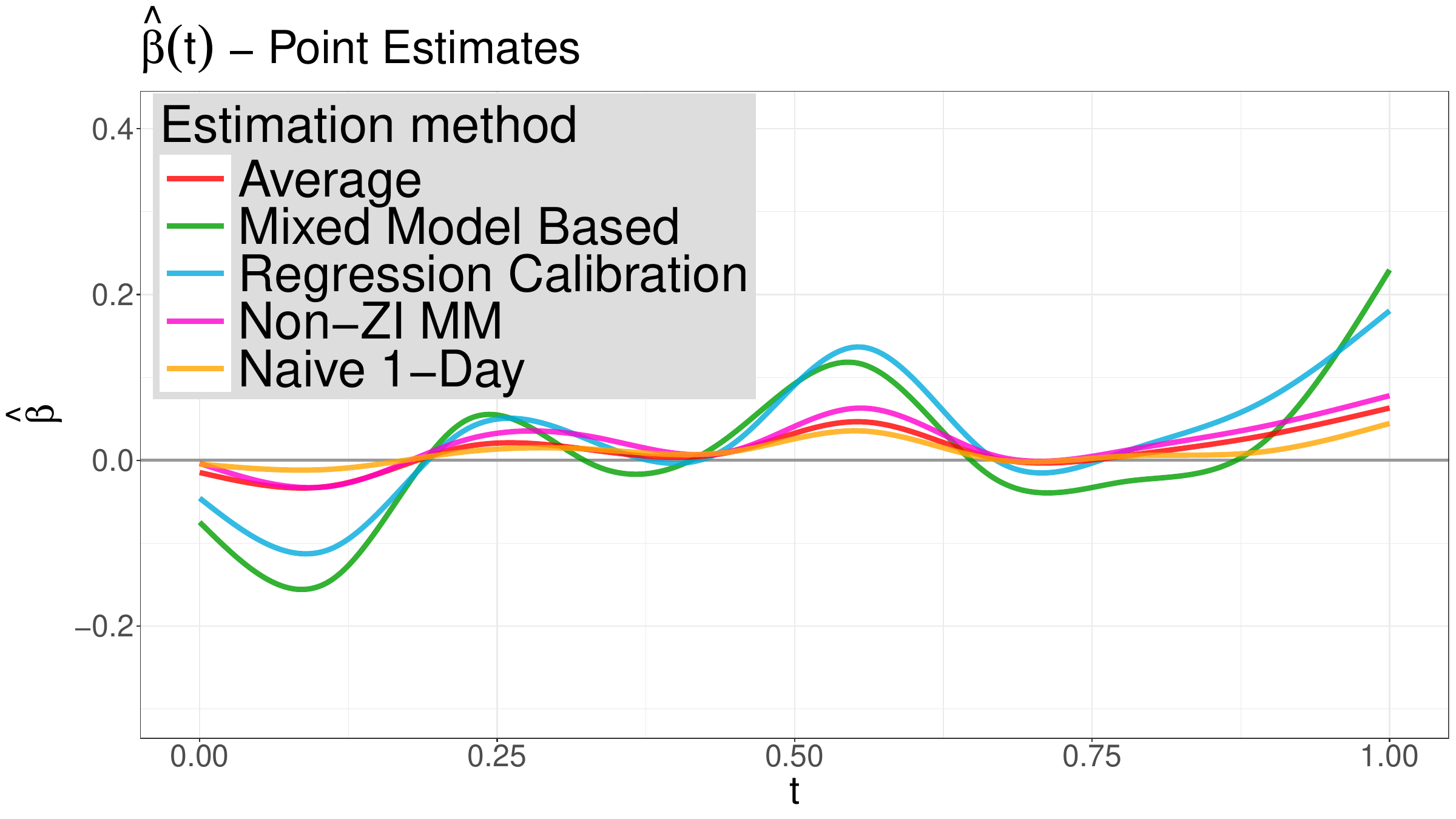}
    \caption{Point estimate from Average, mixed model, and regression calibration method}
    \label{subfig:ra_pe}
\end{subfigure}
\hfill
\begin{subfigure}[t]{0.47\textwidth}
    \centering
    \includegraphics[width=\textwidth]{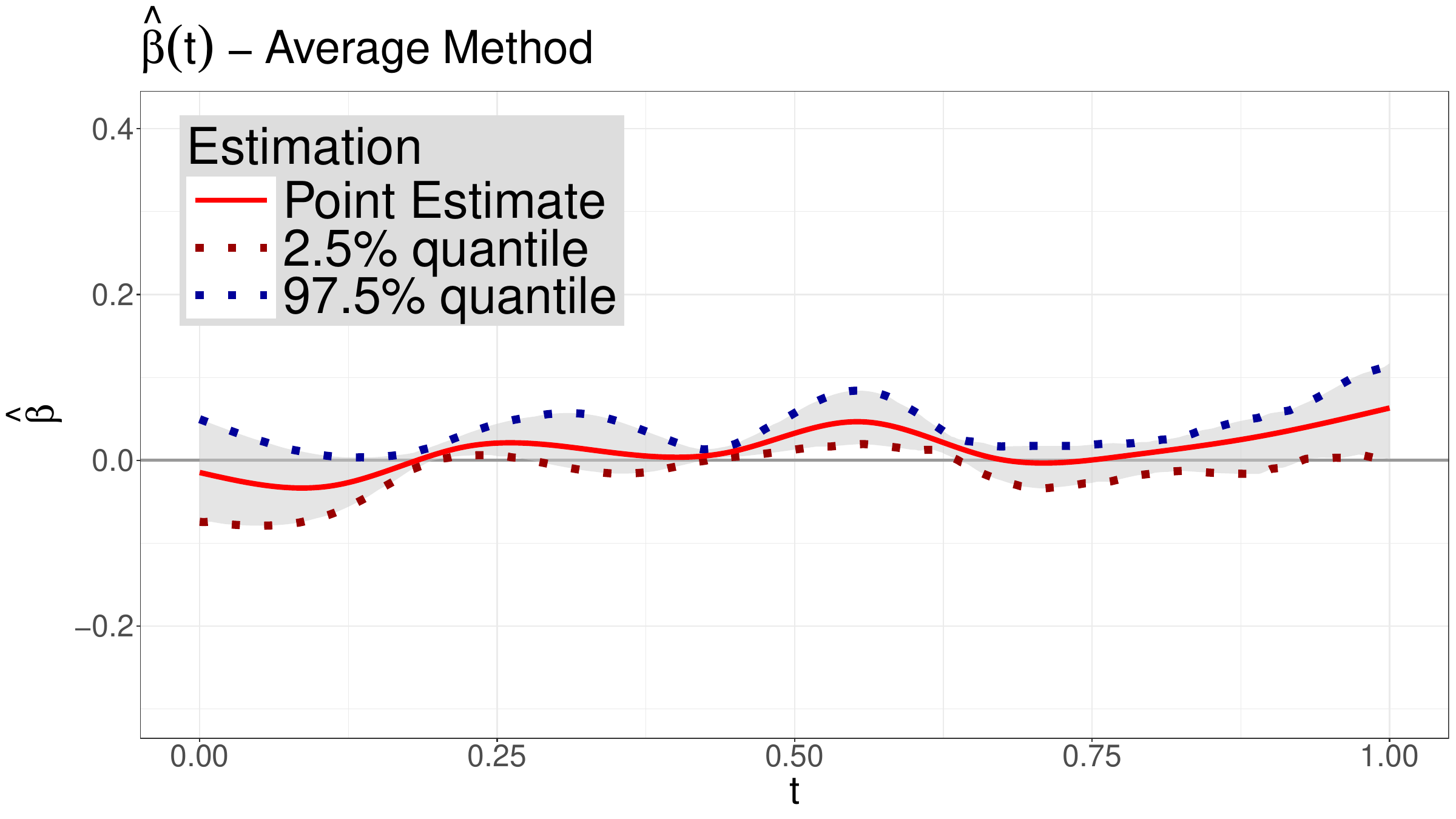}
    \caption{Estimate and confidence interval from average method}
    \label{subfig:ra_avg}
\end{subfigure}

\vspace{1em}

\begin{subfigure}[t]{0.47\textwidth}
    \centering
    \includegraphics[width=\textwidth]{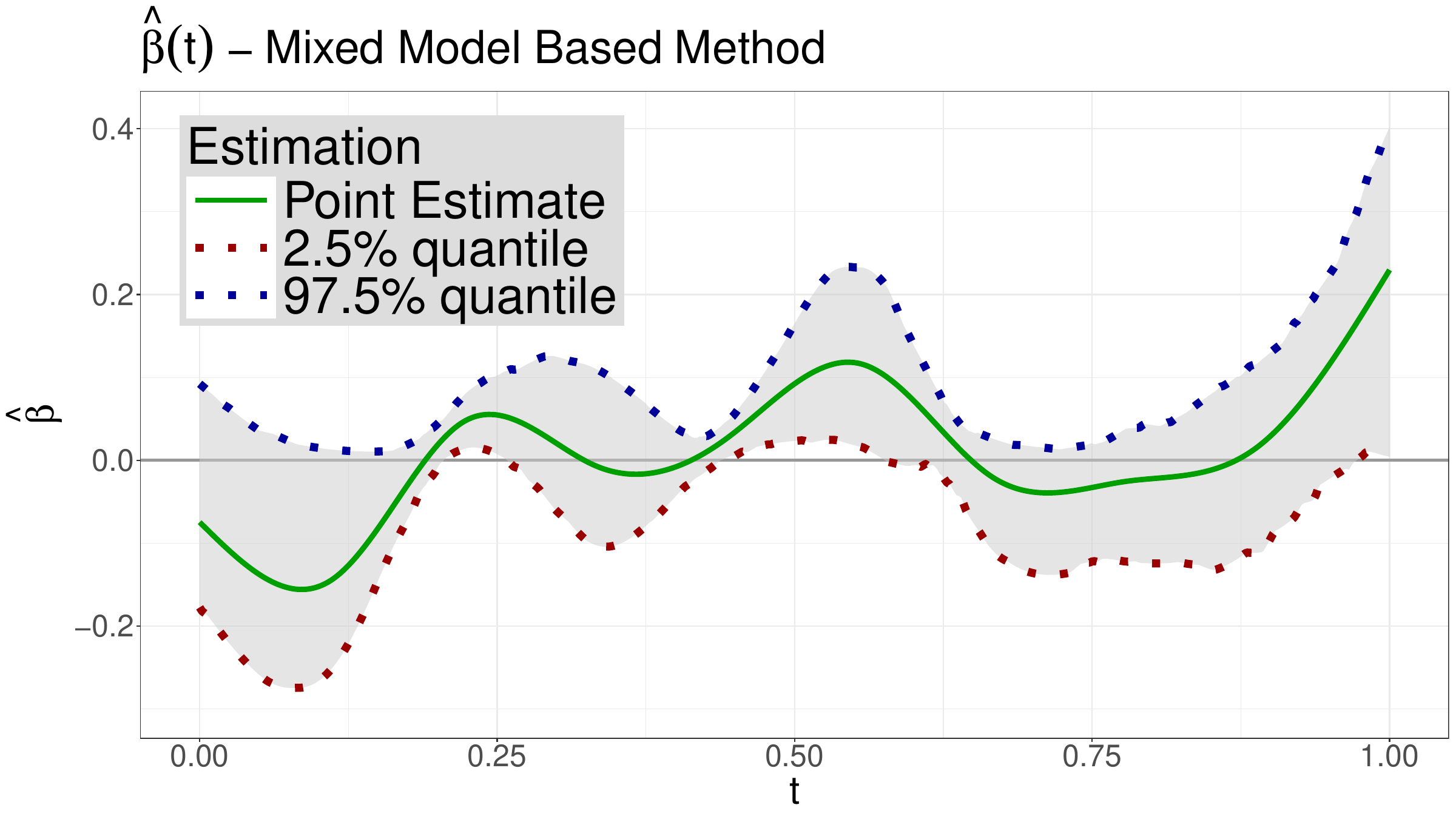}
    \caption{Estimate and confidence interval from mixed model based method}
    \label{subfig:ra_mm}
\end{subfigure}
\hfill
\begin{subfigure}[t]{0.47\textwidth}
    \centering
    \includegraphics[width=\textwidth]{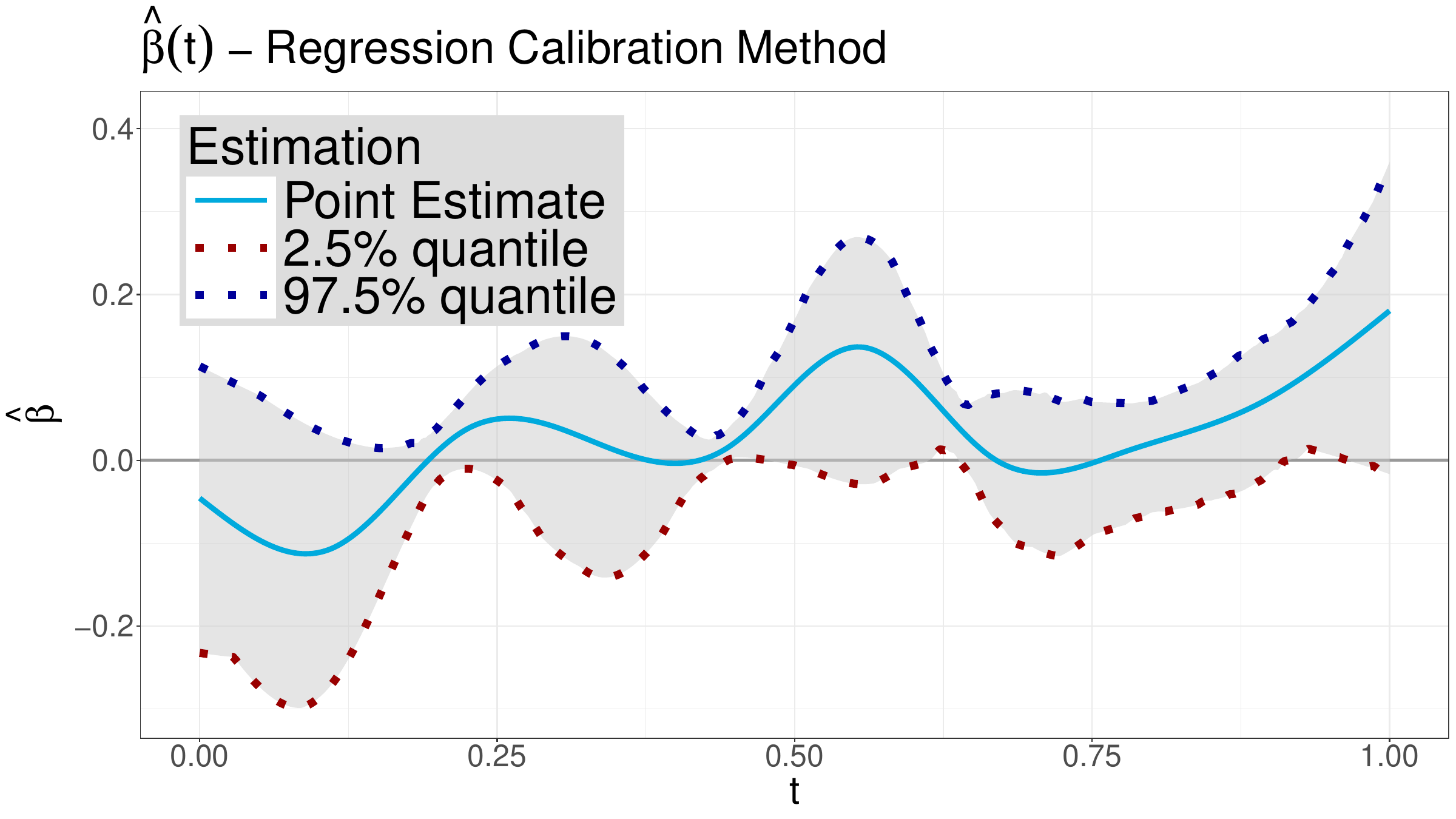}
    \caption{Estimate and confidence interval from regression calibration method}
    \label{subfig:ra_rc}
\end{subfigure}

\vspace{1em}

\begin{subfigure}[t]{0.47\textwidth}
    \centering
    \includegraphics[width=\textwidth]{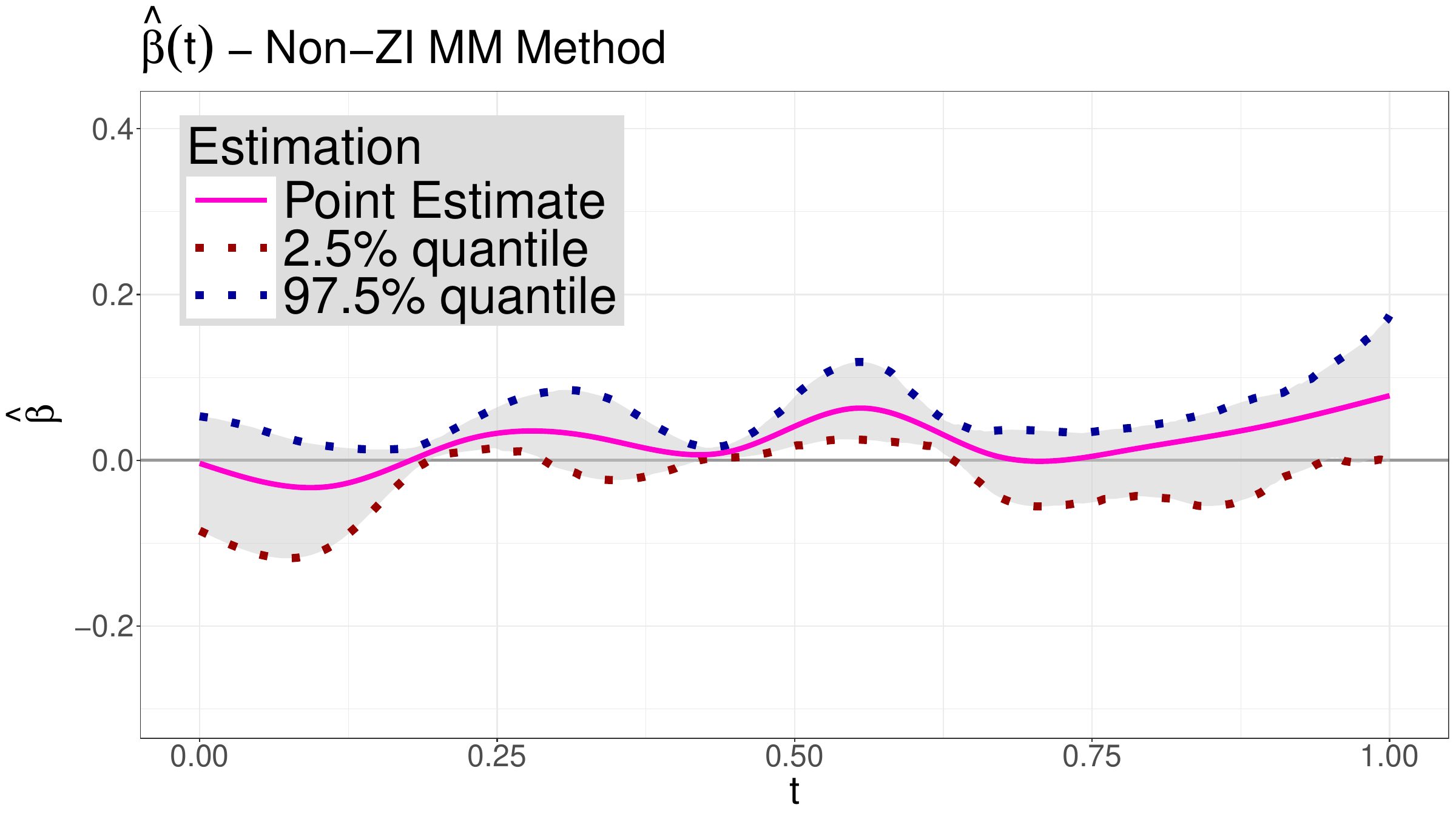}
    \caption{Estimate and confidence interval from no zero inflation mixed model based method}
    \label{subfig:ra_nz}
\end{subfigure}
\hfill
\begin{subfigure}[t]{0.47\textwidth}
    \centering
    \includegraphics[width=\textwidth]{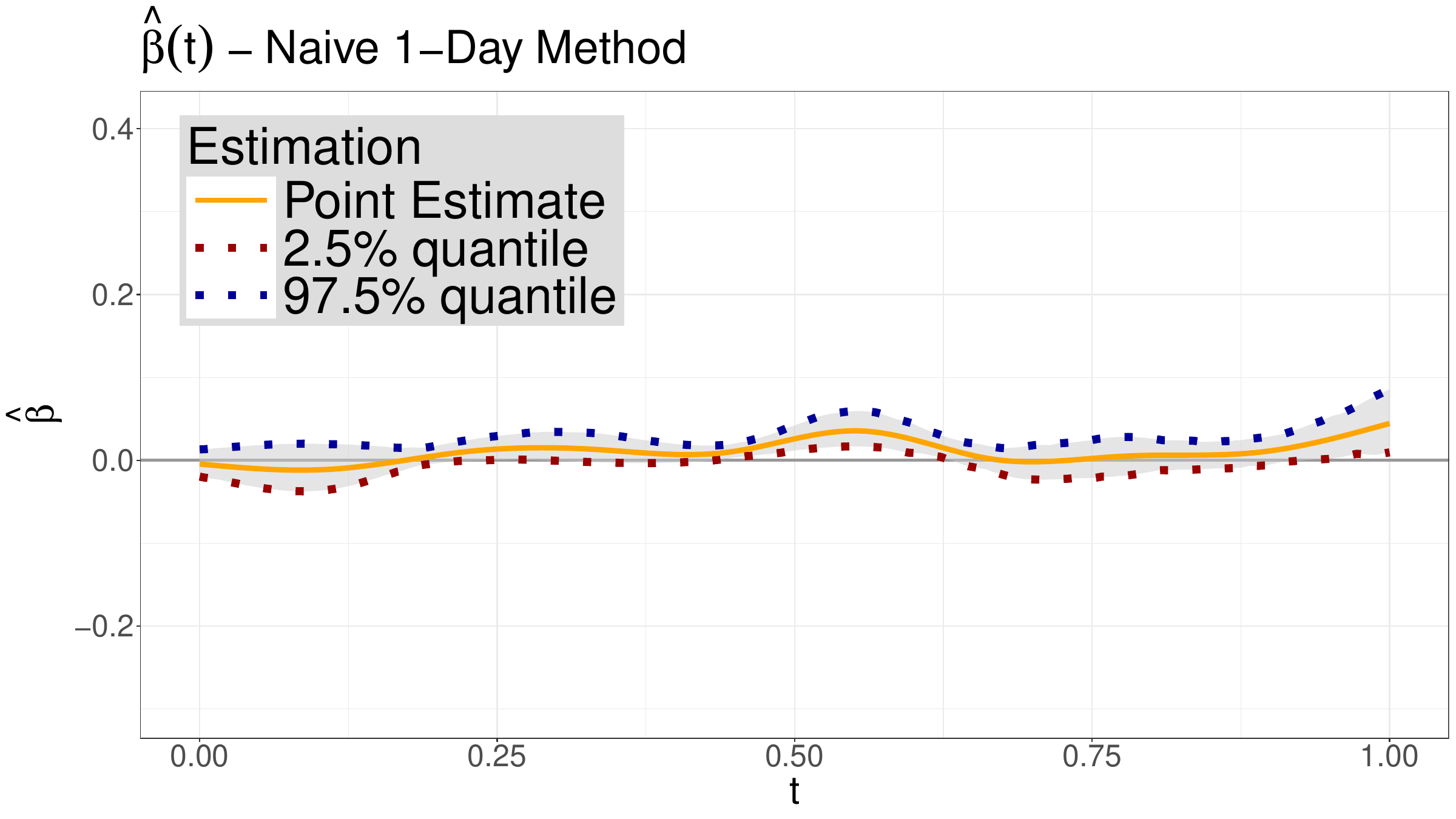}
    \caption{Estimate and confidence interval from one-day method}
    \label{subfig:ra_1d}
\end{subfigure}

\caption{Estimated effect of step counts on \emph{delta\_BMI} adjusted for age, sex, and ethnicity. Panel (\subref{subfig:ra_pe}) shows the point estimate from different methods. Panels (\subref{subfig:ra_avg})–(\subref{subfig:ra_1d}) show the point estimate and confidence interval of $\widehat\beta(t)$ from each method. Panels (\subref{subfig:ra_mm}) and (\subref{subfig:ra_rc}) correspond to our proposed bias correction methods.}
\label{fig:ra.1}
\end{figure}

\begin{figure}[H]
\centering

\begin{subfigure}[t]{0.47\textwidth}
    \centering
    \includegraphics[width=\textwidth]{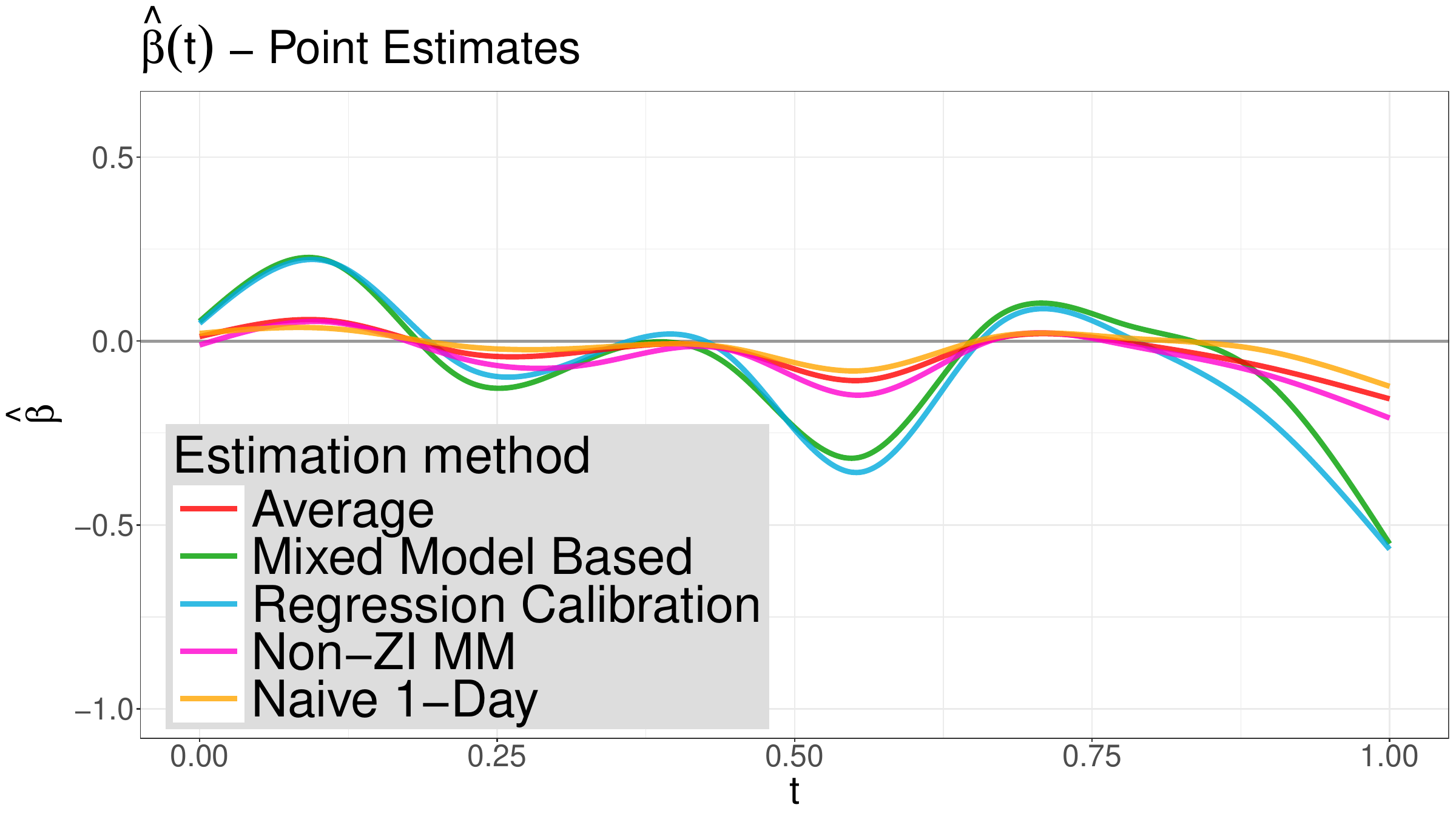}
    \caption{Point estimate from Average, mixed model, and regression calibration method}
    \label{subfig:ra_pe_bY}
\end{subfigure}
\hfill
\begin{subfigure}[t]{0.47\textwidth}
    \centering
    \includegraphics[width=\textwidth]{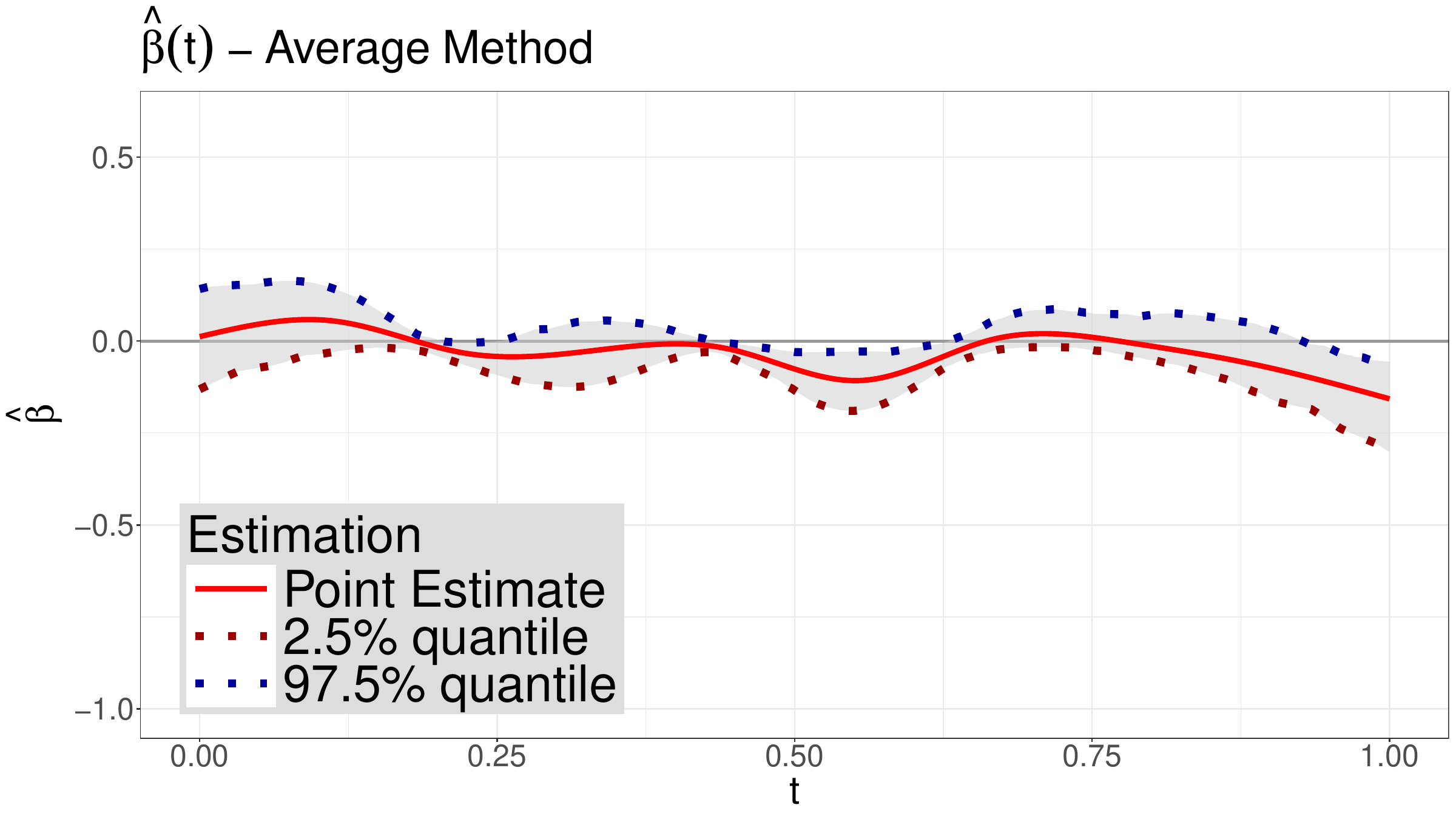}
    \caption{Estimate and confidence interval from average method}
    \label{subfig:ra_avg_bY}
\end{subfigure}

\vspace{1em}

\begin{subfigure}[t]{0.47\textwidth}
    \centering
    \includegraphics[width=\textwidth]{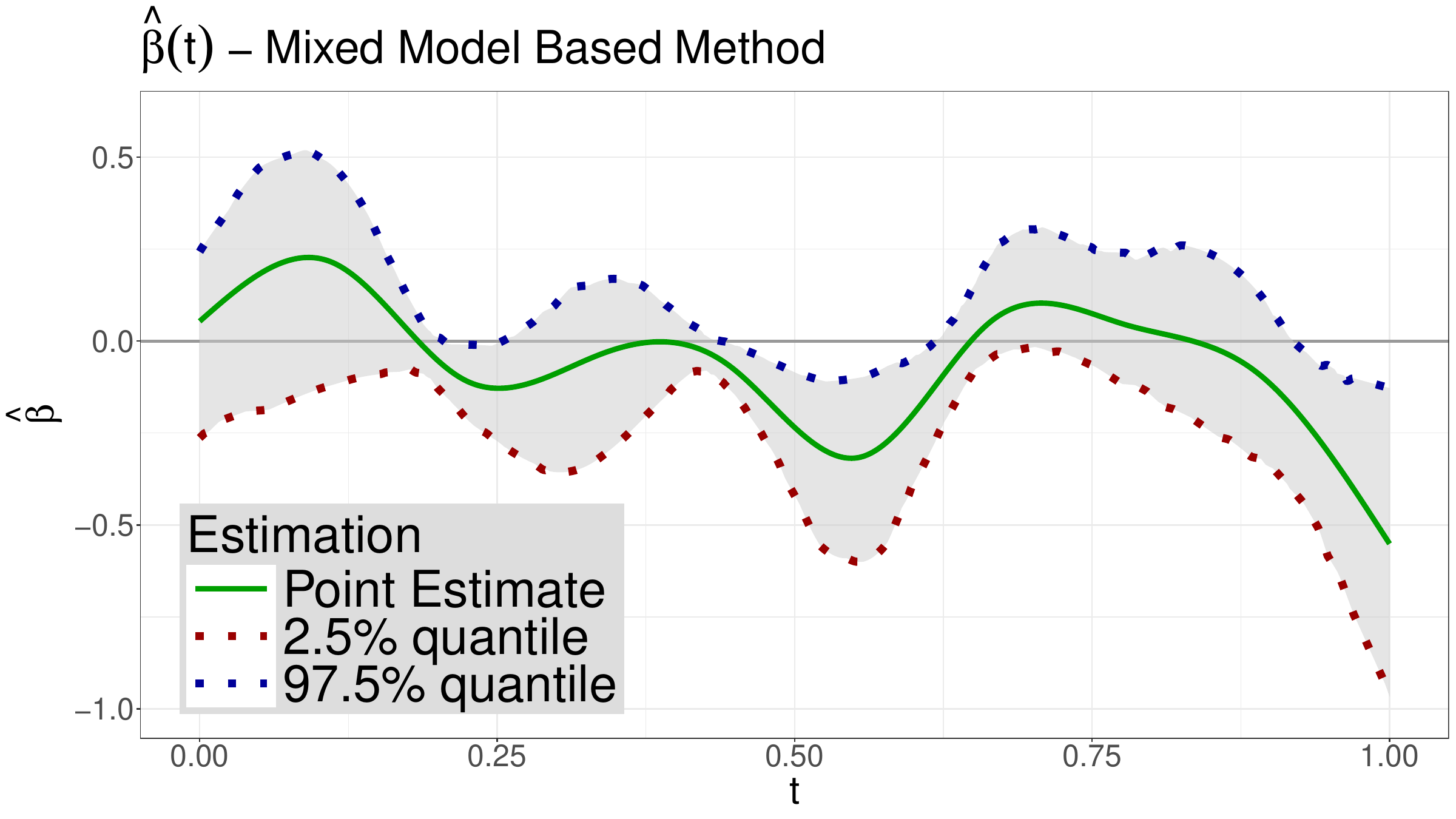}
    \caption{Estimate and confidence interval from mixed model based method}
    \label{subfig:ra_mm_bY}
\end{subfigure}
\hfill
\begin{subfigure}[t]{0.47\textwidth}
    \centering
    \includegraphics[width=\textwidth]{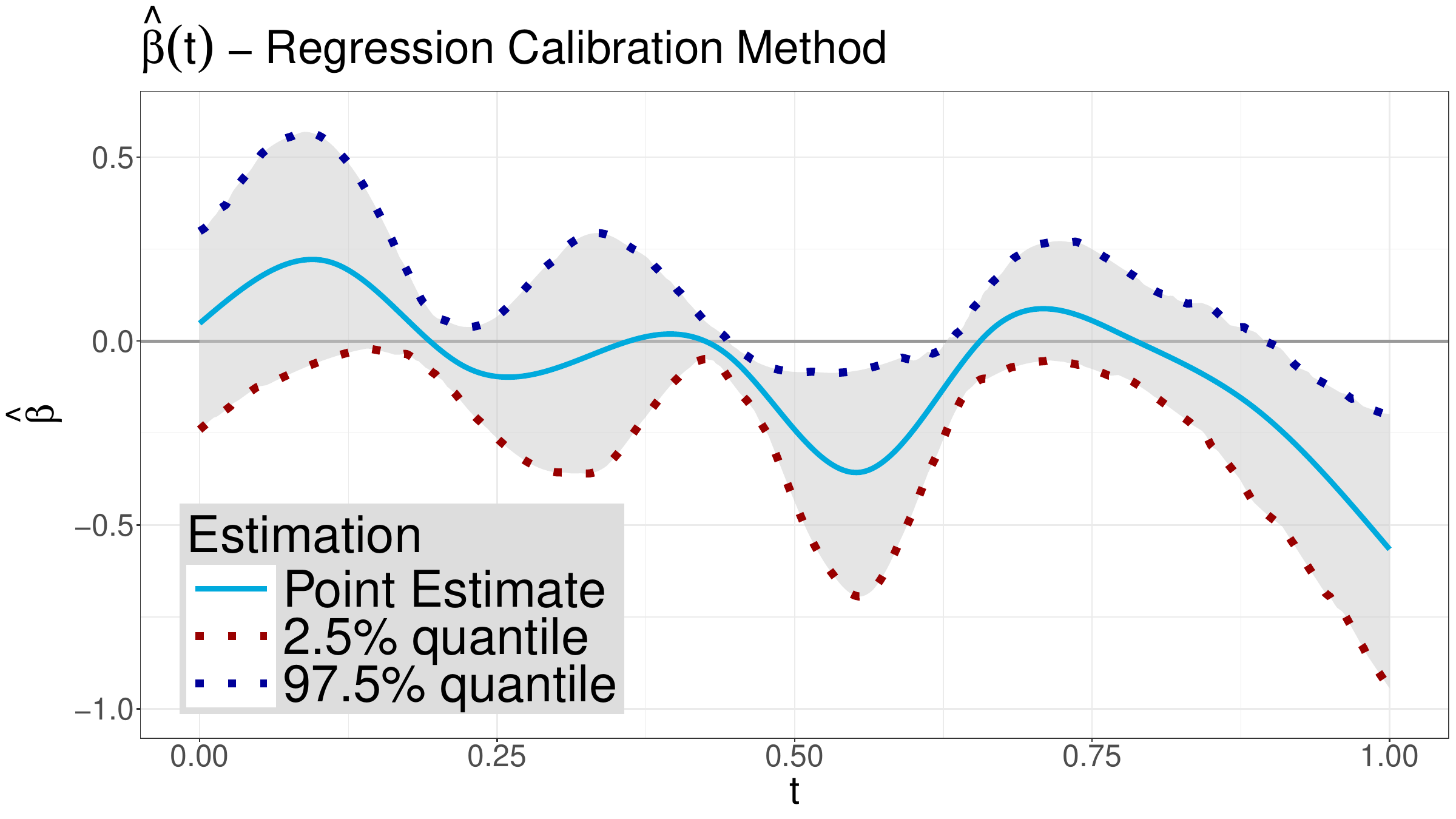}
    \caption{Estimate and confidence interval from regression calibration method}
    \label{subfig:ra_rc_bY}
\end{subfigure}

\vspace{1em}

\begin{subfigure}[t]{0.47\textwidth}
    \centering
    \includegraphics[width=\textwidth]{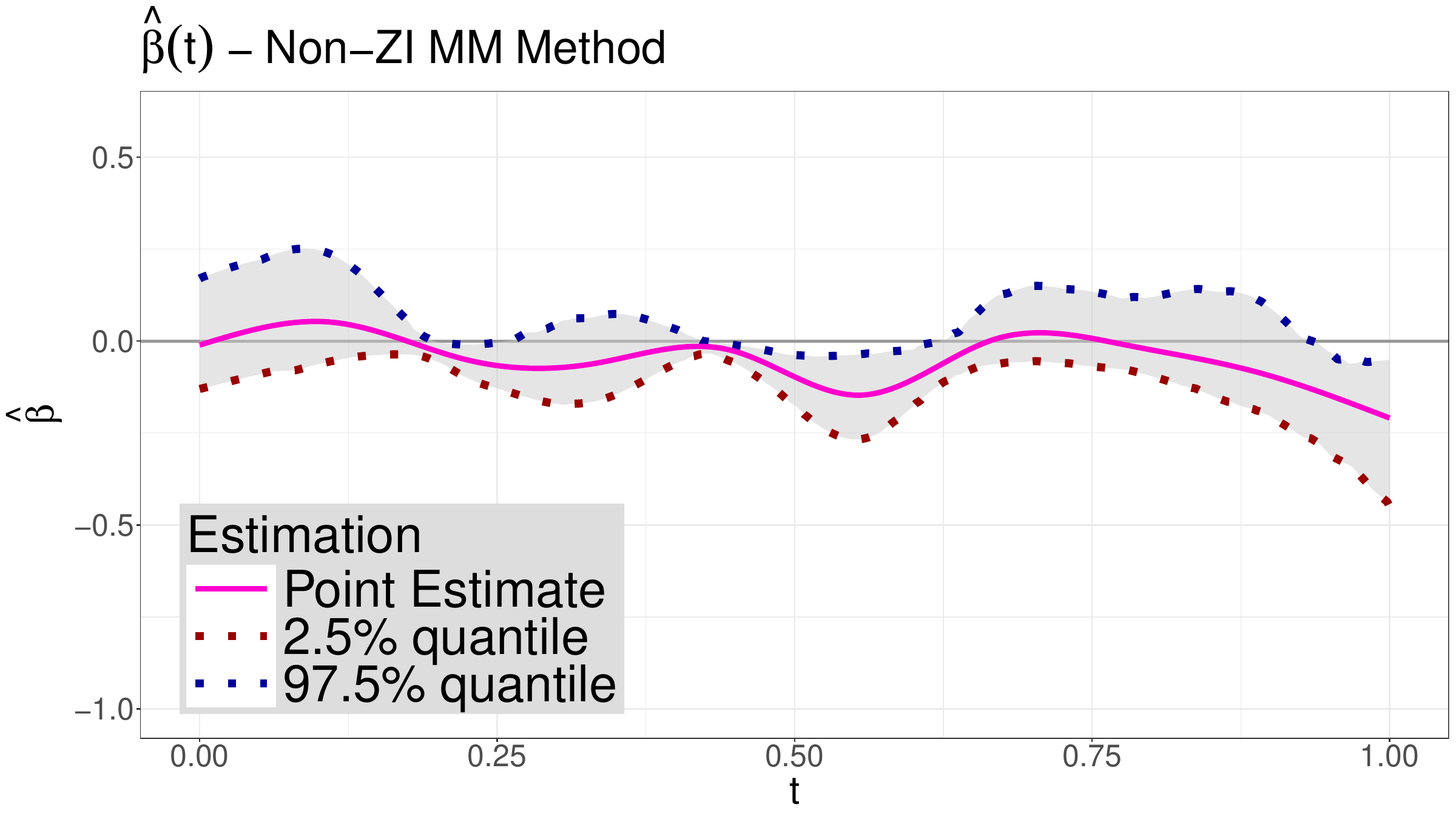}
    \caption{Estimate and confidence interval from no zero inflation mixed model based method}
    \label{subfig:ra_nZ_bY}
\end{subfigure}
\hfill
\begin{subfigure}[t]{0.47\textwidth}
    \centering
    \includegraphics[width=\textwidth]{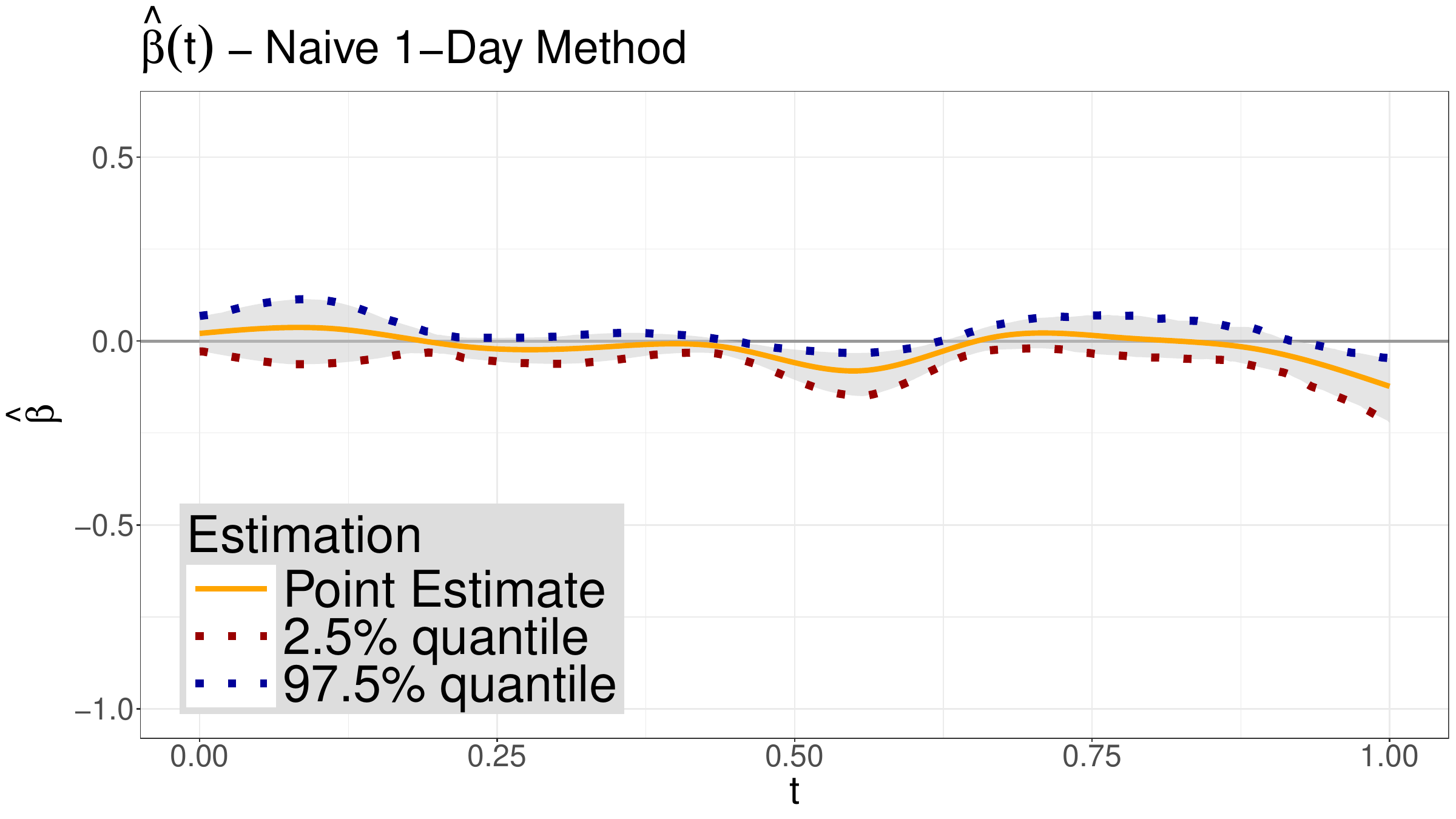}
    \caption{Estimate and confidence interval from one-day method}
    \label{subfig:ra_1d_bY}
\end{subfigure}

\caption{Estimated effect of step counts on \emph{BMI\_reduced} adjusted for age, sex, and ethnicity. Panel (\subref{subfig:ra_pe_bY}) shows the point estimate from different methods. Panels (\subref{subfig:ra_avg_bY})–(\subref{subfig:ra_1d_bY}) show the point estimates and confidence intervals of $\widehat\beta(t)$ from each method. Panels (\subref{subfig:ra_mm_bY}) and (\subref{subfig:ra_rc_bY}) correspond to our proposed bias correction methods.}
\label{fig:ra.2}
\end{figure}

\begin{comment}
We define a metric, "percent difference", to compare two different estimate of a functional parameter. 
If $\widehat{\beta}(t)$ and $\tilde{\beta}(t)$ are two estimate to $\beta(t)$ ($t\in\Omega$), we define a percent difference of $\widehat{\beta}(t)$ versus $\tilde{\beta}(t)$ as
$$\text{Percent Difference} = \frac{1}{\mu(\tilde{\Omega})}\int_{\tilde{\Omega}}\left|\frac{\widehat{\beta}(t) - \tilde{\beta}(t)}{\tilde{\beta}(t)}\right|dt\cdot 100 \%$$
where $\tilde{\Omega}:=\{t\in\Omega|\tilde{\beta}(t)\neq0\}$, 
and $\mu(\cdot)$ represent the Lebesgue measure. 

This percent difference of our proposed bias correction methods versus naive average method can show that actually product significantly different estimate.  
The table.\ref{table.pd} shows the percentage difference of our proposed methods vs, the naive average method.
\begin{table}[ht]
\centering
\begin{tabular}{rrr}
  \hline
& Mixed Model & Regression Calibration \\ 
  \hline
delta\_BMI & 169.74\% & 130.65\% \\ 
  BMI\_reduced & 100.83\% & 117.60\% \\ 
   \hline
\end{tabular}
\caption{percent difference} 
\end{table}
\end{comment}

\section{Conclusion and Discussion}\label{sec:C&D}

In this paper, we proposed a novel statistical framework for generalized linear regression with a scalar response and a zero-inflated error-prone functional predictor. The model is applicable to settings where a function-valued predictor is prone to both zero inflation and measurement errors such as step counts monitored by wearable devices. The model accommodates key challenges including excess zeros and measurement error in time-varying predictors through a semi-continuous modeling structure that leverages functional data techniques.

We developed two pointwise bias correction methods for parameter estimation: a mixed effects model-based approach and a regression calibration approach. Their performances were evaluated through extensive simulation studies and compared with several alternative estimation methods that do not address the measurement error or zero inflation issues. Results demonstrate that both proposed approaches substantially reduce estimation bias relative to the alternatives. Between the two, the mixed model-based method achieves lower bias and faster computation, while the regression calibration method exhibits better performance in terms of variance. The overall computational burden is acceptable in practice, and parallelization can be employed to further improve efficiency.
These complementary strengths make the two approaches suitable for different practical scenarios, depending on the desired trade-off between bias, variance, and computational efficiency.

The scalar-on-function regression methods have been implemented in the {\texttt{R}} package {\texttt{MECfda}} \cite{ji2024MECfda}, which also includes support for certain measurement error correction techniques (excluding the one proposed in this paper). The new methods introduced in this paper will be incorporated into future versions of the package.

The proposed estimation procedures are pointwise in nature, offering two main advantages. First, they reduce computational cost relative to methods that jointly model multiple time points. Second, they are more robust to violations of assumptions such as temporal correlation structures. However, a limitation of pointwise methods is their inability to exploit the full joint distribution of the functional predictor, which may affect statistical efficiency. 
% Developing estimation procedures that account for cross-time dependencies remains a promising direction for future research.

We applied our proposed methods to data collected from a cohort of children in Texas, drawn from a school-based childhood obesity intervention study that included health and behavioral measurements. Our analysis uncovered meaningful associations between physical activity (step count) and BMI outcomes, while accounting for measurement error and zero inflation—features commonly present in real-world accelerometer data.

Several avenues exist for extending our framework. In this study, we modeled the marginal distribution of the functional predictor at each time point as a mixture of a Gaussian and a point mass at zero. Future work could explore more flexible distributions beyond the Gaussian to better capture real-world variability. Additionally, developing estimation procedures that account for cross-time dependencies may improve statistical efficiency by leveraging the full joint distribution of the functional predictor. Bayesian approaches may enhance model flexibility, uncertainty quantification, and robustness in sparse or heterogeneous data scenarios.

Overall, our proposed framework provides a statistically rigorous and computationally feasible tool for analyzing complex functional data with excess zeros and measurement error, with broad applicability in public health and behavioral science.

\bibliographystyle{plainnat}
\bibliography{references}

\appendix
\renewcommand{\thesection}{Appendix \Alph{section}}

\section{Notations}\label{sec:notations}

Here we provide the definition of some notations used in Lemma~\ref{th:lemma1} and Theorem~\ref{theorem1}.

\begin{itemize}[itemsep=-3pt,topsep=-8pt]
  \item For an element in $L^2(\Omega)$, $\|\cdot\|_{L^2(\Omega)}$ denotes the $L^2$ norm on $\Omega$, i.e., $\|\beta\|_{L^2(\Omega)} = \left(\int_{\Omega} |\beta(t)|^2\,dt\right)^{1/2}$, abbreviated as \(\lVert\cdot\rVert_{L^{2}}\) when the domain is clear.
  \item For a measureable subset of $\mathbb{R}$, $\|\Omega\|$ denotes the Lebesgue measure of $\Omega$, i.e., $\|\Omega\| = \int_{\Omega} 1\,dt$.
  \item For a finite set \(\mathcal{T}\), \(\lvert\mathcal{T}\rvert\) denotes its cardinality; e.g., if \(\mathcal{T}=\{t_{1},\dots,t_{m}\}\) then \(\lvert\mathcal{T}\rvert=m\).
  % \item Let \(\{\rho_{k}\}_{k=1}^{\infty}\) be a complete orthonormal basis of \(L^{2}(\Omega)\).
  % \item For each subject \(i\) and index \(k\) set \(x_{ik}:=\langle X_{i},\rho_{k}\rangle =\int_{\Omega}X_{i}(t)\rho_{k}(t)\,dt\).
  % \item The true expansion coefficients are \(b_{k}:=\langle\beta,\rho_{k}\rangle\) and \(\bm{\theta}_{K}:=(b_{1},\dots,b_{K})^{\top}\).
  \item Let $\beta_K(t) = \sum_{k=1}^{K} b_k \rho_k(t)$ denote the approximation of $\beta$ onto the first $K$ basis functions. 
  % \item The least-squares estimator of $\bm{\theta}_K$, obtained by regressing $Y_i$ on $x_{i1},\dots,x_{iK}$ is \(\widetilde{\bm{\theta}}_{K} :=(\widetilde{b}_{1},\dots,\widetilde{b}_{K})^{\top}\), with associated curve \(\widetilde{\beta}_{K}(t) =\sum_{k=1}^{K}\widetilde{b}_{k}\rho_{k}(t)\).
  % \item Let \(\{\mathcal{T}_{m}\}_{m\ge1}\) be a sequence of finite grids \(\mathcal{T}_{m}=\{t_{m,1},\dots,t_{m,m}\}\subset\Omega\). After adjoining the endpoints of \(\Omega\) and re-ordering as \(t_{(0)}<t_{(1)}<\dots<t_{(m)}<t_{(m+1)}\), define the mesh width \[\Delta_{m}:=\max_{1\le j\le m+1}\bigl\{t_{(j)}-t_{(j-1)}\bigr\}.\]
  % \item For every \(i\) and \(t\in\mathcal{T}_{m}\) write \({W}_{i}(t):=(W_{i1}(t),\dots,W_{iJ_{i}}(t))^{\top}\) and set \[\widehat{X}_{i}(t):=\delta_{J_{i}}\!\bigl({W}_{i}(t)\bigr)=\mathbb{E}\!\bigl[\,X_{i}(t)\mid{W}_{i}(t)\bigr].\]
  % \item The empirical basis scores are \[\hat{x}_{ik}:=\frac{\lVert\Omega\rVert}{\lvert\mathcal{T}_{m}\rvert}\sum_{t\in\mathcal{T}_{m}}\widehat{X}_{i}(t)\,\rho_{k}(t).\]
  % \item The plug-in least-squares estimator of $\bm{\theta}_K$, obtained by regressing $Y_i$ on \(\hat{x}_{i1},\dots,\hat{x}_{iK}\) is \(\widehat{\bm{\theta}}_{K}:=(\widehat{b}_{1},\dots,\widehat{b}_{K})^{\top}\), with associated plug-in estimator \(\widehat{\beta}_{K}(t)=\sum_{k=1}^{K}\widehat{b}_{k}\rho_{k}(t)\).
\end{itemize}

\section{Proofs}\label{sec:proofs}

\subsection{Auxiliary Lemmas}

Before proving the main results we collect two auxiliary facts.

\begin{lemma}\label{th:lemma2}
Let \(X\) be a random variable and, conditional on \(X\), let
\(W_1,\dots,W_J\) be i.i.d.\ copies of a proxy variable \(W\).
Suppose that
\begin{enumerate}[label=\textbf{(E\arabic*)},align=left,itemsep=-3pt,topsep=-8pt]
  \item \(W_j \mid X=x \stackrel{\mathrm{i.i.d.}}{\sim}
        F_{W\mid X}(\,\cdot\mid x)\);
  \item for any $x_1 \neq x_2$, $F_{W\mid X = x_1} \neq F_{W\mid X = x_2}$ \textup{(identifiability)};
  \item for any $j$, $\Var(W_j|X) < \infty$.
\end{enumerate}
Then
\[
  \mathbb{E}\!\bigl[X \,\bigl|\,\widehat{W}\bigr]
  \xrightarrow{\text{a.s.}} X
  \qquad (J\to\infty),
  \quad\text{where}\;
  \widehat{W}:=(W_1,\dots,W_J)^{\top}.
\]
\end{lemma}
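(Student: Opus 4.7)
The plan is to combine L\'evy's upward martingale convergence theorem with the Glivenko--Cantelli theorem to reduce the claim to a measurability question, and then invoke the identifiability assumption~(E2) to close the argument. Define the filtration $\mathcal{F}_J := \sigma(W_1,\dots,W_J)$ with tail $\sigma$-algebra $\mathcal{F}_\infty := \sigma\bigl(\bigcup_{J\ge 1}\mathcal{F}_J\bigr)$. Since $\{\mathbb{E}[X\mid\mathcal{F}_J]\}_{J\ge 1}$ is a uniformly integrable Doob martingale (assuming implicitly that $X\in L^{1}$), L\'evy's upward theorem gives
\[
  \mathbb{E}[X\mid \widehat{W}]
  \;=\; \mathbb{E}[X\mid\mathcal{F}_J]
  \;\xrightarrow{\text{a.s.}}\;
  \mathbb{E}[X\mid\mathcal{F}_\infty].
\]
It therefore suffices to show that $X$ is $\mathcal{F}_\infty$-measurable, in which case $\mathbb{E}[X\mid\mathcal{F}_\infty] = X$ almost surely.

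The next step is to recover $X$ from the asymptotic behavior of the proxies. Conditional on $X$, assumption~(E1) says that the $W_j$ are i.i.d.\ with conditional CDF $F_{W\mid X}(\,\cdot\mid X)$, so the Glivenko--Cantelli theorem applied conditionally on $X$ (and then integrated via Fubini) yields
\[
  \sup_{w\in\mathbb{R}}\bigl|\widehat{F}_J(w) - F_{W\mid X}(w\mid X)\bigr|
  \;\xrightarrow{\text{a.s.}}\; 0,
  \qquad
  \widehat{F}_J(w) := \frac{1}{J}\sum_{j=1}^{J} I\{W_j \le w\}.
\]
For every rational $w$, $F_{W\mid X}(w\mid X)$ is thus the a.s.\ limit of an $\mathcal{F}_J$-measurable sequence and is $\mathcal{F}_\infty$-measurable. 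By right-continuity in $w$, the entire random distribution function $F_{W\mid X}(\,\cdot\mid X)$, viewed as a random element of the Polish space of CDFs (under, e.g., the L\'evy metric), is $\mathcal{F}_\infty$-measurable.

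It remains to invert the mapping $\Psi: x \mapsto F_{W\mid X}(\,\cdot\mid x)$. Assumption~(E2) says that $\Psi$ is injective. Provided $X$ takes values in a standard Borel space (the natural setting when $X$ is real-valued, as in our application), the Kuratowski--Lusin--Souslin theorem on injective measurable maps between standard Borel spaces ensures that $\Psi^{-1}$ is Borel-measurable on $\Psi(\mathcal{X})$. Writing $X = \Psi^{-1}\bigl(F_{W\mid X}(\,\cdot\mid X)\bigr)$ then shows $X$ is $\mathcal{F}_\infty$-measurable, closing the argument.

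The main obstacle will be the measurable-inverse step for $\Psi$. In the stated generality this requires descriptive-set-theoretic input, which is clean but somewhat heavier than the rest of the proof; an elementary alternative would replace it by a direct moment route, asking that some measurable functional $\mu(x) := \int h(w)\,dF_{W\mid X}(w\mid x)$ be injective in $x$, in which case the strong law yields $\frac{1}{J}\sum_{j=1}^{J} h(W_j)\to\mu(X)$ a.s.\ and $\mu$ is inverted directly. I also note that assumption~(E3) on $\Var(W_j\mid X)$ plays no role in the Glivenko--Cantelli step used above, and appears to be reserved for a subsequent $L^{2}$ refinement or a quantitative rate of convergence rather than for the a.s.\ statement proved here.
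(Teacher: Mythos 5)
The paper does not actually prove Lemma~\ref{th:lemma2}: it declares it a well-known result and omits the proof, so there is nothing to compare your argument against line by line. Your proof is correct and is essentially the standard Doob posterior-consistency argument: L\'evy's upward theorem reduces the claim to showing $X$ is $\mathcal{F}_\infty$-measurable, the conditional Glivenko--Cantelli theorem recovers the random conditional law $F_{W\mid X}(\cdot\mid X)$ from the empirical CDF, and identifiability plus Lusin--Souslin inverts the map $x\mapsto F_{W\mid X}(\cdot\mid x)$. Three small points are worth tightening. First, an a.s.\ limit of $\mathcal{F}_J$-measurable variables is in general only measurable with respect to the $P$-completion of $\mathcal{F}_\infty$; take $\limsup_J\widehat{F}_J(w)$ instead to get a genuinely $\mathcal{F}_\infty$-measurable version, which suffices since $X$ then agrees a.s.\ with an $\mathcal{F}_\infty$-measurable variable. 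Second, assumption (E2) must be read as holding for one fixed Markov-kernel version of the conditional distribution, since conditional laws are only defined up to null sets; this is implicit in your use of $\Psi$. Third, $X\in L^1$ is indeed needed for the martingale step and you correctly flag it as an implicit hypothesis. Your observation that (E3) plays no role in the Glivenko--Cantelli route is accurate; it would be consumed instead by the elementary moment-based alternative you sketch (e.g.\ $h(w)=w$, where the conditional SLLN recovers $\mathbb{E}[W\mid X]=X$ directly in the paper's application), which is likely the "well-known" argument the authors have in mind. Either route closes the lemma.
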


This Lemma~\ref{th:lemma2} is also a well known result. 
It shows that the conditional expectation of $X$ given a large number of i.i.d. observations of $W$ (if conditional on $X$) converges to $X$ almost surely, under certain conditions.
Its proof is omitted. 

\begin{lemma}
Let
\[
  h(Z)\;=\;(Z^{\top}Z)^{-1}Zy,
  \qquad Z\in\mathbb{R}^{n\times p},\;y\in\mathbb{R}^{n}.
\]
Suppose $\{Z_1^{(m)}\}_{m\ge 1}$ and $Z_2$ are design matrices with identical
dimensions such that
\[
  Z_1^{(m)}
  \xrightarrow{P}
  Z_2
  \quad
  (\text{element-wise, equivalently } \|Z_1^{(m)}-Z_2\|_F\xrightarrow{P}0).
\]
If $Z_2^{\top}Z_2$ is invertible (i.e.\ $\lambda_{\min}(Z_2^{\top}Z_2)>0$),
then
\[
  h\!\bigl(Z_1^{(m)}\bigr)\xrightarrow{P}h(Z_2).
\]
\label{th:lemma3}
\end{lemma}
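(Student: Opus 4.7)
The plan is to apply the continuous mapping theorem in a structured way, exploiting the fact that $h(Z) = (Z^{\top}Z)^{-1}Z^{\top}y$ is a composition of operations that are each continuous on an appropriate open subset of matrices. First, I would observe that matrix transposition and matrix multiplication are continuous maps between finite-dimensional Euclidean spaces, so the continuous mapping theorem for convergence in probability gives
\[
(Z_1^{(m)})^{\top}Z_1^{(m)} \xrightarrow{P} Z_2^{\top}Z_2,
\qquad
(Z_1^{(m)})^{\top}y \xrightarrow{P} Z_2^{\top}y,
\]
as $m \to \infty$. These two convergences are elementary and do not use the invertibility hypothesis.

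The main obstacle is the inversion step, because the map $A \mapsto A^{-1}$ is only defined on the open set $\mathrm{GL}_{p}(\mathbb{R})$ of invertible $p \times p$ matrices, and the matrices $(Z_1^{(m)})^{\top}Z_1^{(m)}$ need not be invertible for every $m$. To handle this I would use that $\lambda_{\min}(\cdot)$ is continuous (indeed Lipschitz in operator norm) as a function of the matrix entries. Setting $\delta := \tfrac{1}{2}\lambda_{\min}(Z_2^{\top}Z_2) > 0$ and applying the continuous mapping theorem to $\lambda_{\min}$, we obtain
\[
\Pr\!\bigl\{\lambda_{\min}\!\bigl((Z_1^{(m)})^{\top}Z_1^{(m)}\bigr) > \delta\bigr\} \longrightarrow 1.
\]
On this high-probability event $(Z_1^{(m)})^{\top}Z_1^{(m)}$ lies in $\mathrm{GL}_{p}(\mathbb{R})$, so its inverse is well defined. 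Because inversion is continuous on $\mathrm{GL}_{p}(\mathbb{R})$, a second application of the continuous mapping theorem, formalised by a truncation argument that restricts attention to the open set $\{A : \lambda_{\min}(A) > \delta\}$, yields
\[
\bigl((Z_1^{(m)})^{\top}Z_1^{(m)}\bigr)^{-1} \xrightarrow{P} (Z_2^{\top}Z_2)^{-1}.
\]

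Finally, I would combine the two convergences using Slutsky's theorem for random matrices and vectors: since $(A,v) \mapsto A v$ is jointly continuous, it follows that
\[
h\!\bigl(Z_1^{(m)}\bigr) \;=\; \bigl((Z_1^{(m)})^{\top}Z_1^{(m)}\bigr)^{-1}(Z_1^{(m)})^{\top}y \xrightarrow{P} (Z_2^{\top}Z_2)^{-1} Z_2^{\top}y \;=\; h(Z_2),
\]
which is the claim. Overall, the proof is a short chain of continuous mapping applications, and the only non-routine ingredient is the eigenvalue truncation that confines the argument to the open domain on which matrix inversion is continuous.
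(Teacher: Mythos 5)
Your proposal is correct and follows essentially the same route as the paper's proof: establish convergence of the Gram matrices, use an eigenvalue perturbation bound (your Lipschitz continuity of $\lambda_{\min}$ is exactly what the paper derives from Weyl's inequality) to get invertibility with probability tending to one and convergence of the inverses, then conclude by continuity of multiplication and Slutsky's theorem. The only cosmetic difference is that you silently correct $Zy$ to $Z^{\top}y$ in the definition of $h$, which is indeed what the dimensions require.
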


\begin{proof}[Proof of Lemma~\ref{th:lemma3}]
% \textbf{Step 1: Convergence of Gram matrices.}
For any matrices $A,B$ of the same size,
\[
  \|A^{\top}A-B^{\top}B\|_F
  \le \|A^{\top}(A-B)\|_F + \|(A-B)^{\top}B\|_F
  \le (\|A\|_F+\|B\|_F)\,\|A-B\|_F.
\]
Hence
$\|Z_1^{(m)\top}Z_1^{(m)}-Z_2^{\top}Z_2\|_F\xrightarrow{P}0$.
\medskip
% \textbf{Step 2: Preservation of invertibility (Weyl’s inequality).}
Write $A_m=Z_1^{(m)\top}Z_1^{(m)}$ and $A=Z_2^{\top}Z_2$.
Weyl’s inequality yields
\[
  \bigl|\lambda_{\min}(A_m)-\lambda_{\min}(A)\bigr|
  \le \|A_m-A\|_2
  \le \|A_m-A\|_F\xrightarrow{P}0.
\]
Because $\lambda_{\min}(A)>0$, it follows that
$\lambda_{\min}(A_m)\xrightarrow{P}\lambda_{\min}(A)>0$,
so $A_m$ is invertible with probability tending to one and
$A_m^{-1}\xrightarrow{P}A^{-1}$.
\medskip
% \textbf{Step 3: Continuous mapping and Slutsky.}
The matrix inversion map is continuous on the set of invertible matrices, and
matrix multiplication is jointly continuous. Consequently,
\[
  (Z_1^{(m)\top}Z_1^{(m)})^{-1}
  \xrightarrow{P}
  (Z_2^{\top}Z_2)^{-1},
  \qquad
  Z_1^{(m)}
  \xrightarrow{P}
  Z_2.
\]
Slutsky's theorem therefore gives
\[
  (Z_1^{(m)\top}Z_1^{(m)})^{-1}Z_1^{(m)}
  \xrightarrow{P}
  (Z_2^{\top}Z_2)^{-1}Z_2.
\]
\medskip
% \textbf{Step 4: Multiplying by $y$.}
If $y$ is deterministic (or independent of the $Z_1^{(m)}$ sequence with finite
second moment), another application of Slutsky yields
\[
  h\!\bigl(Z_1^{(m)}\bigr)
  =(Z_1^{(m)\top}Z_1^{(m)})^{-1}Z_1^{(m)}y
  \xrightarrow{P}
  (Z_2^{\top}Z_2)^{-1}Z_2y
  =h(Z_2).
\]
\end{proof}

\subsection{Proof of Theorem~\ref{theorem1}}

\begin{proof}[Proof of Theorem~\ref{theorem1}]

We decompose the estimation error using the triangle inequality,
\begin{equation}
\|\widehat{\beta}_{K_n} - \beta\|_{L^2(\Omega)} \leq \|\widehat{\beta}_{K_n} - \widetilde{\beta}_{K_n}\|_{L^2(\Omega)} + \|\widetilde{\beta}_{K_n} - \beta\|_{L^2(\Omega)}.
    % \label{eq:proof T1 eq0}
\end{equation}
Based on the Lemma~\ref{th:lemma1}, we already have $\|\widetilde{\beta}_{K_n} - \beta\|_{L^2(\Omega)} \xrightarrow{P} 0$ as $n \to \infty$. 
Therefore, we only need to show 
$\|\widehat{\beta}_{K_n} - \widetilde{\beta}_{K_n}\|_{L^2(\Omega)} \xrightarrow{P} 0$, 
equivalently, $\|\widehat{\bm\theta}_{K_n} - \widetilde{\bm\theta}_{K_n}\| \xrightarrow{P} 0$.

% Based on Lemma~\ref{th:lemma2}, for any $i$ and $t$, $\widehat{X}_i(t) \xrightarrow{a.s.} X_i(t)$ as $J_i\to\infty$

Let $\bm y := (Y_1, \ldots, Y_n)^\top$ be the vector of response
And the least-squares estimator $\widehat{\bm{\theta}}_{K_n}$ and $\widetilde{\bm{\theta}}_{K_n}$ 
can be expressed in closed form of 
$\widehat{\bm\theta}_{K_n} = ({\widehat{\bm X}}_{{K_n}}^\top {\widehat{\bm X}}_{{K_n}})^{-1} {\widehat{\bm X}}_{{K_n}}^\top \bm y$ and
$\widetilde{\bm\theta}_{K_n} = ({{\bm X}}_{{K_n}}^\top {{\bm X}}_{{K_n}})^{-1} {{\bm X}}_{{K_n}}^\top \bm y$. 
% We define a function $h(Z):=(Z^\top Z)^{-1}Z y$, where $Z$ is a design matrix. Then we have $\widehat{\bm\theta}_{K_n} = h(\widehat{\bm X}_{K_n})$ and $\widetilde{\bm\theta}_{K_n} = h({\bm X}_{K_n})$. 

Because $\bm{X}_{K_n}$ and $\widehat{\bm{X}}_{K_n}$ have full column rank, 
$\bm{X}_{K_n}^\top \bm{X}_{K_n}$ and $\widehat{\bm{X}}_{K_n}^\top \widehat{\bm{X}}_{K_n}$ are both invertable. 
Based on Lemma~\ref{th:lemma3}, to show $\|\widehat{\beta}_{K_n} - \widetilde{\beta}_{K_n}\|_{L^2(\Omega)} \xrightarrow{P} 0$, 
we only need to show that $|\hat{x}_{ik} - x_{ik}|\xrightarrow{P} 0$,
equivalently $\frac{\|\Omega\|}{|\mathcal{T}_m|}\sum_{t\in\mathcal{T}_m}\widehat{X}_i(t) \rho_k(t) \to \int_{\Omega} X_i(t) \rho_k(t)\,dt$.

We decompose $\left|\hat{x}_{ik} - x_{ik}\right|$ as follows:
\begin{align*}
\left|\hat{x}_{ik} - x_{ik}\right|
\leq
\left|\hat{x}_{ik} - A_m\right| + 
\left|A_m - x_{ik}\right|
\end{align*}
where $A_m = \frac{\|\Omega\|}{|\mathcal{T}_m|}\sum_{t\in\mathcal{T}_m}X_i(t) \rho_k(t)$.

Since $$\hat{x}_{ik} - A_m = \frac{\|\Omega\|}{|\mathcal{T}_m|}\sum_{t\in\mathcal{T}_m}\{X_i(t) - \widehat{X}_{i}(t)\} \rho_k(t),$$
% $\hat{x}_{ik}$ and $A_m$ are obtained, respectively, from $\widehat{X}_{i}(t)$ and ${X}_{i}(t)$ through the same finite sequence of elementary arithmetic operations,
and we have $\widehat{X}_{i}(t) \xrightarrow{a.s.} {X}_{i}(t)$ as $J_i\to\infty$ for any $i$ and $t$ (by Lemma~\ref{th:lemma2}),
the first term $\left|\hat{x}_{ik} - A_m\right|$ can be shown to converge to $0$ as $J_i \to \infty$ for any $i$ and $k$.

The second term $\left|A_m - x_{ik}\right|$ can be proven to converge to $0$ as $\Delta_m \to 0$ by Riemann-Lebesgue Theorem.

Combining the pieces, we have
$$\|\widehat{\beta}_{K_n} - \beta\|_{L^2(\Omega)} \xrightarrow{P} 0$$
as $n \to \infty$, $\inf_i{J_i} \to \infty$, and $m \to \infty$.

\end{proof}

(Undefined notation is summarized in Section~\ref{sec:notations}.)

\section{Simulation Results}\label{sec:more simulation results}

In the main text (Section~\ref{simulation result}), we presented selected results from the simulation study under a single set of simulation settings. In this appendix, we provide additional tables that summarize the results under a wider range of settings. 

Each table isolates the effect of one simulation parameter by varying it while keeping all other settings fixed. 
Within each table, subtable (a) reports the squared bias, and subtable (b) reports the variance of the estimator $\hat\beta(t)$. For each scenario, we report the bias and variance of the estimator $\hat\beta(t)$ to illustrate the impact of the specific setting on estimation performance.

\begin{table}[htbp]
\captionsetup{justification=raggedright}
\caption{Effect of sample size. {\normalsize{Common settings:} 
{conditional distribution of $Y|X(t),Z$:} Gaussian distribution (with identity link); 
{estimation method of $\theta(t)$:} point-wise; 
{proportion of zero values:} $\mathbb{E}\{\Pr(W_{ij}(t)=0)\}=0.335$; 
{Covariance function of $U(t)$:} squared-exponential with $\rho_u=0.2$; 
$\sigma_u = 1$; 
$q_g=0$. 
{Benchmark} = benchmark method; 
{Average} = naïve average method; 
{MM} = mixed model based method proposed in Section~\ref{subsec:predict X}; 
{RC} = regression calibration method proposed in Section~\ref{subsec:predict X}; 
{Non-ZI MM} = non-zero-inflated mixed model based method; 
{1 day} = naive one-day method.}}
\centering

\begin{subtable}{\textwidth}
\centering
\caption{Squared Bias}

\centering
\footnotesize
\begin{tabular}{r|r|r|r|r|r|r|l}
\hline
N & Benchmark & MM & RC & Average & Non-ZI MM & 1 day & basis function\\
\hline
50 & 0.0001213 & 0.003945 & 0.005530 & 0.1497 & 0.2995 & 0.3897 & Bspline\\
100 & 0.0000903 & 0.003446 & 0.005038 & 0.1445 & 0.1689 & 0.3822 & Bspline\\
200 & 0.0001204 & 0.003259 & 0.005246 & 0.1489 & 0.1318 & 0.3840 & Bspline\\
500 & 0.0001262 & 0.002772 & 0.004994 & 0.1468 & 0.1190 & 0.3842 & Bspline\\
1000 & 0.0001110 & 0.002619 & 0.004417 & 0.1494 & 0.1225 & 0.3846 & Bspline\\
2000 & 0.0001118 & 0.002687 & 0.004671 & 0.1491 & 0.1155 & 0.3844 & Bspline\\
5000 & 0.0001117 & 0.002562 & 0.004822 & 0.1485 & 0.1143 & 0.3845 & Bspline\\
50 & 0.1074799 & 0.018802 & 0.011150 & 0.1407 & 0.3793 & 0.3897 & Fourier\\
100 & 0.0128407 & 0.010145 & 0.010521 & 0.1378 & 0.1869 & 0.3822 & Fourier\\
200 & 0.0192103 & 0.005523 & 0.006574 & 0.1417 & 0.1385 & 0.3840 & Fourier\\
500 & 0.0070227 & 0.006152 & 0.008074 & 0.1397 & 0.1215 & 0.3842 & Fourier\\
1000 & 0.0001395 & 0.003398 & 0.005739 & 0.1422 & 0.1261 & 0.3846 & Fourier\\
2000 & 0.0004761 & 0.004042 & 0.006083 & 0.1420 & 0.1187 & 0.3844 & Fourier\\
5000 & 0.0005231 & 0.003902 & 0.006130 & 0.1414 & 0.1163 & 0.3845 & Fourier\\
\hline
\end{tabular}

\end{subtable}

\vspace{1em}

\begin{subtable}{\textwidth}
\centering
\caption{Variance}

\centering
\footnotesize
\begin{tabular}{r|r|r|r|r|r|r|l}
\hline
N & Benchmark & MM & RC & Average & Non-ZI MM & 1 day & basis function\\
\hline
50 & 0.0023843 & 0.135181 & 0.120203 & 0.0586 & 2.6930 & 0.0190 & Bspline\\
100 & 0.0010200 & 0.061922 & 0.049800 & 0.0269 & 0.3297 & 0.0085 & Bspline\\
200 & 0.0005190 & 0.034514 & 0.026972 & 0.0155 & 0.1311 & 0.0047 & Bspline\\
500 & 0.0002088 & 0.014465 & 0.010488 & 0.0069 & 0.0479 & 0.0018 & Bspline\\
1000 & 0.0000995 & 0.008700 & 0.005136 & 0.0046 & 0.0246 & 0.0010 & Bspline\\
2000 & 0.0000490 & 0.005206 & 0.003023 & 0.0033 & 0.0141 & 0.0007 & Bspline\\
5000 & 0.0000199 & 0.004210 & 0.001821 & 0.0026 & 0.0085 & 0.0004 & Bspline\\
50 & 21.7150811 & 1.887667 & 1.827130 & 0.3901 & 7.8791 & 0.0190 & Fourier\\
100 & 8.9110659 & 0.798691 & 0.758550 & 0.1662 & 1.5090 & 0.0085 & Fourier\\
200 & 3.9275922 & 0.419874 & 0.352485 & 0.0775 & 0.5864 & 0.0047 & Fourier\\
500 & 1.5453553 & 0.171680 & 0.135424 & 0.0322 & 0.2068 & 0.0018 & Fourier\\
1000 & 0.7198404 & 0.099063 & 0.071089 & 0.0169 & 0.0987 & 0.0010 & Fourier\\
2000 & 0.3458503 & 0.056823 & 0.037739 & 0.0105 & 0.0552 & 0.0007 & Fourier\\
5000 & 0.1465289 & 0.032680 & 0.018335 & 0.0065 & 0.0283 & 0.0004 & Fourier\\
\hline
\end{tabular}

\end{subtable}

\end{table}

\begin{table}[htbp]
\captionsetup{justification=raggedright}
\caption{Effect of deviation of U. {\normalsize{Common settings:} 
{conditional distribution of $Y|X(t),Z$:} Gaussian distribution (with identity link); 
{estimation method of $\theta(t)$:} point-wise; 
{sample size:} $n = 100$; 
{proportion of zero values:} $\mathbb{E}\{\Pr(W_{ij}(t)=0)\}=0.335$; 
{Covariance function of $U(t)$:} squared-exponential with $\rho_u=0.2$; 
$q_g=0$. 
{Benchmark} = benchmark method; 
{Average} = naïve average method; 
{MM} = mixed model based method proposed in Section~\ref{subsec:predict X}; 
{RC} = regression calibration method proposed in Section~\ref{subsec:predict X}; 
{Non-ZI MM} = non-zero-inflated mixed model based method; 
{1 day} = naive one-day method.}}
\centering

\begin{subtable}{\textwidth}
\centering
\caption{Squared Bias}

\centering
\footnotesize
\begin{tabular}{r|r|r|r|r|r|r|l}
\hline
$\sigma_u$ & Benchmark & MM & RC & Average & Non-ZI MM & 1 day & basis function\\
\hline
1 & 0.0000903 & 0.003446 & 0.005038 & 0.1445 & 0.1689 & 0.3822 & Bspline\\
2 & 0.0000903 & 0.014637 & 0.023435 & 0.1913 & 0.2151 & 0.4180 & Bspline\\
3 & 0.0000903 & 0.033851 & 0.076080 & 0.2534 & 1117.2726 & 0.4461 & Bspline\\
1 & 0.0128407 & 0.010145 & 0.010521 & 0.1378 & 0.1869 & 0.3822 & Fourier\\
2 & 0.0128407 & 0.025381 & 0.028824 & 0.1866 & 0.2678 & 0.4180 & Fourier\\
3 & 0.0128407 & 0.081917 & 0.077106 & 0.2517 & 1.8979 & 0.4461 & Fourier\\
\hline
\end{tabular}

\end{subtable}

\vspace{1em}

\begin{subtable}{\textwidth}
\centering
\caption{Variance}

\centering
\footnotesize
\begin{tabular}{r|r|r|r|r|r|r|l}
\hline
$\sigma_u$ & Benchmark & MM & RC & Average & Non-ZI MM & 1 day & basis function\\
\hline
1 & 0.001020 & 0.061922 & 0.049800 & 0.0269 & 0.3297 & 0.0085 & Bspline\\
2 & 0.001020 & 0.187533 & 0.100191 & 0.0262 & 1.0378 & 0.0068 & Bspline\\
3 & 0.001020 & 0.571430 & 0.114888 & 0.0236 & 544845.5255 & 0.0052 & Bspline\\
1 & 8.911066 & 0.798691 & 0.758550 & 0.1662 & 1.5090 & 0.0085 & Fourier\\
2 & 8.911066 & 1.574898 & 0.926606 & 0.1670 & 3.9490 & 0.0068 & Fourier\\
3 & 8.911066 & 8.294103 & 0.842241 & 0.1484 & 168.2378 & 0.0052 & Fourier\\
\hline
\end{tabular}

\end{subtable}

\end{table}

\begin{table}[htbp]
\captionsetup{justification=raggedright}
\caption{Effect of correlation of u. {\normalsize{Common settings:} 
{conditional distribution of $Y|X(t),Z$:} Gaussian distribution (with identity link); 
{estimation method of $\theta(t)$:} point-wise; 
{sample size:} $n = 100$; 
{proportion of zero values:} $\mathbb{E}\{\Pr(W_{ij}(t)=0)\}=0.335$; 
$\sigma_u = 1$; 
$q_g=0$. 
{Benchmark} = benchmark method; 
{Average} = naïve average method; 
{MM} = mixed model based method proposed in Section~\ref{subsec:predict X}; 
{RC} = regression calibration method proposed in Section~\ref{subsec:predict X}; 
{Non-ZI MM} = non-zero-inflated mixed model based method; 
{1 day} = naive one-day method.}}
\centering

\begin{subtable}{\textwidth}
\centering
\caption{Squared Bias}

\centering
\footnotesize
\begin{tabular}{l|r|r|r|r|r|r|r|l}
\hline
correlation function $U$ & $\rho_u$ & Benchmark & MM & RC & Average & Non-ZI MM & 1 day & basis function\\
\hline
AR1 & 0.20 & 0.0000903 & 0.011834 & 0.008028 & 0.1305 & 0.2135 & 0.3678 & Bspline\\
AR1 & 0.40 & 0.0000903 & 0.009961 & 0.006499 & 0.1323 & 0.2004 & 0.3697 & Bspline\\
AR1 & 0.60 & 0.0000903 & 0.007081 & 0.004566 & 0.1361 & 0.1959 & 0.3734 & Bspline\\
Compound symmetry & 0.20 & 0.0000903 & 0.013329 & 0.009169 & 0.1295 & 0.2217 & 0.3664 & Bspline\\
Compound symmetry & 0.40 & 0.0000903 & 0.013075 & 0.008865 & 0.1302 & 0.2255 & 0.3662 & Bspline\\
Compound symmetry & 0.60 & 0.0000903 & 0.012792 & 0.008586 & 0.1304 & 0.2263 & 0.3670 & Bspline\\
Squared Exponential & 0.10 & 0.0000903 & 0.005598 & 0.004456 & 0.1427 & 0.1840 & 0.3790 & Bspline\\
Squared Exponential & 0.15 & 0.0000903 & 0.003775 & 0.003768 & 0.1448 & 0.1764 & 0.3812 & Bspline\\
Squared Exponential & 0.20 & 0.0000903 & 0.003446 & 0.005038 & 0.1445 & 0.1689 & 0.3822 & Bspline\\
AR1 & 0.20 & 0.0128407 & 0.011961 & 0.008230 & 0.1233 & 0.2384 & 0.3678 & Fourier\\
AR1 & 0.40 & 0.0128407 & 0.009733 & 0.006574 & 0.1241 & 0.2179 & 0.3697 & Fourier\\
AR1 & 0.60 & 0.0128407 & 0.006987 & 0.005004 & 0.1282 & 0.2117 & 0.3734 & Fourier\\
Compound symmetry & 0.20 & 0.0128407 & 0.013377 & 0.009305 & 0.1214 & 0.2515 & 0.3664 & Fourier\\
Compound symmetry & 0.40 & 0.0128407 & 0.013624 & 0.009515 & 0.1227 & 0.2598 & 0.3662 & Fourier\\
Compound symmetry & 0.60 & 0.0128407 & 0.013076 & 0.009213 & 0.1219 & 0.2699 & 0.3670 & Fourier\\
Squared Exponential & 0.10 & 0.0128407 & 0.009087 & 0.007798 & 0.1367 & 0.2050 & 0.3790 & Fourier\\
Squared Exponential & 0.15 & 0.0128407 & 0.008810 & 0.009966 & 0.1377 & 0.1912 & 0.3812 & Fourier\\
Squared Exponential & 0.20 & 0.0128407 & 0.010145 & 0.010521 & 0.1378 & 0.1869 & 0.3822 & Fourier\\
\hline
\end{tabular}

\end{subtable}

\vspace{1em}

\begin{subtable}{\textwidth}
\centering
\caption{Variance}

\centering
\footnotesize
\begin{tabular}{l|r|r|r|r|r|r|r|l}
\hline
correlation function $U$ & $\rho_u$ & Benchmark & MM & RC & Average & Non-ZI MM & 1 day & basis function\\
\hline
AR1 & 0.20 & 0.001020 & 0.021621 & 0.013414 & 0.0252 & 0.3825 & 0.0085 & Bspline\\
AR1 & 0.40 & 0.001020 & 0.024735 & 0.016722 & 0.0252 & 0.3101 & 0.0083 & Bspline\\
AR1 & 0.60 & 0.001020 & 0.032121 & 0.023158 & 0.0252 & 0.2978 & 0.0081 & Bspline\\
Compound symmetry & 0.20 & 0.001020 & 0.022135 & 0.012388 & 0.0259 & 0.3221 & 0.0086 & Bspline\\
Compound symmetry & 0.40 & 0.001020 & 0.024029 & 0.014656 & 0.0258 & 0.3433 & 0.0085 & Bspline\\
Compound symmetry & 0.60 & 0.001020 & 0.027675 & 0.017451 & 0.0261 & 0.3292 & 0.0088 & Bspline\\
Squared Exponential & 0.10 & 0.001020 & 0.036727 & 0.030305 & 0.0255 & 0.3730 & 0.0080 & Bspline\\
Squared Exponential & 0.15 & 0.001020 & 0.052211 & 0.043760 & 0.0262 & 0.3314 & 0.0084 & Bspline\\
Squared Exponential & 0.20 & 0.001020 & 0.061922 & 0.049800 & 0.0269 & 0.3297 & 0.0085 & Bspline\\
AR1 & 0.20 & 8.911066 & 0.095675 & 0.066729 & 0.1245 & 1.2455 & 0.0085 & Fourier\\
AR1 & 0.40 & 8.911066 & 0.122368 & 0.092885 & 0.1246 & 1.2295 & 0.0083 & Fourier\\
AR1 & 0.60 & 8.911066 & 0.174827 & 0.141824 & 0.1297 & 1.2172 & 0.0081 & Fourier\\
Compound symmetry & 0.20 & 8.911066 & 0.114145 & 0.070892 & 0.1298 & 1.5421 & 0.0086 & Fourier\\
Compound symmetry & 0.40 & 8.911066 & 0.134355 & 0.090184 & 0.1370 & 1.6842 & 0.0085 & Fourier\\
Compound symmetry & 0.60 & 8.911066 & 0.172675 & 0.126338 & 0.1428 & 2.5201 & 0.0088 & Fourier\\
Squared Exponential & 0.10 & 8.911066 & 0.364757 & 0.330020 & 0.1482 & 1.3609 & 0.0080 & Fourier\\
Squared Exponential & 0.15 & 8.911066 & 0.694774 & 0.643891 & 0.1638 & 1.4703 & 0.0084 & Fourier\\
Squared Exponential & 0.20 & 8.911066 & 0.798691 & 0.758550 & 0.1662 & 1.5090 & 0.0085 & Fourier\\
\hline
\end{tabular}

\end{subtable}

\end{table}

\begin{table}[htbp]
\captionsetup{justification=raggedright}
\caption{Effect of proportion of zero values in W. {\normalsize{Common settings:} 
{conditional distribution of $Y|X(t),Z$:} Gaussian distribution (with identity link); 
{estimation method of $\theta(t)$:} point-wise; 
{sample size:} $n = 100$; 
{Covariance function of $U(t)$:} squared-exponential with $\rho_u=0.2$; 
$\sigma_u = 1$; 
$q_g=0$. 
{Benchmark} = benchmark method; 
{Average} = naïve average method; 
{MM} = mixed model based method proposed in Section~\ref{subsec:predict X}; 
{RC} = regression calibration method proposed in Section~\ref{subsec:predict X}; 
{Non-ZI MM} = non-zero-inflated mixed model based method; 
{1 day} = naive one-day method.}}
\centering

\begin{subtable}{\textwidth}
\centering
\caption{Squared Bias}

\centering
\footnotesize
\begin{tabular}{r|r|r|r|r|r|r|l}
\hline
E(Pr{W=0}) & Benchmark & MM & RC & Average & Non-ZI MM & 1 day & basis function\\
\hline
0.255 & 0.0000903 & 0.006718 & 0.007163 & 0.1074 & 0.0986 & 0.3451 & Bspline\\
0.294 & 0.0000903 & 0.004593 & 0.005817 & 0.1258 & 0.1323 & 0.3641 & Bspline\\
0.335 & 0.0000903 & 0.003446 & 0.005038 & 0.1445 & 0.1689 & 0.3822 & Bspline\\
0.403 & 0.0000903 & 0.002542 & 0.003985 & 0.1770 & 0.2860 & 0.4056 & Bspline\\
0.255 & 0.0128407 & 0.012092 & 0.011912 & 0.1034 & 0.1074 & 0.3451 & Fourier\\
0.294 & 0.0128407 & 0.011797 & 0.011083 & 0.1192 & 0.1371 & 0.3641 & Fourier\\
0.335 & 0.0128407 & 0.010145 & 0.010521 & 0.1378 & 0.1869 & 0.3822 & Fourier\\
0.403 & 0.0128407 & 0.005073 & 0.008999 & 0.1684 & 0.3403 & 0.4056 & Fourier\\
\hline
\end{tabular}

\end{subtable}

\vspace{1em}

\begin{subtable}{\textwidth}
\centering
\caption{Variance}

\centering
\footnotesize
\begin{tabular}{r|r|r|r|r|r|r|l}
\hline
E(Pr{W=0}) & Benchmark & MM & RC & Average & Non-ZI MM & 1 day & basis function\\
\hline
0.255 & 0.001020 & 0.056255 & 0.050114 & 0.0313 & 0.1827 & 0.0120 & Bspline\\
0.294 & 0.001020 & 0.059960 & 0.050266 & 0.0293 & 0.2373 & 0.0099 & Bspline\\
0.335 & 0.001020 & 0.061922 & 0.049800 & 0.0269 & 0.3297 & 0.0085 & Bspline\\
0.403 & 0.001020 & 0.075954 & 0.051316 & 0.0231 & 0.8235 & 0.0065 & Bspline\\
0.255 & 8.911066 & 1.057866 & 0.993120 & 0.1829 & 0.9043 & 0.0120 & Fourier\\
0.294 & 8.911066 & 0.941326 & 0.878270 & 0.1721 & 1.1069 & 0.0099 & Fourier\\
0.335 & 8.911066 & 0.798691 & 0.758550 & 0.1662 & 1.5090 & 0.0085 & Fourier\\
0.403 & 8.911066 & 0.641815 & 0.604558 & 0.1429 & 3.2846 & 0.0065 & Fourier\\
\hline
\end{tabular}

\end{subtable}

\end{table}

\begin{table}[htbp]
\captionsetup{justification=raggedright}
\caption{Effect of $q_g$ for $G(t)$ in Eq~\eqref{eq:sim G}. {\normalsize{Common settings:} 
{conditional distribution of $Y|X(t),Z$:} Gaussian distribution (with identity link); 
{estimation method of $\theta(t)$:} point-wise; 
{sample size:} $n = 100$; 
{proportion of zero values:} $\mathbb{E}\{\Pr(W_{ij}(t)=0)\}=0.335$; 
{Covariance function of $U(t)$:} squared-exponential with $\rho_u=0.2$; 
$\sigma_u = 1$; 
{Benchmark} = benchmark method; 
{Average} = naïve average method; 
{MM} = mixed model based method proposed in Section~\ref{subsec:predict X}; 
{RC} = regression calibration method proposed in Section~\ref{subsec:predict X}; 
{Non-ZI MM} = non-zero-inflated mixed model based method; 
{1 day} = naive one-day method.}}
\centering

\begin{subtable}{\textwidth}
\centering
\caption{Squared Bias}

\centering
\footnotesize
\begin{tabular}{r|r|r|r|r|r|r|l}
\hline
$q_g$ & Benchmark & MM & RC & Average & Non-ZI MM & 1 day & basis function\\
\hline
0.0 & 0.0000903 & 0.003446 & 0.005038 & 0.1445 & 0.1689 & 0.3822 & Bspline\\
0.2 & 0.0000903 & 0.002033 & 0.004820 & 0.1618 & 0.0827 & 0.3818 & Bspline\\
0.4 & 0.0000903 & 0.004155 & 0.004004 & 0.2052 & 0.0951 & 0.3820 & Bspline\\
0.0 & 0.0128407 & 0.010145 & 0.010521 & 0.1378 & 0.1869 & 0.3822 & Fourier\\
0.2 & 0.0128407 & 0.012550 & 0.016378 & 0.1536 & 0.0764 & 0.3818 & Fourier\\
0.4 & 0.0128407 & 0.003063 & 0.008264 & 0.1970 & 0.0844 & 0.3820 & Fourier\\
\hline
\end{tabular}

\end{subtable}

\vspace{1em}

\begin{subtable}{\textwidth}
\centering
\caption{Variance}

\centering
\footnotesize
\begin{tabular}{r|r|r|r|r|r|r|l}
\hline
$q_g$ & Benchmark & MM & RC & Average & Non-ZI MM & 1 day & basis function\\
\hline
0.0 & 0.001020 & 0.061922 & 0.049800 & 0.0269 & 0.3297 & 0.0085 & Bspline\\
0.2 & 0.001020 & 0.071695 & 0.049681 & 0.0256 & 0.1898 & 0.0082 & Bspline\\
0.4 & 0.001020 & 0.095285 & 0.049205 & 0.0233 & 0.0789 & 0.0081 & Bspline\\
0.0 & 8.911066 & 0.798691 & 0.758550 & 0.1662 & 1.5090 & 0.0085 & Fourier\\
0.2 & 8.911066 & 0.746826 & 0.713728 & 0.1547 & 0.9039 & 0.0082 & Fourier\\
0.4 & 8.911066 & 0.713032 & 0.664500 & 0.1687 & 0.5337 & 0.0081 & Fourier\\
\hline
\end{tabular}

\end{subtable}

\end{table}

\begin{table}[htbp]
\captionsetup{justification=raggedright}
\caption{Effect of sample size. {\normalsize{Common settings:} 
{conditional distribution of $Y|X(t),Z$:} Gaussian distribution (with identity link); 
smoothed point-wise; 
{proportion of zero values:} $\mathbb{E}\{\Pr(W_{ij}(t)=0)\}=0.335$; 
{Covariance function of $U(t)$:} squared-exponential with $\rho_u=0.2$; 
$\sigma_u = 1$; 
$q_g=0$. 
{Benchmark} = benchmark method; 
{Average} = naïve average method; 
{MM} = mixed model based method proposed in Section~\ref{subsec:predict X}; 
{RC} = regression calibration method proposed in Section~\ref{subsec:predict X}; 
{Non-ZI MM} = non-zero-inflated mixed model based method; 
{1 day} = naive one-day method.}}
\centering

\begin{subtable}{\textwidth}
\centering
\caption{Squared Bias}

\centering
\footnotesize
\begin{tabular}{r|r|r|r|r|r|r|l}
\hline
N & Benchmark & MM & RC & Average & Non-ZI MM & 1 day & basis function\\
\hline
50 & 0.0001213 & 0.003914 & 0.005610 & 0.1497 & 0.2995 & 0.3897 & Bspline\\
100 & 0.0000903 & 0.003400 & 0.005045 & 0.1445 & 0.1689 & 0.3822 & Bspline\\
200 & 0.0001204 & 0.003118 & 0.005186 & 0.1489 & 0.1318 & 0.3840 & Bspline\\
500 & 0.0001262 & 0.002646 & 0.004883 & 0.1468 & 0.1190 & 0.3842 & Bspline\\
1000 & 0.0001110 & 0.002508 & 0.004310 & 0.1494 & 0.1225 & 0.3846 & Bspline\\
50 & 0.1074799 & 0.017861 & 0.011053 & 0.1407 & 0.3793 & 0.3897 & Fourier\\
100 & 0.0128407 & 0.009351 & 0.009639 & 0.1378 & 0.1869 & 0.3822 & Fourier\\
200 & 0.0192103 & 0.005651 & 0.006818 & 0.1417 & 0.1385 & 0.3840 & Fourier\\
500 & 0.0070227 & 0.006278 & 0.008148 & 0.1397 & 0.1215 & 0.3842 & Fourier\\
1000 & 0.0001395 & 0.003415 & 0.005745 & 0.1422 & 0.1261 & 0.3846 & Fourier\\
\hline
\end{tabular}

\end{subtable}

\vspace{1em}

\begin{subtable}{\textwidth}
\centering
\caption{Variance}

\centering
\footnotesize
\begin{tabular}{r|r|r|r|r|r|r|l}
\hline
N & Benchmark & MM & RC & Average & Non-ZI MM & 1 day & basis function\\
\hline
50 & 0.0023843 & 0.135726 & 0.119692 & 0.0586 & 2.6930 & 0.0190 & Bspline\\
100 & 0.0010200 & 0.062859 & 0.050002 & 0.0269 & 0.3297 & 0.0085 & Bspline\\
200 & 0.0005190 & 0.034582 & 0.026807 & 0.0155 & 0.1311 & 0.0047 & Bspline\\
500 & 0.0002088 & 0.014532 & 0.010509 & 0.0069 & 0.0479 & 0.0018 & Bspline\\
1000 & 0.0000995 & 0.008773 & 0.005209 & 0.0046 & 0.0246 & 0.0010 & Bspline\\
50 & 21.7150811 & 1.982048 & 1.896337 & 0.3901 & 7.8791 & 0.0190 & Fourier\\
100 & 8.9110659 & 0.808248 & 0.760005 & 0.1662 & 1.5090 & 0.0085 & Fourier\\
200 & 3.9275922 & 0.420307 & 0.355182 & 0.0775 & 0.5864 & 0.0047 & Fourier\\
500 & 1.5453553 & 0.171809 & 0.134104 & 0.0322 & 0.2068 & 0.0018 & Fourier\\
1000 & 0.7198404 & 0.099889 & 0.071308 & 0.0169 & 0.0987 & 0.0010 & Fourier\\
\hline
\end{tabular}

\end{subtable}

\end{table}

\begin{table}[htbp]
\captionsetup{justification=raggedright}
\caption{Effect of deviation of U. {\normalsize{Common settings:} 
{conditional distribution of $Y|X(t),Z$:} Gaussian distribution (with identity link); 
smoothed point-wise; 
{sample size:} $n = 100$; 
{proportion of zero values:} $\mathbb{E}\{\Pr(W_{ij}(t)=0)\}=0.335$; 
{Covariance function of $U(t)$:} squared-exponential with $\rho_u=0.2$; 
$q_g=0$. 
{Benchmark} = benchmark method; 
{Average} = naïve average method; 
{MM} = mixed model based method proposed in Section~\ref{subsec:predict X}; 
{RC} = regression calibration method proposed in Section~\ref{subsec:predict X}; 
{Non-ZI MM} = non-zero-inflated mixed model based method; 
{1 day} = naive one-day method.}}
\centering

\begin{subtable}{\textwidth}
\centering
\caption{Squared Bias}

\centering
\footnotesize
\begin{tabular}{r|r|r|r|r|r|r|l}
\hline
$\sigma_u$ & Benchmark & MM & RC & Average & Non-ZI MM & 1 day & basis function\\
\hline
1 & 0.0000903 & 0.003400 & 0.005045 & 0.1445 & 0.1689 & 0.3822 & Bspline\\
2 & 0.0000903 & 0.014826 & 0.023468 & 0.1913 & 0.2151 & 0.4180 & Bspline\\
3 & 0.0000903 & 0.034820 & 0.076291 & 0.2534 & 1117.2726 & 0.4461 & Bspline\\
1 & 0.0128407 & 0.009351 & 0.009639 & 0.1378 & 0.1869 & 0.3822 & Fourier\\
2 & 0.0128407 & 0.025021 & 0.028312 & 0.1866 & 0.2678 & 0.4180 & Fourier\\
3 & 0.0128407 & 0.097736 & 0.077173 & 0.2517 & 1.8979 & 0.4461 & Fourier\\
\hline
\end{tabular}

\end{subtable}

\vspace{1em}

\begin{subtable}{\textwidth}
\centering
\caption{Variance}

\centering
\footnotesize
\begin{tabular}{r|r|r|r|r|r|r|l}
\hline
$\sigma_u$ & Benchmark & MM & RC & Average & Non-ZI MM & 1 day & basis function\\
\hline
1 & 0.001020 & 0.062859 & 0.050002 & 0.0269 & 0.3297 & 0.0085 & Bspline\\
2 & 0.001020 & 0.189413 & 0.100581 & 0.0262 & 1.0378 & 0.0068 & Bspline\\
3 & 0.001020 & 0.577805 & 0.114981 & 0.0236 & 544845.5255 & 0.0052 & Bspline\\
1 & 8.911066 & 0.808248 & 0.760005 & 0.1662 & 1.5090 & 0.0085 & Fourier\\
2 & 8.911066 & 1.583932 & 0.928256 & 0.1670 & 3.9490 & 0.0068 & Fourier\\
3 & 8.911066 & 9.735902 & 0.842110 & 0.1484 & 168.2378 & 0.0052 & Fourier\\
\hline
\end{tabular}

\end{subtable}

\end{table}

\begin{table}[htbp]
\captionsetup{justification=raggedright}
\caption{Effect of correlation of u. {\normalsize{Common settings:} 
{conditional distribution of $Y|X(t),Z$:} Gaussian distribution (with identity link); 
smoothed point-wise; 
{sample size:} $n = 100$; 
{proportion of zero values:} $\mathbb{E}\{\Pr(W_{ij}(t)=0)\}=0.335$; 
$\sigma_u = 1$; 
$q_g=0$. 
{Benchmark} = benchmark method; 
{Average} = naïve average method; 
{MM} = mixed model based method proposed in Section~\ref{subsec:predict X}; 
{RC} = regression calibration method proposed in Section~\ref{subsec:predict X}; 
{Non-ZI MM} = non-zero-inflated mixed model based method; 
{1 day} = naive one-day method.}}
\centering

\begin{subtable}{\textwidth}
\centering
\caption{Squared Bias}

\centering
\footnotesize
\begin{tabular}{l|r|r|r|r|r|r|r|l}
\hline
correlation function $U$ & $\rho_u$ & Benchmark & MM & RC & Average & Non-ZI MM & 1 day & basis function\\
\hline
AR1 & 0.20 & 0.0000903 & 0.011842 & 0.008007 & 0.1305 & 0.2135 & 0.3678 & Bspline\\
AR1 & 0.40 & 0.0000903 & 0.009996 & 0.006483 & 0.1323 & 0.2004 & 0.3697 & Bspline\\
AR1 & 0.60 & 0.0000903 & 0.007109 & 0.004566 & 0.1361 & 0.1959 & 0.3734 & Bspline\\
Compound symmetry & 0.20 & 0.0000903 & 0.013358 & 0.009150 & 0.1295 & 0.2217 & 0.3664 & Bspline\\
Compound symmetry & 0.40 & 0.0000903 & 0.013094 & 0.008834 & 0.1302 & 0.2255 & 0.3662 & Bspline\\
Compound symmetry & 0.60 & 0.0000903 & 0.012801 & 0.008557 & 0.1304 & 0.2263 & 0.3670 & Bspline\\
Squared Exponential & 0.10 & 0.0000903 & 0.005626 & 0.004464 & 0.1427 & 0.1840 & 0.3790 & Bspline\\
Squared Exponential & 0.15 & 0.0000903 & 0.003774 & 0.003764 & 0.1448 & 0.1764 & 0.3812 & Bspline\\
Squared Exponential & 0.20 & 0.0000903 & 0.003400 & 0.005045 & 0.1445 & 0.1689 & 0.3822 & Bspline\\
AR1 & 0.20 & 0.0128407 & 0.011971 & 0.008180 & 0.1233 & 0.2384 & 0.3678 & Fourier\\
AR1 & 0.40 & 0.0128407 & 0.009788 & 0.006552 & 0.1241 & 0.2179 & 0.3697 & Fourier\\
AR1 & 0.60 & 0.0128407 & 0.007047 & 0.005011 & 0.1282 & 0.2117 & 0.3734 & Fourier\\
Compound symmetry & 0.20 & 0.0128407 & 0.013439 & 0.009299 & 0.1214 & 0.2515 & 0.3664 & Fourier\\
Compound symmetry & 0.40 & 0.0128407 & 0.013689 & 0.009518 & 0.1227 & 0.2598 & 0.3662 & Fourier\\
Compound symmetry & 0.60 & 0.0128407 & 0.013263 & 0.009373 & 0.1219 & 0.2699 & 0.3670 & Fourier\\
Squared Exponential & 0.10 & 0.0128407 & 0.009227 & 0.007960 & 0.1367 & 0.2050 & 0.3790 & Fourier\\
Squared Exponential & 0.15 & 0.0128407 & 0.009174 & 0.010184 & 0.1377 & 0.1912 & 0.3812 & Fourier\\
Squared Exponential & 0.20 & 0.0128407 & 0.009351 & 0.009639 & 0.1378 & 0.1869 & 0.3822 & Fourier\\
\hline
\end{tabular}

\end{subtable}

\vspace{1em}

\begin{subtable}{\textwidth}
\centering
\caption{Variance}

\centering
\footnotesize
\begin{tabular}{l|r|r|r|r|r|r|r|l}
\hline
correlation function $U$ & $\rho_u$ & Benchmark & MM & RC & Average & Non-ZI MM & 1 day & basis function\\
\hline
AR1 & 0.20 & 0.001020 & 0.021910 & 0.013392 & 0.0252 & 0.3825 & 0.0085 & Bspline\\
AR1 & 0.40 & 0.001020 & 0.025064 & 0.016814 & 0.0252 & 0.3101 & 0.0083 & Bspline\\
AR1 & 0.60 & 0.001020 & 0.032598 & 0.023383 & 0.0252 & 0.2978 & 0.0081 & Bspline\\
Compound symmetry & 0.20 & 0.001020 & 0.022411 & 0.012457 & 0.0259 & 0.3221 & 0.0086 & Bspline\\
Compound symmetry & 0.40 & 0.001020 & 0.024365 & 0.014776 & 0.0258 & 0.3433 & 0.0085 & Bspline\\
Compound symmetry & 0.60 & 0.001020 & 0.028158 & 0.017721 & 0.0261 & 0.3292 & 0.0088 & Bspline\\
Squared Exponential & 0.10 & 0.001020 & 0.036899 & 0.030264 & 0.0255 & 0.3730 & 0.0080 & Bspline\\
Squared Exponential & 0.15 & 0.001020 & 0.052564 & 0.043696 & 0.0262 & 0.3314 & 0.0084 & Bspline\\
Squared Exponential & 0.20 & 0.001020 & 0.062859 & 0.050002 & 0.0269 & 0.3297 & 0.0085 & Bspline\\
AR1 & 0.20 & 8.911066 & 0.096569 & 0.066691 & 0.1245 & 1.2455 & 0.0085 & Fourier\\
AR1 & 0.40 & 8.911066 & 0.123998 & 0.093512 & 0.1246 & 1.2295 & 0.0083 & Fourier\\
AR1 & 0.60 & 8.911066 & 0.176484 & 0.142422 & 0.1297 & 1.2172 & 0.0081 & Fourier\\
Compound symmetry & 0.20 & 8.911066 & 0.115153 & 0.070980 & 0.1298 & 1.5421 & 0.0086 & Fourier\\
Compound symmetry & 0.40 & 8.911066 & 0.135990 & 0.090871 & 0.1370 & 1.6842 & 0.0085 & Fourier\\
Compound symmetry & 0.60 & 8.911066 & 0.175110 & 0.126550 & 0.1428 & 2.5201 & 0.0088 & Fourier\\
Squared Exponential & 0.10 & 8.911066 & 0.370782 & 0.334086 & 0.1482 & 1.3609 & 0.0080 & Fourier\\
Squared Exponential & 0.15 & 8.911066 & 0.710734 & 0.652240 & 0.1638 & 1.4703 & 0.0084 & Fourier\\
Squared Exponential & 0.20 & 8.911066 & 0.808248 & 0.760005 & 0.1662 & 1.5090 & 0.0085 & Fourier\\
\hline
\end{tabular}

\end{subtable}

\end{table}

\begin{table}[htbp]
\captionsetup{justification=raggedright}
\caption{Effect of proportion of zero values in W. {\normalsize{Common settings:} 
{conditional distribution of $Y|X(t),Z$:} Gaussian distribution (with identity link); 
smoothed point-wise; 
{sample size:} $n = 100$; 
{Covariance function of $U(t)$:} squared-exponential with $\rho_u=0.2$; 
$\sigma_u = 1$; 
$q_g=0$. 
{Benchmark} = benchmark method; 
{Average} = naïve average method; 
{MM} = mixed model based method proposed in Section~\ref{subsec:predict X}; 
{RC} = regression calibration method proposed in Section~\ref{subsec:predict X}; 
{Non-ZI MM} = non-zero-inflated mixed model based method; 
{1 day} = naive one-day method.}}
\centering

\begin{subtable}{\textwidth}
\centering
\caption{Squared Bias}

\centering
\footnotesize
\begin{tabular}{r|r|r|r|r|r|r|l}
\hline
E(Pr{W=0}) & Benchmark & MM & RC & Average & Non-ZI MM & 1 day & basis function\\
\hline
0.255 & 0.0000903 & 0.006585 & 0.007092 & 0.1074 & 0.0986 & 0.3451 & Bspline\\
0.294 & 0.0000903 & 0.004534 & 0.005764 & 0.1258 & 0.1323 & 0.3641 & Bspline\\
0.335 & 0.0000903 & 0.003400 & 0.005045 & 0.1445 & 0.1689 & 0.3822 & Bspline\\
0.403 & 0.0000903 & 0.002549 & 0.003986 & 0.1770 & 0.2860 & 0.4056 & Bspline\\
0.255 & 0.0128407 & 0.011672 & 0.011569 & 0.1034 & 0.1074 & 0.3451 & Fourier\\
0.294 & 0.0128407 & 0.011333 & 0.010548 & 0.1192 & 0.1371 & 0.3641 & Fourier\\
0.335 & 0.0128407 & 0.009351 & 0.009639 & 0.1378 & 0.1869 & 0.3822 & Fourier\\
0.403 & 0.0128407 & 0.005167 & 0.009180 & 0.1684 & 0.3403 & 0.4056 & Fourier\\
\hline
\end{tabular}

\end{subtable}

\vspace{1em}

\begin{subtable}{\textwidth}
\centering
\caption{Variance}

\centering
\footnotesize
\begin{tabular}{r|r|r|r|r|r|r|l}
\hline
E(Pr{W=0}) & Benchmark & MM & RC & Average & Non-ZI MM & 1 day & basis function\\
\hline
0.255 & 0.001020 & 0.056708 & 0.050282 & 0.0313 & 0.1827 & 0.0120 & Bspline\\
0.294 & 0.001020 & 0.060766 & 0.050660 & 0.0293 & 0.2373 & 0.0099 & Bspline\\
0.335 & 0.001020 & 0.062859 & 0.050002 & 0.0269 & 0.3297 & 0.0085 & Bspline\\
0.403 & 0.001020 & 0.077295 & 0.051715 & 0.0231 & 0.8235 & 0.0065 & Bspline\\
0.255 & 8.911066 & 1.071181 & 0.998176 & 0.1829 & 0.9043 & 0.0120 & Fourier\\
0.294 & 8.911066 & 0.950327 & 0.877272 & 0.1721 & 1.1069 & 0.0099 & Fourier\\
0.335 & 8.911066 & 0.808248 & 0.760005 & 0.1662 & 1.5090 & 0.0085 & Fourier\\
0.403 & 8.911066 & 0.655488 & 0.615357 & 0.1429 & 3.2846 & 0.0065 & Fourier\\
\hline
\end{tabular}

\end{subtable}

\end{table}

\begin{table}[htbp]
\captionsetup{justification=raggedright}
\caption{Effect of $q_g$ for $G(t)$ in Eq~\eqref{eq:sim G}. {\normalsize{Common settings:} 
{conditional distribution of $Y|X(t),Z$:} Gaussian distribution (with identity link); 
smoothed point-wise; 
{sample size:} $n = 100$; 
{proportion of zero values:} $\mathbb{E}\{\Pr(W_{ij}(t)=0)\}=0.335$; 
{Covariance function of $U(t)$:} squared-exponential with $\rho_u=0.2$; 
$\sigma_u = 1$; 
{Benchmark} = benchmark method; 
{Average} = naïve average method; 
{MM} = mixed model based method proposed in Section~\ref{subsec:predict X}; 
{RC} = regression calibration method proposed in Section~\ref{subsec:predict X}; 
{Non-ZI MM} = non-zero-inflated mixed model based method; 
{1 day} = naive one-day method.}}
\centering

\begin{subtable}{\textwidth}
\centering
\caption{Squared Bias}

\centering
\footnotesize
\begin{tabular}{r|r|r|r|r|r|r|l}
\hline
$q_g$ & Benchmark & MM & RC & Average & Non-ZI MM & 1 day & basis function\\
\hline
0.0 & 0.0000903 & 0.003400 & 0.005045 & 0.1445 & 0.1689 & 0.3822 & Bspline\\
0.2 & 0.0000903 & 0.001995 & 0.004874 & 0.1618 & 0.0827 & 0.3818 & Bspline\\
0.4 & 0.0000903 & 0.004308 & 0.004004 & 0.2052 & 0.0951 & 0.3820 & Bspline\\
0.0 & 0.0128407 & 0.009351 & 0.009639 & 0.1378 & 0.1869 & 0.3822 & Fourier\\
0.2 & 0.0128407 & 0.011823 & 0.015466 & 0.1536 & 0.0764 & 0.3818 & Fourier\\
0.4 & 0.0128407 & 0.003327 & 0.008107 & 0.1970 & 0.0844 & 0.3820 & Fourier\\
\hline
\end{tabular}

\end{subtable}

\vspace{1em}

\begin{subtable}{\textwidth}
\centering
\caption{Variance}

\centering
\footnotesize
\begin{tabular}{r|r|r|r|r|r|r|l}
\hline
$q_g$ & Benchmark & MM & RC & Average & Non-ZI MM & 1 day & basis function\\
\hline
0.0 & 0.001020 & 0.062859 & 0.050002 & 0.0269 & 0.3297 & 0.0085 & Bspline\\
0.2 & 0.001020 & 0.072695 & 0.049978 & 0.0256 & 0.1898 & 0.0082 & Bspline\\
0.4 & 0.001020 & 0.096320 & 0.049105 & 0.0233 & 0.0789 & 0.0081 & Bspline\\
0.0 & 8.911066 & 0.808248 & 0.760005 & 0.1662 & 1.5090 & 0.0085 & Fourier\\
0.2 & 8.911066 & 0.765011 & 0.723594 & 0.1547 & 0.9039 & 0.0082 & Fourier\\
0.4 & 8.911066 & 0.727050 & 0.678187 & 0.1687 & 0.5337 & 0.0081 & Fourier\\
\hline
\end{tabular}

\end{subtable}

\end{table}

\begin{table}[htbp]
\captionsetup{justification=raggedright}
\caption{Effect of sample size. {\normalsize{Common settings:} 
Bernoulli distribution (with logit link); 
{estimation method of $\theta(t)$:} point-wise; 
{proportion of zero values:} $\mathbb{E}\{\Pr(W_{ij}(t)=0)\}=0.335$; 
{Covariance function of $U(t)$:} squared-exponential with $\rho_u=0.2$; 
$\sigma_u = 1$; 
$q_g=0$. 
{Benchmark} = benchmark method; 
{Average} = naïve average method; 
{MM} = mixed model based method proposed in Section~\ref{subsec:predict X}; 
{RC} = regression calibration method proposed in Section~\ref{subsec:predict X}; 
{Non-ZI MM} = non-zero-inflated mixed model based method; 
{1 day} = naive one-day method.}}
\centering

\begin{subtable}{\textwidth}
\centering
\caption{Squared Bias}

\centering
\footnotesize
\begin{tabular}{r|r|r|r|r|r|r|l}
\hline
N & Benchmark & MM & RC & Average & Non-ZI MM & 1 day & basis function\\
\hline
50 & 0.1230072 & 0.074700 & 0.064624 & 0.0758 & 0.6532 & 0.3373 & Bspline\\
100 & 0.0150326 & 0.049570 & 0.033648 & 0.1008 & 0.2145 & 0.3878 & Bspline\\
200 & 0.0119906 & 0.012999 & 0.009027 & 0.1610 & 0.1734 & 0.3838 & Bspline\\
500 & 0.0117602 & 0.007565 & 0.006724 & 0.1439 & 0.0901 & 0.3855 & Bspline\\
1000 & 0.0061865 & 0.005625 & 0.009590 & 0.1508 & 0.1286 & 0.3905 & Bspline\\
50 & 8510.7222608 & 276.106580 & 487.760915 & 0.3125 & 171.2551 & 0.3373 & Fourier\\
100 & 107.4245267 & 0.546630 & 0.575267 & 0.1014 & 0.5080 & 0.3878 & Fourier\\
200 & 21.4239549 & 0.148504 & 0.218869 & 0.1558 & 0.2650 & 0.3838 & Fourier\\
500 & 13.6864192 & 0.023105 & 0.051624 & 0.1375 & 0.1093 & 0.3855 & Fourier\\
1000 & 19.6685291 & 0.059636 & 0.053697 & 0.1404 & 0.1304 & 0.3905 & Fourier\\
\hline
\end{tabular}

\end{subtable}

\vspace{1em}

\begin{subtable}{\textwidth}
\centering
\caption{Variance}

\centering
\footnotesize
\begin{tabular}{r|r|r|r|r|r|r|l}
\hline
N & Benchmark & MM & RC & Average & Non-ZI MM & 1 day & basis function\\
\hline
50 & 4.988905e+01 & 4.308465e+01 & 4.164993e+01 & 6.6078 & 226.5763 & 1.0837 & Bspline\\
100 & 1.430998e+01 & 1.236682e+01 & 1.166427e+01 & 1.9058 & 20.1516 & 0.3184 & Bspline\\
200 & 5.874116e+00 & 5.087259e+00 & 5.147803e+00 & 0.8319 & 6.0889 & 0.1413 & Bspline\\
500 & 2.413604e+00 & 1.849339e+00 & 1.890382e+00 & 0.2662 & 1.7644 & 0.0527 & Bspline\\
1000 & 1.060543e+00 & 9.195330e-01 & 9.573540e-01 & 0.1297 & 0.8577 & 0.0248 & Bspline\\
50 & 4.284808e+06 & 1.719690e+05 & 4.035298e+05 & 73.6180 & 86927.1378 & 1.0837 & Fourier\\
100 & 1.339427e+05 & 1.957815e+02 & 1.937196e+02 & 13.2699 & 120.6445 & 0.3184 & Fourier\\
200 & 5.053468e+04 & 7.489478e+01 & 8.055835e+01 & 5.1519 & 34.0757 & 0.1413 & Fourier\\
500 & 1.651193e+04 & 2.220087e+01 & 2.520010e+01 & 1.6742 & 10.6785 & 0.0527 & Fourier\\
1000 & 7.949001e+03 & 1.067807e+01 & 1.254839e+01 & 0.8060 & 4.9957 & 0.0248 & Fourier\\
\hline
\end{tabular}

\end{subtable}

\end{table}

\begin{table}[htbp]
\captionsetup{justification=raggedright}
\caption{Effect of deviation of U. {\normalsize{Common settings:} 
Bernoulli distribution (with logit link); 
{estimation method of $\theta(t)$:} point-wise; 
{sample size:} $n = 100$; 
{proportion of zero values:} $\mathbb{E}\{\Pr(W_{ij}(t)=0)\}=0.335$; 
{Covariance function of $U(t)$:} squared-exponential with $\rho_u=0.2$; 
$q_g=0$. 
{Benchmark} = benchmark method; 
{Average} = naïve average method; 
{MM} = mixed model based method proposed in Section~\ref{subsec:predict X}; 
{RC} = regression calibration method proposed in Section~\ref{subsec:predict X}; 
{Non-ZI MM} = non-zero-inflated mixed model based method; 
{1 day} = naive one-day method.}}
\centering

\begin{subtable}{\textwidth}
\centering
\caption{Squared Bias}

\centering
\footnotesize
\begin{tabular}{r|r|r|r|r|r|r|l}
\hline
$\sigma_u$ & Benchmark & MM & RC & Average & Non-ZI MM & 1 day & basis function\\
\hline
1 & 0.0150326 & 0.049570 & 0.033648 & 0.1008 & 0.2145 & 0.3878 & Bspline\\
2 & 0.0150326 & 0.111989 & 0.051939 & 0.1479 & 0.1762 & 0.4200 & Bspline\\
3 & 0.0150326 & 0.142549 & 0.080534 & 0.2269 & 569.5784 & 0.4411 & Bspline\\
1 & 107.4245267 & 0.546630 & 0.575267 & 0.1014 & 0.5080 & 0.3878 & Fourier\\
2 & 107.4245267 & 0.384125 & 0.257744 & 0.1443 & 0.8877 & 0.4200 & Fourier\\
3 & 107.4245267 & 0.922071 & 0.102786 & 0.2223 & 14.4792 & 0.4411 & Fourier\\
\hline
\end{tabular}

\end{subtable}

\vspace{1em}

\begin{subtable}{\textwidth}
\centering
\caption{Variance}

\centering
\footnotesize
\begin{tabular}{r|r|r|r|r|r|r|l}
\hline
$\sigma_u$ & Benchmark & MM & RC & Average & Non-ZI MM & 1 day & basis function\\
\hline
1 & 14.30998 & 12.36682 & 11.664274 & 1.9058 & 20.1516 & 0.3184 & Bspline\\
2 & 14.30998 & 15.11924 & 8.756273 & 1.5582 & 51.9469 & 0.2542 & Bspline\\
3 & 14.30998 & 48.61541 & 6.456680 & 1.2733 & 289780.6263 & 0.1953 & Bspline\\
1 & 133942.68329 & 195.78154 & 193.719591 & 13.2699 & 120.6445 & 0.3184 & Fourier\\
2 & 133942.68329 & 150.64076 & 87.686845 & 10.9018 & 253.1402 & 0.2542 & Fourier\\
3 & 133942.68329 & 491.90707 & 56.412395 & 8.6190 & 17416.4713 & 0.1953 & Fourier\\
\hline
\end{tabular}

\end{subtable}

\end{table}

\begin{table}[htbp]
\captionsetup{justification=raggedright}
\caption{Effect of correlation of u. {\normalsize{Common settings:} 
Bernoulli distribution (with logit link); 
{estimation method of $\theta(t)$:} point-wise; 
{sample size:} $n = 100$; 
{proportion of zero values:} $\mathbb{E}\{\Pr(W_{ij}(t)=0)\}=0.335$; 
$\sigma_u = 1$; 
$q_g=0$. 
{Benchmark} = benchmark method; 
{Average} = naïve average method; 
{MM} = mixed model based method proposed in Section~\ref{subsec:predict X}; 
{RC} = regression calibration method proposed in Section~\ref{subsec:predict X}; 
{Non-ZI MM} = non-zero-inflated mixed model based method; 
{1 day} = naive one-day method.}}
\centering

\begin{subtable}{\textwidth}
\centering
\caption{Squared Bias}

\centering
\footnotesize
\begin{tabular}{l|r|r|r|r|r|r|r|l}
\hline
correlation function $U$ & $\rho_u$ & Benchmark & MM & RC & Average & Non-ZI MM & 1 day & basis function\\
\hline
AR1 & 0.20 & 0.0150326 & 0.047444 & 0.040433 & 0.0961 & 0.3036 & 0.3709 & Bspline\\
AR1 & 0.40 & 0.0150326 & 0.052652 & 0.042118 & 0.0950 & 0.3248 & 0.3748 & Bspline\\
AR1 & 0.60 & 0.0150326 & 0.040998 & 0.030933 & 0.0969 & 0.3013 & 0.3794 & Bspline\\
Compound symmetry & 0.20 & 0.0150326 & 0.061152 & 0.057405 & 0.0929 & 0.3586 & 0.3701 & Bspline\\
Compound symmetry & 0.40 & 0.0150326 & 0.052440 & 0.043674 & 0.0946 & 0.4973 & 0.3711 & Bspline\\
Compound symmetry & 0.60 & 0.0150326 & 0.056921 & 0.052824 & 0.0941 & 0.3707 & 0.3706 & Bspline\\
Squared Exponential & 0.10 & 0.0150326 & 0.022937 & 0.017656 & 0.1019 & 0.2524 & 0.3819 & Bspline\\
Squared Exponential & 0.15 & 0.0150326 & 0.025354 & 0.020654 & 0.1041 & 0.2693 & 0.3866 & Bspline\\
Squared Exponential & 0.20 & 0.0150326 & 0.049570 & 0.033648 & 0.1008 & 0.2145 & 0.3878 & Bspline\\
AR1 & 0.20 & 107.4245267 & 0.274307 & 0.250685 & 0.0974 & 0.5523 & 0.3709 & Fourier\\
AR1 & 0.40 & 107.4245267 & 0.150216 & 0.119055 & 0.1061 & 0.5633 & 0.3748 & Fourier\\
AR1 & 0.60 & 107.4245267 & 0.327863 & 0.227200 & 0.1079 & 0.5313 & 0.3794 & Fourier\\
Compound symmetry & 0.20 & 107.4245267 & 0.210395 & 0.172365 & 0.0935 & 0.6425 & 0.3701 & Fourier\\
Compound symmetry & 0.40 & 107.4245267 & 0.165175 & 0.182104 & 0.0905 & 0.6127 & 0.3711 & Fourier\\
Compound symmetry & 0.60 & 107.4245267 & 0.147165 & 0.116096 & 0.0916 & 0.6691 & 0.3706 & Fourier\\
Squared Exponential & 0.10 & 107.4245267 & 0.379175 & 0.379208 & 0.1065 & 0.5471 & 0.3819 & Fourier\\
Squared Exponential & 0.15 & 107.4245267 & 0.261578 & 0.389972 & 0.1134 & 0.5281 & 0.3866 & Fourier\\
Squared Exponential & 0.20 & 107.4245267 & 0.546630 & 0.575267 & 0.1014 & 0.5080 & 0.3878 & Fourier\\
\hline
\end{tabular}

\end{subtable}

\vspace{1em}

\begin{subtable}{\textwidth}
\centering
\caption{Variance}

\centering
\footnotesize
\begin{tabular}{l|r|r|r|r|r|r|r|l}
\hline
correlation function $U$ & $\rho_u$ & Benchmark & MM & RC & Average & Non-ZI MM & 1 day & basis function\\
\hline
AR1 & 0.20 & 14.30998 & 9.189825 & 8.914817 & 1.9079 & 19.1846 & 0.3298 & Bspline\\
AR1 & 0.40 & 14.30998 & 9.564324 & 9.233625 & 1.8525 & 18.1525 & 0.3209 & Bspline\\
AR1 & 0.60 & 14.30998 & 9.923213 & 9.475688 & 1.8980 & 18.8789 & 0.3144 & Bspline\\
Compound symmetry & 0.20 & 14.30998 & 10.899403 & 10.620337 & 1.9477 & 18.3008 & 0.3291 & Bspline\\
Compound symmetry & 0.40 & 14.30998 & 10.895815 & 10.663124 & 1.8911 & 38.9483 & 0.3287 & Bspline\\
Compound symmetry & 0.60 & 14.30998 & 11.548294 & 11.015541 & 1.9159 & 18.2469 & 0.3316 & Bspline\\
Squared Exponential & 0.10 & 14.30998 & 8.560971 & 8.125550 & 1.8312 & 18.2983 & 0.3055 & Bspline\\
Squared Exponential & 0.15 & 14.30998 & 11.100982 & 10.645155 & 1.8191 & 18.4807 & 0.3106 & Bspline\\
Squared Exponential & 0.20 & 14.30998 & 12.366820 & 11.664274 & 1.9058 & 20.1516 & 0.3184 & Bspline\\
AR1 & 0.20 & 133942.68329 & 60.189778 & 58.157740 & 11.0185 & 94.0496 & 0.3298 & Fourier\\
AR1 & 0.40 & 133942.68329 & 62.434690 & 59.305086 & 10.8672 & 101.7753 & 0.3209 & Fourier\\
AR1 & 0.60 & 133942.68329 & 73.484362 & 70.415540 & 11.7778 & 119.6064 & 0.3144 & Fourier\\
Compound symmetry & 0.20 & 133942.68329 & 78.125086 & 75.306457 & 11.5590 & 95.3529 & 0.3291 & Fourier\\
Compound symmetry & 0.40 & 133942.68329 & 87.018630 & 84.761023 & 11.8844 & 103.7438 & 0.3287 & Fourier\\
Compound symmetry & 0.60 & 133942.68329 & 104.651268 & 102.823516 & 12.6293 & 108.1200 & 0.3316 & Fourier\\
Squared Exponential & 0.10 & 133942.68329 & 119.483667 & 120.043084 & 12.1252 & 114.2930 & 0.3055 & Fourier\\
Squared Exponential & 0.15 & 133942.68329 & 175.507967 & 174.291604 & 13.0771 & 115.7622 & 0.3106 & Fourier\\
Squared Exponential & 0.20 & 133942.68329 & 195.781545 & 193.719591 & 13.2699 & 120.6445 & 0.3184 & Fourier\\
\hline
\end{tabular}

\end{subtable}

\end{table}

\begin{table}[htbp]
\captionsetup{justification=raggedright}
\caption{Effect of proportion of zero values in W. {\normalsize{Common settings:} 
Bernoulli distribution (with logit link); 
{estimation method of $\theta(t)$:} point-wise; 
{sample size:} $n = 100$; 
{Covariance function of $U(t)$:} squared-exponential with $\rho_u=0.2$; 
$\sigma_u = 1$; 
$q_g=0$. 
{Benchmark} = benchmark method; 
{Average} = naïve average method; 
{MM} = mixed model based method proposed in Section~\ref{subsec:predict X}; 
{RC} = regression calibration method proposed in Section~\ref{subsec:predict X}; 
{Non-ZI MM} = non-zero-inflated mixed model based method; 
{1 day} = naive one-day method.}}
\centering

\begin{subtable}{\textwidth}
\centering
\caption{Squared Bias}

\centering
\footnotesize
\begin{tabular}{r|r|r|r|r|r|r|l}
\hline
E(Pr{W=0}) & Benchmark & MM & RC & Average & Non-ZI MM & 1 day & basis function\\
\hline
0.255 & 0.0150326 & 0.050779 & 0.039638 & 0.0645 & 0.1668 & 0.3503 & Bspline\\
0.294 & 0.0150326 & 0.040168 & 0.028547 & 0.0772 & 0.1833 & 0.3699 & Bspline\\
0.335 & 0.0150326 & 0.049570 & 0.033648 & 0.1008 & 0.2145 & 0.3878 & Bspline\\
0.403 & 0.0150326 & 0.060177 & 0.036469 & 0.1329 & 0.2759 & 0.4077 & Bspline\\
0.255 & 107.4245267 & 1.280489 & 0.903613 & 0.0807 & 0.3221 & 0.3503 & Fourier\\
0.294 & 107.4245267 & 0.442614 & 0.523684 & 0.0904 & 0.4478 & 0.3699 & Fourier\\
0.335 & 107.4245267 & 0.546630 & 0.575267 & 0.1014 & 0.5080 & 0.3878 & Fourier\\
0.403 & 107.4245267 & 0.230456 & 0.635797 & 0.1276 & 1.0988 & 0.4077 & Fourier\\
\hline
\end{tabular}

\end{subtable}

\vspace{1em}

\begin{subtable}{\textwidth}
\centering
\caption{Variance}

\centering
\footnotesize
\begin{tabular}{r|r|r|r|r|r|r|l}
\hline
E(Pr{W=0}) & Benchmark & MM & RC & Average & Non-ZI MM & 1 day & basis function\\
\hline
0.255 & 14.30998 & 13.33442 & 12.76470 & 2.5310 & 13.1404 & 0.4871 & Bspline\\
0.294 & 14.30998 & 12.83989 & 12.33306 & 2.2693 & 16.4019 & 0.3980 & Bspline\\
0.335 & 14.30998 & 12.36682 & 11.66427 & 1.9058 & 20.1516 & 0.3184 & Bspline\\
0.403 & 14.30998 & 10.72348 & 10.80394 & 1.5109 & 53.4614 & 0.2436 & Bspline\\
0.255 & 133942.68329 & 279.38631 & 261.94404 & 19.6025 & 95.7604 & 0.4871 & Fourier\\
0.294 & 133942.68329 & 238.04848 & 223.39025 & 15.8953 & 105.0335 & 0.3980 & Fourier\\
0.335 & 133942.68329 & 195.78154 & 193.71959 & 13.2699 & 120.6445 & 0.3184 & Fourier\\
0.403 & 133942.68329 & 132.47330 & 156.03496 & 9.8360 & 191.8466 & 0.2436 & Fourier\\
\hline
\end{tabular}

\end{subtable}

\end{table}

\begin{table}[htbp]
\captionsetup{justification=raggedright}
\caption{Effect of $q_g$ for $G(t)$ in Eq~\eqref{eq:sim G}. {\normalsize{Common settings:} 
Bernoulli distribution (with logit link); 
{estimation method of $\theta(t)$:} point-wise; 
{sample size:} $n = 100$; 
{proportion of zero values:} $\mathbb{E}\{\Pr(W_{ij}(t)=0)\}=0.335$; 
{Covariance function of $U(t)$:} squared-exponential with $\rho_u=0.2$; 
$\sigma_u = 1$; 
{Benchmark} = benchmark method; 
{Average} = naïve average method; 
{MM} = mixed model based method proposed in Section~\ref{subsec:predict X}; 
{RC} = regression calibration method proposed in Section~\ref{subsec:predict X}; 
{Non-ZI MM} = non-zero-inflated mixed model based method; 
{1 day} = naive one-day method.}}
\centering

\begin{subtable}{\textwidth}
\centering
\caption{Squared Bias}

\centering
\footnotesize
\begin{tabular}{r|r|r|r|r|r|r|l}
\hline
$q_g$ & Benchmark & MM & RC & Average & Non-ZI MM & 1 day & basis function\\
\hline
0.0 & 0.0150326 & 0.049570 & 0.033648 & 0.1008 & 0.2145 & 0.3878 & Bspline\\
0.2 & 0.0150326 & 0.044423 & 0.033209 & 0.1330 & 0.1476 & 0.3887 & Bspline\\
0.4 & 0.0150326 & 0.022480 & 0.031240 & 0.1829 & 0.1060 & 0.3876 & Bspline\\
0.0 & 107.4245267 & 0.546630 & 0.575267 & 0.1014 & 0.5080 & 0.3878 & Fourier\\
0.2 & 107.4245267 & 0.344596 & 0.517744 & 0.1545 & 0.4463 & 0.3887 & Fourier\\
0.4 & 107.4245267 & 0.215212 & 0.468406 & 0.1729 & 0.1486 & 0.3876 & Fourier\\
\hline
\end{tabular}

\end{subtable}

\vspace{1em}

\begin{subtable}{\textwidth}
\centering
\caption{Variance}

\centering
\footnotesize
\begin{tabular}{r|r|r|r|r|r|r|l}
\hline
$q_g$ & Benchmark & MM & RC & Average & Non-ZI MM & 1 day & basis function\\
\hline
0.0 & 14.30998 & 12.366820 & 11.66427 & 1.9058 & 20.1516 & 0.3184 & Bspline\\
0.2 & 14.30998 & 11.833070 & 11.60735 & 1.6796 & 10.8239 & 0.3191 & Bspline\\
0.4 & 14.30998 & 8.546491 & 11.00679 & 1.2983 & 4.2749 & 0.3277 & Bspline\\
0.0 & 133942.68329 & 195.781545 & 193.71959 & 13.2699 & 120.6445 & 0.3184 & Fourier\\
0.2 & 133942.68329 & 182.521904 & 189.75403 & 13.0345 & 76.2574 & 0.3191 & Fourier\\
0.4 & 133942.68329 & 117.085774 & 163.64099 & 10.4905 & 33.8754 & 0.3277 & Fourier\\
\hline
\end{tabular}

\end{subtable}

\end{table}

\begin{table}[htbp]
\captionsetup{justification=raggedright}
\caption{Effect of sample size. {\normalsize{Common settings:} 
Bernoulli distribution (with logit link); 
smoothed point-wise; 
{proportion of zero values:} $\mathbb{E}\{\Pr(W_{ij}(t)=0)\}=0.335$; 
{Covariance function of $U(t)$:} squared-exponential with $\rho_u=0.2$; 
$\sigma_u = 1$; 
$q_g=0$. 
{Benchmark} = benchmark method; 
{Average} = naïve average method; 
{MM} = mixed model based method proposed in Section~\ref{subsec:predict X}; 
{RC} = regression calibration method proposed in Section~\ref{subsec:predict X}; 
{Non-ZI MM} = non-zero-inflated mixed model based method; 
{1 day} = naive one-day method.}}
\centering

\begin{subtable}{\textwidth}
\centering
\caption{Squared Bias}

\centering
\footnotesize
\begin{tabular}{r|r|r|r|r|r|r|l}
\hline
N & Benchmark & MM & RC & Average & Non-ZI MM & 1 day & basis function\\
\hline
50 & 0.1230072 & 0.073128 & 0.063324 & 0.0758 & 0.6532 & 0.3373 & Bspline\\
100 & 0.0150326 & 0.050386 & 0.035714 & 0.1008 & 0.2145 & 0.3878 & Bspline\\
200 & 0.0119906 & 0.012468 & 0.008708 & 0.1610 & 0.1734 & 0.3838 & Bspline\\
500 & 0.0117602 & 0.007298 & 0.006896 & 0.1439 & 0.0901 & 0.3855 & Bspline\\
1000 & 0.0061865 & 0.005265 & 0.009536 & 0.1508 & 0.1286 & 0.3905 & Bspline\\
50 & 8510.7222608 & 11307.977149 & 3.445035 & 0.3125 & 171.2551 & 0.3373 & Fourier\\
100 & 107.4245267 & 0.508382 & 0.433005 & 0.1014 & 0.5080 & 0.3878 & Fourier\\
200 & 21.4239549 & 0.151802 & 0.197691 & 0.1558 & 0.2650 & 0.3838 & Fourier\\
500 & 13.6864192 & 0.021474 & 0.053488 & 0.1375 & 0.1093 & 0.3855 & Fourier\\
1000 & 19.6685291 & 0.057265 & 0.051438 & 0.1404 & 0.1304 & 0.3905 & Fourier\\
\hline
\end{tabular}

\end{subtable}

\vspace{1em}

\begin{subtable}{\textwidth}
\centering
\caption{Variance}

\centering
\footnotesize
\begin{tabular}{r|r|r|r|r|r|r|l}
\hline
N & Benchmark & MM & RC & Average & Non-ZI MM & 1 day & basis function\\
\hline
50 & 4.988905e+01 & 4.299165e+01 & 41.845801 & 6.6078 & 226.5763 & 1.0837 & Bspline\\
100 & 1.430998e+01 & 1.230884e+01 & 11.617732 & 1.9058 & 20.1516 & 0.3184 & Bspline\\
200 & 5.874116e+00 & 5.085867e+00 & 5.139273 & 0.8319 & 6.0889 & 0.1413 & Bspline\\
500 & 2.413604e+00 & 1.836943e+00 & 1.880630 & 0.2662 & 1.7644 & 0.0527 & Bspline\\
1000 & 1.060543e+00 & 9.194920e-01 & 0.956270 & 0.1297 & 0.8577 & 0.0248 & Bspline\\
50 & 4.284808e+06 & 5.636157e+06 & 1428.345493 & 73.6180 & 86927.1378 & 1.0837 & Fourier\\
100 & 1.339427e+05 & 1.990476e+02 & 195.187175 & 13.2699 & 120.6445 & 0.3184 & Fourier\\
200 & 5.053468e+04 & 7.541321e+01 & 80.556140 & 5.1519 & 34.0757 & 0.1413 & Fourier\\
500 & 1.651193e+04 & 2.225222e+01 & 25.129643 & 1.6742 & 10.6785 & 0.0527 & Fourier\\
1000 & 7.949001e+03 & 1.057006e+01 & 12.464185 & 0.8060 & 4.9957 & 0.0248 & Fourier\\
\hline
\end{tabular}

\end{subtable}

\end{table}

\begin{table}[htbp]
\captionsetup{justification=raggedright}
\caption{Effect of deviation of U. {\normalsize{Common settings:} 
Bernoulli distribution (with logit link); 
smoothed point-wise; 
{sample size:} $n = 100$; 
{proportion of zero values:} $\mathbb{E}\{\Pr(W_{ij}(t)=0)\}=0.335$; 
{Covariance function of $U(t)$:} squared-exponential with $\rho_u=0.2$; 
$q_g=0$. 
{Benchmark} = benchmark method; 
{Average} = naïve average method; 
{MM} = mixed model based method proposed in Section~\ref{subsec:predict X}; 
{RC} = regression calibration method proposed in Section~\ref{subsec:predict X}; 
{Non-ZI MM} = non-zero-inflated mixed model based method; 
{1 day} = naive one-day method.}}
\centering

\begin{subtable}{\textwidth}
\centering
\caption{Squared Bias}

\centering
\footnotesize
\begin{tabular}{r|r|r|r|r|r|r|l}
\hline
$\sigma_u$ & Benchmark & MM & RC & Average & Non-ZI MM & 1 day & basis function\\
\hline
1 & 0.0150326 & 0.050386 & 0.035714 & 0.1008 & 0.2145 & 0.3878 & Bspline\\
2 & 0.0150326 & 0.114038 & 0.052613 & 0.1479 & 0.1762 & 0.4200 & Bspline\\
3 & 0.0150326 & 0.149900 & 0.081176 & 0.2269 & 569.5784 & 0.4411 & Bspline\\
1 & 107.4245267 & 0.508382 & 0.433005 & 0.1014 & 0.5080 & 0.3878 & Fourier\\
2 & 107.4245267 & 0.384697 & 0.239458 & 0.1443 & 0.8877 & 0.4200 & Fourier\\
3 & 107.4245267 & 0.953764 & 0.102823 & 0.2223 & 14.4792 & 0.4411 & Fourier\\
\hline
\end{tabular}

\end{subtable}

\vspace{1em}

\begin{subtable}{\textwidth}
\centering
\caption{Variance}

\centering
\footnotesize
\begin{tabular}{r|r|r|r|r|r|r|l}
\hline
$\sigma_u$ & Benchmark & MM & RC & Average & Non-ZI MM & 1 day & basis function\\
\hline
1 & 14.30998 & 12.30884 & 11.617732 & 1.9058 & 20.1516 & 0.3184 & Bspline\\
2 & 14.30998 & 15.06918 & 8.722044 & 1.5582 & 51.9469 & 0.2542 & Bspline\\
3 & 14.30998 & 45.31736 & 6.450979 & 1.2733 & 289780.6263 & 0.1953 & Bspline\\
1 & 133942.68329 & 199.04763 & 195.187175 & 13.2699 & 120.6445 & 0.3184 & Fourier\\
2 & 133942.68329 & 152.37204 & 87.743007 & 10.9018 & 253.1402 & 0.2542 & Fourier\\
3 & 133942.68329 & 515.57976 & 56.601526 & 8.6190 & 17416.4713 & 0.1953 & Fourier\\
\hline
\end{tabular}

\end{subtable}

\end{table}

\begin{table}[htbp]
\captionsetup{justification=raggedright}
\caption{Effect of correlation of u. {\normalsize{Common settings:} 
Bernoulli distribution (with logit link); 
smoothed point-wise; 
{sample size:} $n = 100$; 
{proportion of zero values:} $\mathbb{E}\{\Pr(W_{ij}(t)=0)\}=0.335$; 
$\sigma_u = 1$; 
$q_g=0$. 
{Benchmark} = benchmark method; 
{Average} = naïve average method; 
{MM} = mixed model based method proposed in Section~\ref{subsec:predict X}; 
{RC} = regression calibration method proposed in Section~\ref{subsec:predict X}; 
{Non-ZI MM} = non-zero-inflated mixed model based method; 
{1 day} = naive one-day method.}}
\centering

\begin{subtable}{\textwidth}
\centering
\caption{Squared Bias}

\centering
\footnotesize
\begin{tabular}{l|r|r|r|r|r|r|r|l}
\hline
correlation function $U$ & $\rho_u$ & Benchmark & MM & RC & Average & Non-ZI MM & 1 day & basis function\\
\hline
AR1 & 0.20 & 0.0150326 & 0.047030 & 0.039978 & 0.0961 & 0.3036 & 0.3709 & Bspline\\
AR1 & 0.40 & 0.0150326 & 0.053365 & 0.043068 & 0.0950 & 0.3248 & 0.3748 & Bspline\\
AR1 & 0.60 & 0.0150326 & 0.041493 & 0.031321 & 0.0969 & 0.3013 & 0.3794 & Bspline\\
Compound symmetry & 0.20 & 0.0150326 & 0.061318 & 0.058211 & 0.0929 & 0.3586 & 0.3701 & Bspline\\
Compound symmetry & 0.40 & 0.0150326 & 0.052949 & 0.044449 & 0.0946 & 0.4973 & 0.3711 & Bspline\\
Compound symmetry & 0.60 & 0.0150326 & 0.056266 & 0.052737 & 0.0941 & 0.3707 & 0.3706 & Bspline\\
Squared Exponential & 0.10 & 0.0150326 & 0.022744 & 0.017715 & 0.1019 & 0.2524 & 0.3819 & Bspline\\
Squared Exponential & 0.15 & 0.0150326 & 0.025846 & 0.019917 & 0.1041 & 0.2693 & 0.3866 & Bspline\\
Squared Exponential & 0.20 & 0.0150326 & 0.050386 & 0.035714 & 0.1008 & 0.2145 & 0.3878 & Bspline\\
AR1 & 0.20 & 107.4245267 & 0.283537 & 0.259335 & 0.0974 & 0.5523 & 0.3709 & Fourier\\
AR1 & 0.40 & 107.4245267 & 0.153780 & 0.121301 & 0.1061 & 0.5633 & 0.3748 & Fourier\\
AR1 & 0.60 & 107.4245267 & 0.348051 & 0.231356 & 0.1079 & 0.5313 & 0.3794 & Fourier\\
Compound symmetry & 0.20 & 107.4245267 & 0.222522 & 0.181032 & 0.0935 & 0.6425 & 0.3701 & Fourier\\
Compound symmetry & 0.40 & 107.4245267 & 0.162984 & 0.185270 & 0.0905 & 0.6127 & 0.3711 & Fourier\\
Compound symmetry & 0.60 & 107.4245267 & 0.144948 & 0.115386 & 0.0916 & 0.6691 & 0.3706 & Fourier\\
Squared Exponential & 0.10 & 107.4245267 & 0.389426 & 0.413785 & 0.1065 & 0.5471 & 0.3819 & Fourier\\
Squared Exponential & 0.15 & 107.4245267 & 0.240888 & 0.304061 & 0.1134 & 0.5281 & 0.3866 & Fourier\\
Squared Exponential & 0.20 & 107.4245267 & 0.508382 & 0.433005 & 0.1014 & 0.5080 & 0.3878 & Fourier\\
\hline
\end{tabular}

\end{subtable}

\vspace{1em}

\begin{subtable}{\textwidth}
\centering
\caption{Variance}

\centering
\footnotesize
\begin{tabular}{l|r|r|r|r|r|r|r|l}
\hline
correlation function $U$ & $\rho_u$ & Benchmark & MM & RC & Average & Non-ZI MM & 1 day & basis function\\
\hline
AR1 & 0.20 & 14.30998 & 9.189542 & 8.901574 & 1.9079 & 19.1846 & 0.3298 & Bspline\\
AR1 & 0.40 & 14.30998 & 9.560459 & 9.215958 & 1.8525 & 18.1525 & 0.3209 & Bspline\\
AR1 & 0.60 & 14.30998 & 9.900726 & 9.480119 & 1.8980 & 18.8789 & 0.3144 & Bspline\\
Compound symmetry & 0.20 & 14.30998 & 10.840769 & 10.624313 & 1.9477 & 18.3008 & 0.3291 & Bspline\\
Compound symmetry & 0.40 & 14.30998 & 10.940169 & 10.729000 & 1.8911 & 38.9483 & 0.3287 & Bspline\\
Compound symmetry & 0.60 & 14.30998 & 11.512976 & 11.052571 & 1.9159 & 18.2469 & 0.3316 & Bspline\\
Squared Exponential & 0.10 & 14.30998 & 8.566455 & 8.126454 & 1.8312 & 18.2983 & 0.3055 & Bspline\\
Squared Exponential & 0.15 & 14.30998 & 11.061876 & 10.623938 & 1.8191 & 18.4807 & 0.3106 & Bspline\\
Squared Exponential & 0.20 & 14.30998 & 12.308843 & 11.617732 & 1.9058 & 20.1516 & 0.3184 & Bspline\\
AR1 & 0.20 & 133942.68329 & 60.612574 & 58.359882 & 11.0185 & 94.0496 & 0.3298 & Fourier\\
AR1 & 0.40 & 133942.68329 & 62.163259 & 58.847516 & 10.8672 & 101.7753 & 0.3209 & Fourier\\
AR1 & 0.60 & 133942.68329 & 73.174854 & 70.377144 & 11.7778 & 119.6064 & 0.3144 & Fourier\\
Compound symmetry & 0.20 & 133942.68329 & 78.419842 & 75.447020 & 11.5590 & 95.3529 & 0.3291 & Fourier\\
Compound symmetry & 0.40 & 133942.68329 & 87.418195 & 85.217615 & 11.8844 & 103.7438 & 0.3287 & Fourier\\
Compound symmetry & 0.60 & 133942.68329 & 105.645114 & 103.465558 & 12.6293 & 108.1200 & 0.3316 & Fourier\\
Squared Exponential & 0.10 & 133942.68329 & 120.781684 & 121.525296 & 12.1252 & 114.2930 & 0.3055 & Fourier\\
Squared Exponential & 0.15 & 133942.68329 & 178.018432 & 175.560460 & 13.0771 & 115.7622 & 0.3106 & Fourier\\
Squared Exponential & 0.20 & 133942.68329 & 199.047625 & 195.187175 & 13.2699 & 120.6445 & 0.3184 & Fourier\\
\hline
\end{tabular}

\end{subtable}

\end{table}

\begin{table}[htbp]
\captionsetup{justification=raggedright}
\caption{Effect of proportion of zero values in W. {\normalsize{Common settings:} 
Bernoulli distribution (with logit link); 
smoothed point-wise; 
{sample size:} $n = 100$; 
{Covariance function of $U(t)$:} squared-exponential with $\rho_u=0.2$; 
$\sigma_u = 1$; 
$q_g=0$. 
{Benchmark} = benchmark method; 
{Average} = naïve average method; 
{MM} = mixed model based method proposed in Section~\ref{subsec:predict X}; 
{RC} = regression calibration method proposed in Section~\ref{subsec:predict X}; 
{Non-ZI MM} = non-zero-inflated mixed model based method; 
{1 day} = naive one-day method.}}
\centering

\begin{subtable}{\textwidth}
\centering
\caption{Squared Bias}

\centering
\footnotesize
\begin{tabular}{r|r|r|r|r|r|r|l}
\hline
E(Pr{W=0}) & Benchmark & MM & RC & Average & Non-ZI MM & 1 day & basis function\\
\hline
0.255 & 0.0150326 & 0.051167 & 0.040753 & 0.0645 & 0.1668 & 0.3503 & Bspline\\
0.294 & 0.0150326 & 0.041507 & 0.030263 & 0.0772 & 0.1833 & 0.3699 & Bspline\\
0.335 & 0.0150326 & 0.050386 & 0.035714 & 0.1008 & 0.2145 & 0.3878 & Bspline\\
0.403 & 0.0150326 & 0.060706 & 0.038519 & 0.1329 & 0.2759 & 0.4077 & Bspline\\
0.255 & 107.4245267 & 1.280440 & 0.935615 & 0.0807 & 0.3221 & 0.3503 & Fourier\\
0.294 & 107.4245267 & 0.399134 & 0.396307 & 0.0904 & 0.4478 & 0.3699 & Fourier\\
0.335 & 107.4245267 & 0.508382 & 0.433005 & 0.1014 & 0.5080 & 0.3878 & Fourier\\
0.403 & 107.4245267 & 0.248041 & 0.628551 & 0.1276 & 1.0988 & 0.4077 & Fourier\\
\hline
\end{tabular}

\end{subtable}

\vspace{1em}

\begin{subtable}{\textwidth}
\centering
\caption{Variance}

\centering
\footnotesize
\begin{tabular}{r|r|r|r|r|r|r|l}
\hline
E(Pr{W=0}) & Benchmark & MM & RC & Average & Non-ZI MM & 1 day & basis function\\
\hline
0.255 & 14.30998 & 13.32331 & 12.73832 & 2.5310 & 13.1404 & 0.4871 & Bspline\\
0.294 & 14.30998 & 12.81114 & 12.29866 & 2.2693 & 16.4019 & 0.3980 & Bspline\\
0.335 & 14.30998 & 12.30884 & 11.61773 & 1.9058 & 20.1516 & 0.3184 & Bspline\\
0.403 & 14.30998 & 10.66842 & 10.74905 & 1.5109 & 53.4614 & 0.2436 & Bspline\\
0.255 & 133942.68329 & 283.62098 & 265.61055 & 19.6025 & 95.7604 & 0.4871 & Fourier\\
0.294 & 133942.68329 & 239.78025 & 224.45952 & 15.8953 & 105.0335 & 0.3980 & Fourier\\
0.335 & 133942.68329 & 199.04763 & 195.18717 & 13.2699 & 120.6445 & 0.3184 & Fourier\\
0.403 & 133942.68329 & 134.04625 & 155.68044 & 9.8360 & 191.8466 & 0.2436 & Fourier\\
\hline
\end{tabular}

\end{subtable}

\end{table}

\begin{table}[htbp]
\captionsetup{justification=raggedright}
\caption{Effect of $q_g$ for $G(t)$ in Eq~\eqref{eq:sim G}. {\normalsize{Common settings:} 
Bernoulli distribution (with logit link); 
smoothed point-wise; 
{sample size:} $n = 100$; 
{proportion of zero values:} $\mathbb{E}\{\Pr(W_{ij}(t)=0)\}=0.335$; 
{Covariance function of $U(t)$:} squared-exponential with $\rho_u=0.2$; 
$\sigma_u = 1$; 
{Benchmark} = benchmark method; 
{Average} = naïve average method; 
{MM} = mixed model based method proposed in Section~\ref{subsec:predict X}; 
{RC} = regression calibration method proposed in Section~\ref{subsec:predict X}; 
{Non-ZI MM} = non-zero-inflated mixed model based method; 
{1 day} = naive one-day method.}}
\centering

\begin{subtable}{\textwidth}
\centering
\caption{Squared Bias}

\centering
\footnotesize
\begin{tabular}{r|r|r|r|r|r|r|l}
\hline
$q_g$ & Benchmark & MM & RC & Average & Non-ZI MM & 1 day & basis function\\
\hline
0.0 & 0.0150326 & 0.050386 & 0.035714 & 0.1008 & 0.2145 & 0.3878 & Bspline\\
0.2 & 0.0150326 & 0.045836 & 0.035390 & 0.1330 & 0.1476 & 0.3887 & Bspline\\
0.4 & 0.0150326 & 0.023270 & 0.032864 & 0.1829 & 0.1060 & 0.3876 & Bspline\\
0.0 & 107.4245267 & 0.508382 & 0.433005 & 0.1014 & 0.5080 & 0.3878 & Fourier\\
0.2 & 107.4245267 & 0.359648 & 0.440845 & 0.1545 & 0.4463 & 0.3887 & Fourier\\
0.4 & 107.4245267 & 0.206262 & 0.453843 & 0.1729 & 0.1486 & 0.3876 & Fourier\\
\hline
\end{tabular}

\end{subtable}

\vspace{1em}

\begin{subtable}{\textwidth}
\centering
\caption{Variance}

\centering
\footnotesize
\begin{tabular}{r|r|r|r|r|r|r|l}
\hline
$q_g$ & Benchmark & MM & RC & Average & Non-ZI MM & 1 day & basis function\\
\hline
0.0 & 14.30998 & 12.308843 & 11.61773 & 1.9058 & 20.1516 & 0.3184 & Bspline\\
0.2 & 14.30998 & 11.778162 & 11.59655 & 1.6796 & 10.8239 & 0.3191 & Bspline\\
0.4 & 14.30998 & 8.521128 & 10.96912 & 1.2983 & 4.2749 & 0.3277 & Bspline\\
0.0 & 133942.68329 & 199.047625 & 195.18717 & 13.2699 & 120.6445 & 0.3184 & Fourier\\
0.2 & 133942.68329 & 188.240261 & 191.84586 & 13.0345 & 76.2574 & 0.3191 & Fourier\\
0.4 & 133942.68329 & 119.301798 & 166.51689 & 10.4905 & 33.8754 & 0.3277 & Fourier\\
\hline
\end{tabular}

\end{subtable}

\end{table}

\end{CJK}
\end{document}